\newtheorem{lemma}{Lemma}[section]
\newtheorem{lemmaalt}{Lemma}
\newtheorem{definitionalt}{Definition}
\newtheorem{corollary}{Corollary}[lemma]
\newcommand{\tmax}{t_\mathrm{max}}
\newcommand{\NN}{N}
\newcommand{\Nhil}{\mathcal{N}}
\newcommand{\Abar}{\overline{A}}
\newcommand{\MM}{\mathcal{M}}
\newcommand{\half}{\mathrm{half}}
\newcommand{\ucoe}{u_{\mathrm{COE}}}
\newcommand{\uu}{\mathcal{U}}
\newcommand{\Tr}{\operatorname{Tr}}
\newcommand{\tr}{\operatorname{tr}}
\newcommand{\COE}{\mathrm{COE}}
\newcommand{\be}{\begin{equation}}
\newcommand{\ee}{\end{equation}}
\newcommand{\ba}{\begin{aligned}}
\newcommand{\ea}{\end{aligned}}
\newcommand{\thei}{t_{\mathrm{Hei}} }
\newcommand{\Lth}{L_{\mathrm{Th}} }
\newcommand{\tth}{t_{\mathrm{Th}} }
\newcommand{\TT}{\mathcal{T}}
\newcommand{\gqmbcs}{gMBQCs}
\newcommand{\wg}{\mathrm{Wg}}
\newcommand{\iden}{\mathbb{1}}
\newcommand{{\fermswap}}{FSG}
\newcommand{\Mod}[1]{\ (\mathrm{mod}\ #1)}
\begin{document}

\title{Many-body quantum chaos and time reversal symmetry}

\author{Weijun Wu}
    \affiliation{Department of Chemistry, Princeton University, Princeton, New Jersey, 08544, USA}

\author{Saumya Shivam}
    \affiliation{Department of Physics, Princeton University, Princeton, New Jersey, 08544, USA}

\author{Amos Chan}
    \affiliation{Department of Physics, University of Warwick, Coventry, CV4 7AL, United Kingdom}
    \affiliation{Physics Department, Lancaster University, Lancaster, LA1 4YB, United Kingdom}
\date{\today}

%

\begin{abstract}
We investigate universal signatures of quantum chaos in the presence of time reversal symmetry (TRS) in generic many-body quantum chaotic systems ({\gqmbcs}). 
We study three classes of minimal models of {\gqmbcs} with TRS, realized through random quantum circuits with (i) local TRS, (ii) global TRS, and (iii) TRS combined with discrete time translation symmetry. 
In large local Hilbert space dimension $q$, we derive the emergence of random matrix theory (RMT) universality in the spectral form factor (SFF) at times larger than Thouless time $t_{\mathrm{Th}}$, which diverges with system sizes in {\gqmbcs}.
At times before $t_{\mathrm{Th}}$, we identify universal behaviour beyond RMT by deriving explicit scaling functions of SFF in the thermodynamic limit.
In particular, in the simplest non-trivial setting -- preserving global TRS while breaking time translation symmetry -- we show that the SFF is mapped to the partition function of an emergent classical ferromagnetic Ising model, where the Ising spins correspond to the time-parallel and time-reversed pairings of Feynman paths, and external magnetic fields are induced by TRS-breaking mechanisms.  
Without relying on the large-$q$ limit, we develop a second independent derivation of the Ising scaling behaviour of SFF using space-time duality and parity symmetric non-Hermitian Ginibre ensembles.
Moreover, we show that many-body effects originating from time-reversed pairings of Feynman paths manifest in the two-point autocorrelation function (2PAF), the out-of-time-ordered correlator (OTOC), and the partial spectral form factor -- quantities sensitive to both eigenvalue and eigenstate correlations.
Unlike the case of SFF, we find that 2PAF generically and asymetrically favours time-reversed over time-parallel pairings in the presence of TRS, resulting in characteristic negative and suppressed values of 2PAF for antisymmetric operators, compared to their symmetric counterparts.
By summing 2PAF over a complete operator basis, we obtain a third complementary approach to derive the universal Ising scaling behaviour of SFF in {\gqmbcs} with global TRS.
Additionally, we establish that the fluctuations of 2PAF are governed by an emergent three-state Potts model, leading to an exponential scaling with the operator support size, at a rate set by the three-state Potts model, a distinctive signature of many-body quantum chaos with TRS.
Lastly, we show that the leading order behaviour of the OTOC between two operators is insensitive to the presence of TRS when the operators are spatially separated, but becomes sensitive to TRS when their supports overlap at the initial time.
We demonstrate the general applicability of the results by numerically simulating three quantum circuit models and a time-independent Hamiltonian model.
\end{abstract}

\maketitle

\tableofcontents

\section{Introduction}

Understanding the chaotic dynamics of generic many-body quantum chaotic systems ({\gqmbcs}) is a notoriously difficult but fundamental problem with practical implications in exploring new computational paradigms~\cite{feynman1982simulating, Boixo_2018, arute2019quantum, Preskill2018quantumcomputingin}.
A vast class of interacting many-body systems is believed to exhibit quantum chaotic behaviour, as encapsulated by the eigenstate thermalization hypothesis (ETH)~\cite{deutsch1991quantum, Srednicki, Rigol2008}, which provides a framework for understanding how closed quantum systems reach thermal equilibrium.
A key diagnostic of quantum chaos is the emergence of random matrix theory (RMT) behaviour, as stated in the \textit{quantum chaos conjecture}  \cite{bohigas1984characterization}: 
A quantum system is considered chaotic if its spectral statistics, at sufficiently small energy scales, resemble those predicted by RMT.
This connection is crucial in theoretical physics, as RMT provides a framework that abstracts away microscopic details, capturing the universal properties of chaotic systems based solely on their symmetries \cite{Mehta, Haake}.
However, RMT fails to capture the structure of local interactions of many-body quantum systems, which gives rise to complex correlation in the Fock space. 
To address this limitation, random quantum circuits have been introduced as powerful toy models in quantum information and many-body physics, providing a versatile framework to explore universal features of strongly coupled quantum dynamics~\cite{Harrow_2009, nahum2017}. Recent explorations involving the use of random quantum circuits to examine operator growth \cite{nahum2018, vonKeyserlingk2017a, keyserlingk2018, Huse2017} and entanglement dynamics amid chaotic evolution and measurements~\cite{nahum2017, Skinner_2019, Li_2018, chan2019, altman2019, Jian_2020, Zabalo_2020}, spectral statistics~\cite{Kos_2018, chan2018solution, chan2018spectral, bertini2018exact}, and eigenstate correlations~\cite{cdc3, Garratt2021prx}.

\begin{figure}[ht]
    \centering
 \includegraphics[width=0.4 \textwidth]{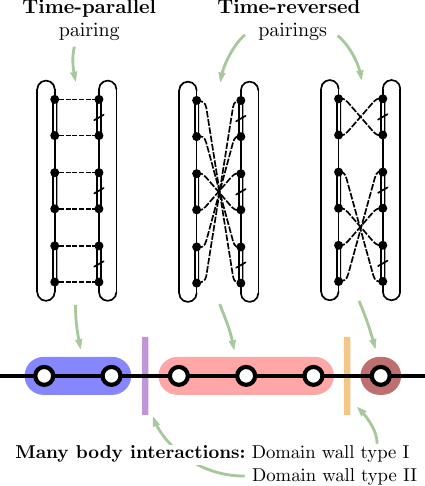}
    \caption{\textbf{Many-body interactions between time-parallel, time-reversed pairings of Feynman paths } give rise to universal scaling behaviours of SFF in the presence of TRS before the many-body Thouless time. 
    There are two types of interactions: Domain wall type I  arising from interactions among two time-parallel pairings or time-reversed pairings, and Domain wall type II  arising from interactions between a time-parallel pairing and a time-reversed pairing.
    }
    \label{fig:dw}
\end{figure}

A quantum system described by Hamiltonian $H(t)$ at time $t$ is \textit{time reversal symmetric} (TRS) if $H(t) = \TT H(-t) \TT^{-1}$, where $\TT$ is an antiunitary operator. 
Equivalently, a wavefunction $\psi(t)$ obeying the Schr\"{o}dinger equation of $H$ with TRS, i.e. $i \hbar \partial_t{\psi}(t) = H \psi (t)$,  has time-reversed counterpart given by $\TT \psi(-t)$. 
It can be shown that the antiunitary operator satisfies $\TT^2=\pm 1$, where the plus and minus signs correspond to the cases of systems with TRS with integer and half-integer spins respectively. Here we focus on dynamical and spectral properties of TRS quantum many-body systems with $\TT^2 = 1$. The case for $\TT^2= -1$ will be addressed in future work.
TRS plays a fundamental role in determining dynamical and spectral properties in quantum systems. It underlies key phenomena such as Kramers degeneracy in electronic systems~\cite{kramers1930theorie}, protected edge modes in topological insulators~\cite{kane2005, kane2005b, Bernevig_2006}, weak localization in disordered conductors~\cite{altshuler1980magnetoresistance, lee1985disordered, AltshulerShklovskii},  distinct spectral statistics in quantum chaotic systems and random matrices~\cite{dyson1962statistical, Zirnbauer_1996}, the CPT theorem~\cite{schwinger1951cpt, Luders1954, pauli1955cpt}, the problem of the arrow of time~\cite{boltzmann1872h, eddington1928nature, zeh2007direction}. 
Further, many of the most widely studied paradigmatic spin chain models possess TRS in their simplest formulations. Notable examples include the Heisenberg and XXZ chain~\cite{takahashi1999thermodynamics}, the transverse field Ising model~\cite{sachdev2011quantum}, and the AKLT model~\cite{affleck1987rigorous}, all of which have played important roles in understanding quantum magnetism, critical phenomena, and topological phases.
In this paper, we consider three classes of {\gqmbcs} with TRS: (i) local TRS, where each local interaction within the system is individually TRS; (ii) global TRS, where the entire unitary time evolution of the system is TRS; and (iii) TRS with discrete time translation symmetry (Floquet TRS). Note that a system can exhibit local or global TRS independently. However, a Floquet system with TRS  inherently satisfies both local and global TRS due to the periodic structure of its time evolution.
The objective of this paper is to understand the universal signature of quantum chaos arising from the presence of TRS and many-body interactions.

\begin{figure}[ht]
    \centering
    \includegraphics[width=0.45\textwidth]{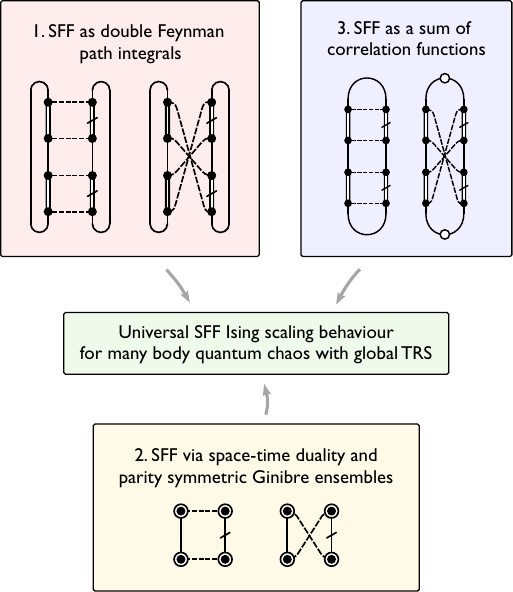}
     \caption{\textbf{Three routes to emergent Ising scaling behaviour of SFF in {\gqmbcs} with global TRS} by evaluating (1) SFF as double Feynman path integrals;  (2) SFF via space-time duality and parity symmetric Ginibre ensembles; and (3) SFF as a sum of correlation functions. See main text and Table~\ref{Tab:sff_3approaches}
for the corresponding effective Ising degrees of freedom with each approach.}
    \label{fig:three_routes}
\end{figure}

The quantum chaos conjecture can be concretely probed by the \textit{spectral form factor} -- the Fourier transform of two-level spectral correlation function \cite{Mehta, Haake}. 
Given quantum system with a time evolution operator $U(t)$, the SFF is defined as $K(t)=|\Tr[U(t)]|^2$, representing a double sum of amplitudes over Feynman paths. At the core of quantum chaos lies the diagonal approximation~\cite{Hannay1984PeriodicOA, berry1985semiclassical}, originally introduced in periodic orbit theory, which asserts that under an ensemble average, the dominant contributions to the SFF arises from pairings between a Feynman path and its complex conjugate counterpart, as phase cancellations suppress off-diagonal terms.  Additionally, time translational symmetry allows for $t$ pairing choices, leading to the characteristic linear ramp behaviour of RMT, a universal hallmark of quantum chaos~\cite{Mehta, Haake}. In the presence of TRS, each Feynman path has a corresponding time-reversed path. This introduces an additional class of pairings: Besides \textit{time-parallel pairings} between two paths evolving in the same direction, there exist \textit{time-reversed pairings} between two paths propagating in opposite directions. As a result, the leading order SFF pairings double from $t$ to $2t$ in the presence of TRS. 
The physics of time-reversed pairings has been extensively studied across various fields, e.g. Cooperons in disordered systems~\cite{langerneal_1966, lee1985disordered, AltshulerShklovskii}, and semiclassical periodic orbit theory in single- or few-particle physics~\cite{Hannay1984PeriodicOA, berry1985semiclassical, Sieber_2001, muller_2004, muller_2005}.

What are the signatures of quantum chaos in the presence of \textit{many-body} interactions?
Due to local many-body interaction, the connected part of SFF exhibits  a characteristic ``bump'' --  a generic deviation from RMT -- at times earlier than the \textit{many-body Thouless time} $\tth$~\cite{chan2018spectral, Prosen}. Note that the disconnected part of the SFF can introduce an early-time non-universal ``dip'', which should not be confused with the ``bump'' in the connected SFF.
In generic many-body systems, both the bump size and $\tth$ have been observed -- and in some cases analytically derived -- to grow with system size, ultimately diverging in the thermodynamic limit~\cite{chan2018spectral, friedman2019, Gharibyan_2018, moudgalya2020spectral, chan2020lyap, chaoschallengeMBL, chan2021trans, Garratt2021prx, yoshimura2023operator, Dag2023}. 
This divergence suggests the existence of an extended time and energy window in which one can explore novel signatures of many-body quantum chaos beyond the conventional RMT behaviour.
Such deviation from RMT can be understood through the pairings of Feynman paths, which are now many-body trajectories in the Fock space. 
In {\gqmbcs}, these pairings can occur locally, so that different subregions of the system can exhibit distinct pairings~\cite{chan2018spectral, chan2020lyap, garratt2020MBL}. The many-body interactions between these local pairings then give rise to the characteristic bump in the SFF of {\gqmbcs}.
Before $\tth$, the system can be heuristically viewed as a collection of expanding patches of random matrices, each adopting an identical pairing of Feynman paths. As time progresses, these patches grow and merge, eventually encompassing the entire system. For times beyond $\tth$, the system behaves as a single large random matrix, with all regions adopting the same pairing, thereby restoring the diagonal approximation.

In this paper, we show that many-body interactions between these time-reversed and time-parallel Feynman-path pairings give rise to novel, universal signatures of generic many-body quantum chaos with TRS, not only in the SFF, but also in dynamical probes such as two-point correlation functions and out-of-time-ordered correlators, which involve eigenstate correlations as well as spectral correlation.
Sec.~\ref{sec:res_overview} gives an overview and summary of our main results. Sec.~\ref{sec:models} defines three quantum circuit models, a time-independent Hamiltonian model, and the corresponding RMT models. 
In Sec.~\ref{sec:diag}, we develop the diagrammatic approach for random quantum circuits with TRS. We provide results on the SFF, partial SFF, two-point correlation functions, and out-of-time-ordered correlators in Sec.~\ref{sec:sff}, ~\ref{sec:psff}, ~\ref{sec:corr}, and ~\ref{sec:otoc} respectively. 
Finally, Sec.~\ref{sec:outlook} offers a discussion and an outlook for future work.
\newpage
\section{Results overview}\label{sec:res_overview} 
Here we provide an overview of our key results on universal signatures of generic many-body quantum chaos with TRS, as probed by the SFF, partial SFF, two-point correlation functions, and out-of-time-ordered correlators.  
To isolate the role of TRS in the SFF, we begin by considering the simplest non-trivial setting where all symmetries, including time translational symmetry, are eliminated except the global TRS. In this minimal setting, there are only two possible local pairings of Feynman paths: a single time-parallel pairing and a single time-reversed pairing [Fig.~\ref{fig:sff_diag}]. 
In the limit of large local Hilbert space dimension $q$, the SFF of {\gqmbcs} with global TRS can be exactly evaluated by mapping it to the partition function of the Ising model, where the time-parallel and time-reversed pairings of Feynman paths serve as effective Ising degrees of freedom at each physical site, and where a TRS-breaking mechanism can be introduced as an effective external magnetic field.
To establish the universality of our findings, we derive a scaling function of SFF [Eq.~\eqref{eq:ising_scaling}] in the \textit{Thouless scaling limit} where both $L$ and $t$ are large but $x \equiv L / \Lth(t)$ remains fixed \cite{chan2021trans}. Here, the Thouless length $\Lth(t)$ is the inverse function of the Thouless time $\tth(L)$.
We validate this  scaling function through numerical simulations of two distinct many-body quantum circuit models at finite $q$, finding excellent agreement with our theoretical predictions obtained at large $q$ [Fig.~\ref{fig:sff_gtrs_num}]. The Ising scaling behaviour of SFF in {\gqmbcs} with global TRS will be derived with two complementary approaches, as described below.
 Restoring the discrete time translational symmetry in addition to TRS, we show that the SFF in {\gqmbcs} can be mapped onto the partition function of a generalized Potts model where each physical site possesses $2t$ degrees of freedom -- $t$  time-parallel Feynman-path pairings, and $t$ time-reversed pairings [Fig.~\ref{fig:sff_diag} and Fig.~\ref{fig:dw}]. 
This mapping gives rise to two distinct types of many-body interactions: (i) Domain wall type I, describing interactions between either two time-parallel pairings, or two time-reversed pairings; and (ii) Domain wall type II, describing interactions between time-parallel and time-reversed pairings [Fig.~\ref{fig:dw}]. 
From these interactions, we derive an exact expression for SFF in the large-$q$ limit, and obtain the corresponding scaling function at the Thouless scaling limit [Eqs.~\eqref{eq:ftrs_obc_scaling} and ~\eqref{eq:ftrs_pbc_scaling}]. To validate these results at finite $q$, we numerically simulate two distinct quantum circuit models, finding good agreement with our theoretical predictions [Fig.~\ref{fig:sff_ftrs_num}].
Together,  we show that in the large-$q$ limit, depending on the types of TRS symmetries, the SFF of {\gqmbcs} is schematically given by
\be\label{eq:sff_overview}
\ba
\lim_{q\to\infty} \overline{K(t,L)}  
\propto 
\begin{cases}
     Z^{t\text{-Potts}}   \qquad & \text{No symmetries,}
    \\
      1 \qquad & \text{Local TRS,}
     \\
     Z^{\text{Ising}} \qquad & \text{Global TRS,}
    \\
     Z^{2t\text{-gPotts}} \qquad & \text{Floquet TRS,}
\end{cases} 
\ea
\ee
where the partition function of Ising, Potts, and generalized Potts models (gPotts) are defined in Eq.~\eqref{eq:ising_model}, and Eqs.~\eqref{eq:floq_trs_mapping} to \eqref{eq:dw_typ2}.  
These mappings will be made precise below and exactly evaluated in the one-dimensional settings. The features of SFF in the higher-dimensional case are studied in an upcoming work~\cite{upcoming}.

 In quantum many-body physics, space-time duality is an approach that provides insights into a quantum many-body system by analyzing the spatial propagation of a circuit instead of its conventional unitary time evolution~\cite{guhr2016kim, chan2018spectral, bertini2018exact, chan2020lyap, lerose2020influence, Ippoliti_2022}. 
Recent studies have demonstrated that the dual transfer matrix of spatially-extended, generic many-body quantum chaotic systems belongs to the universality class of the non-Hermitian Ginibre ensemble~\cite{Shivam_2023},  which consists of random matrices with independent complex Gaussian matrix elements.
Using Ginibre ensembles, we independently derive the Ising scaling function of the SFF in the presence of global TRS [Eq.~\eqref{eq:ginibre_scaling_form}], without relying on the infinite-$q$ limit or specific quantum circuit models, thereby substantiating the claims on the universality of our findings.

We derive exact results for the partial spectral form factor (PSFF) \cite{Gong_2020, Garratt2021prx, ZollerSFF2021}, which generalizes the SFF by tracing out only a subregion $A$ of the quantum many-body system. Notably, beyond capturing eigenvalue correlations, the PSFF also encodes information about the eigenstate correlations, which serve as robust diagnostics of quantum chaos, as exemplified by the Eigenstate Thermalization Hypothesis~\cite{deutsch1991quantum, Srednicki, Rigol2008}. Analogous to Eq.~\eqref{eq:sff_overview}, in the large-$q$ limit, PSFF can be evaluated  via a mapping to the partition function of a system [Eq.~\eqref{eq:psff_overview}], with different Hamiltonians governing subsystem $A$, its complement $\overline{A}$, and their boundary $\partial A$. 
We analytically and numerically demonstrate that PSFF features a bump before decaying  exponentially in time after the corresponding Thouless time. The origin of this bump is analogous to that in the SFF, but in this case, the emergent statistical mechanical system is confined in the subregion $A$, within which time-parallel and time-reversed pairings of Feynman paths interact.

The many-body interactions between time-parallel and time-reversed pairings have implications on the dynamical signatures of the {\gqmbcs} with TRS. 
Two-point correlation functions are one of the simplest observables to characterise  quantum fluctuations~\cite{rickayzen1980greens}, and are expected to decay exponentially in {\gqmbcs} in the absence of conserved quantities~\cite{Rigol}. 
Consider the Hilbert space $\mathbb{C}^{q^L}$ of a quantum many-body system with $L$ sites, where each local Hilbert space is $\mathbb{C}^q$. At each site, we use the generalized Gell-Mann matrices -- explicitly defined in Eq.~\eqref{eq:gellmann} -- as operator basis $\{o_{\mu }\}$ with index $\mu =0,1,\dots, q^2-1$. This basis satisfies the follow properties: $o_{0}\equiv \mathbb{1}$, $o_{\mu }$ is traceless for $\mu \neq 0$, and the basis is orthonormal, i.e. $q^{-1} \Tr[o_{\mu} o_{\nu}] = \delta_{\mu \nu}$.  For $q=2$, this basis consists of Pauli matrices and the identity matrix. 
For the many-body Hilbert space, we take operator strings $O_{{\mu}} =\bigotimes_{i=1}^L o_{\mu_i}$  to form the basis states with $O_{0}\equiv \mathbb{1}$ and the integer $\mu = 0, 1,2,\dots, q^{2L}-1$ is now promoted to a vector $(\mu_1,\mu_2, \dots,\mu_L)$ with $\mu_i =0,1,\dots, q^2-1$. 
The two-point correlation function of the operator $O_\mu$ at time $t$ at infinite temperature is defined as $C_{\mu \nu}(t) := \mathcal{N}^{-1}\Tr[O_\mu(t) O_\mu(0)]$ where $O_\mu(t) = U(t) O_\mu U^\dagger(t)$ is the time-evolved operator in the Heisenberg picture. The \textit{two-point autocorrelation function} (2PAF) is defined as  $C_{\mu \mu }(t)$.

We show that averaged 2PAF can be evaluated by recognizing that   time-reversed pairings of Feynman paths dominate at sites where  $O_{\mu}$ has  operator support, while time-parallel pairings dominate on sites without  support [Eq.~\eqref{eq:2paf_partition}].
Crucially, unlike in the case of the SFF, time-parallel and time-reversed pairings are \textit{not} treated on equal footing in the 2PAF.
At site $i$ with non-trivial operator support, a unique time-reversed pairing of Feynman paths connects the two local operators at leading order in $q$, effectively circumventing the tracelessness condition [Lemma~\ref{lemma:leading2pcf} and Fig.~\ref{fig:2pcf_diag} (a)]. 
We represent this local pairing heuristically as   
\be \label{eq:2paf_cont_ill_v1}
\ba
\Tr[ 
\wick{
\c1 o_{\mu_i}(t) \c1 o_{\mu_i}(0) 
}]
 \, , \quad \text{TRS,} 
\ea
\ee
where $o_{\mu_i}$ is the support of the operator $O_\mu$ at site $i$ with $\mu_i=1,2,\dots, q^2-1$ . 
This contraction gives rise to a crucial sign that is dependent on the symmetricity of the local operator: a negative (positive) sign arises for antisymmetric (symmetric) operator [Fig.~\ref{fig:op_dependence}], e.g. Pauli-$Y$ (Pauli-$X$) operator at $q=2$. 
Further, the other subleading diagrams from time-reversed pairings and time-parallel pairings are  always positive without operator dependence. 
Consequently, 2PAF for antisymmetric operator is expected to have suppressed values relative to the symmetric counterparts, since the negative leading contributions are reduced in magnitude by the positive subleading contributions.  
The above behaviour of 2PAF contrasts sharply with the case without TRS, where the leading order contributions lack operator-dependent signs and are suppressed by an additional factor of $q$ for each site with operator support. 
By summing over 2PAF over a complete operator basis and properly accounting for the operator-dependent sign, we rederive the Ising scaling behaviour of SFF in {\gqmbcs} with global TRS [Eqs.~\eqref{eq:ising_scaling_2paf} and \eqref{eq:3approaches}]. 
The dominance of time-reversed pairings and the characteristic negative and suppressed values of the 2PAF of antisymmetric operators (compared to 2PAF of symmetric operators) are supported by two pieces of numerical evidence at finite \( q \). First, direct simulations of the 2PAF for a local antisymmetric operator show a clear negative and suppressed value, as illustrated in Fig.~\ref{fig:2pcf_numerics_antisym} for two Floquet models and a time-independent Hamiltonian model. Second, upon summing over both symmetric and antisymmetric operators supported on the same region, the 2PAF exhibits a scaling behaviour with operator support size that aligns with theoretical predictions, as shown in Fig.~\ref{fig:2pcf_numerics}.

\begin{table*}[ht]
\caption{\label{Tab:sff_3approaches} Effective Ising degrees of freedom in the three approaches used to derive the scaling behaviour of SFF in {\gqmbcs} with global TRS. See also \ref{fig:three_routes} for a schematic illustration.}
\begin{ruledtabular}
\begin{tabular}{lll}
Approach & Effective Ising degrees of freedom 
 \\
  \hline  
1. SFF as double Feynman path integrals  & Time-parallel pairings of  Feynman paths 
  \\
 &
Time-reversed pairings of Feynman paths 
 \\
  \hline  
2. SFF via space-time duality and & 
Identity contraction in the dual spatial direction
\\
\hspace{0.3cm} parity symmetric Ginibre ensembles
 & Parity SWAP contraction in the dual spatial direction
    \\
 \hline
3. SFF as a sum over correlation functions &
Time-parallel pairings for 2PAF on sites w/o operator support
\\
& 
Time-reversed pairings for 2PAF on sites w/ operator support
 \end{tabular}
\end{ruledtabular}
\end{table*}

Further, we demonstrate that the averaged 2PAF fluctuations or 2PAF second moment can be evaluated via a mapping onto the partition function of the three-state Potts model, where each site hosts three effective degrees of freedom arising from time-reversed and time-parallel pairings of Feynman paths. In the notation introduced above, these local degrees of freedom are [Lemma~\ref{lemma:2paf_fluc_v2}]
\begin{IEEEeqnarray}{ll} 
\Tr[
\wick{
\c1 o_{\mu_i}(t) \c1 o_{\mu_i}
} ]
\wick{
\Tr[
\c1 o_{\mu_i}(t) \c1 o_{\mu_i}
}
]  \, , \quad &\text{TRS,} \nonumber
\\   
\Tr [  
\wick{
\c2 o_{\mu_i}(t) \c1 o_{\mu_i}
]
\Tr [
\c1 o_{\mu_i}(t) \c2 o_{\mu_i}
}
]  
\, , \quad &\text{TRS,} \quad
\label{eq:2paf_cont_ill_v2}
\\
\Tr [
\wick{
\c1 o_{\mu_i}(t) \c2 o_{\mu_i}
]
\Tr[ 
\c1 o_{\mu_i}(t) \c2 o_{\mu_i}
}
]  
\, , \quad &\text{No symmetries.} \nonumber
 \end{IEEEeqnarray}
The emergence of three distinct states is \textit{fundamental}, arising from the combinatorial fact that there are three possible ways to pair four operators in the fluctuation of the 2PAF.
Crucially, in the absence of TRS, only the last pairing in Eq~\eqref{eq:2paf_cont_ill_v2} remains, leading to a striking contrast in the behaviours of 2PAF fluctuations: With TRS, the normalized fluctuations scale exponentially with operator support size at a rate governed by the three-state Potts model; while without TRS, the normalized fluctuations remain approximately constant as the operator support size grows. We support these claims with numerical simulations two quantum circuit models and a time-independent Hamiltonian model at finite-$q$ in Figs.~\ref{fig:2pcf_numerics_fluc_in_op_size} and \ref{fig:2pcf_numerics_fluc_growth_rate}.
Schematically, we  summarise the results for 2PAF fluctuations as
\be \label{eq:2paf_fluc_summary}
\ba
\lim_{q\to\infty} \overline{C_{\mu \mu}^2(t) } 
\propto 
\begin{cases}
     Z_A^{\text{cluster}}  \qquad & \text{No symmetries,}
     \\
     Z_A^{3\text{-Potts}} \qquad & \text{TRS.}
\end{cases} 
\ea
\ee
$Z_A^{3\text{-Potts}}$ is the partition function of an emergent statistical mechanical model \eqref{eq:Z_A} where the operator is supported in subregion $A$ with the emergent dynamics governed by the three-state Potts model. 
$Z_A^{\text{cluster}}$ is the partition function of a trivial statistical mechanical system described by a single state with Boltzmann weight depending on size of the operator support $|O|$ and its boundary $|\partial O|$.
 %
%
These results on 2PAF and its fluctuations highlight the significant role of the many-body interactions among time-reversed pairings in {\gqmbcs} with TRS.

Under the quantum dynamics in generic quantum many-body systems, information of local perturbation spreads and scrambles among non-local degrees of freedom of the system, and can be diagnosed using the \textit{out-of-time-ordered correlator} (OTOC) \cite{Shenker_2014, LarkinOvchinnikov}, defined at infinite temperature as $F_{\mu \nu}(t):=\frac{1}{\mathcal{N}}\Tr[O_\mu (t) O_\nu(0) O_\mu(t) O_\nu(0)]$, where  $O_\mu(t) = U(t) O_\mu U^\dagger(t)$ is the time-evolved operator in the Heisenberg picture as before. 
We show that the leading order behaviour of the OTOC between two operators is unaffected by TRS when the operators are spatially separated. However, when their supports overlap at the initial time, the OTOC becomes sensitive to TRS. In this case, TRS induces corrections arising from contractions analogous to those in Eqs.~\eqref{eq:2paf_cont_ill_v2} and ~\eqref{eq:2paf_fluc_summary}, with the sign of the correction depending on whether the local operator supports are identical.

TRS naturally arises in some of the simplest models of quantum many-body chaos studied in the literature. Recent works on TRS in spatially-extended many-body systems have explored entanglement and operator spreading in temporal and spatial random local TRS quantum circuits that lack global TRS and time translational symmetry~\cite{hunter2018operator, Kalsi_2022}. Additionally, the RMT behaviour has been derived in SFF in Floquet TRS chaotic systems with long-range interactions~\cite{Kos_2018, Roy_2020fermion, Roy_2022boson, kumar2025leadingleadingorderspectralform} and  Floquet TRS chaotic quantum circuits with dual unitary condition~\cite{bertini2018exact, Flack_2020, Bertini_2021du}. In dual unitary circuits, spatial propagations are generated by unitary operators, so that their spatial propagation can be treated as effective time evolution.
Consequently, dual unitary circuits exhibit RMT behaviour in SFF after only $O(1)$ time, and do so without a diverging many-body bump region discussed here. Additionally, their two-point correlation functions exhibit strictly confined to the light cone, vanishing exactly within the light cone. By contrast, this work investigates the spectral and dynamical properties of generic quantum many-body chaotic systems with TRS, which do not exhibit these features.

While completing this manuscript, \cite{khanna2025randomquantumcircuitstimereversal} appeared in the arXiv on random quantum circuits with global TRS. These two works are complementary to each other. \cite{khanna2025randomquantumcircuitstimereversal} focuses on the measurement-induced phase transition, while this work focuses on the SFF, PSFF, correlation functions, and OTOC.

 The main results of this paper are summarized as follows, with exact analytical results obtained in the limit of large local Hilbert space dimension $q$ unless stated otherwise: 
    \begin{itemize}
    \item Derivation of the leading order RMT universal behaviour in the spectral form factor in {\gqmbcs} with TRS after the Thouless time $\tth$ [Sec.~\ref{sec:sff_manybody}].
    \item Derivation of Ising  and generalized Potts universal scaling behaviours beyond the RMT in the spectral form factor in  {\gqmbcs} with TRS before the Thouless time $\tth$ [Eqs.~\eqref{eq:ising_scaling} and ~\eqref{eq:ftrs_obc_scaling}  with numerical evidences from two quantum circuit models in Fig.~\ref{fig:sff_gtrs_num} (c) and Fig.~\ref{fig:sff_ftrs_num} (c) respectively].
    \item A second derivation of the Ising scaling behaviour of the spectral form factor from non-Hermitian parity symmetric Ginibre ensembles of {\gqmbcs} without relying on the large-$q$ limit [Eq.~\eqref{eq:ginibre_scaling_form}]. 
    \item Exact results for the partial spectral form factor in {\gqmbcs} with TRS beyond the RMT regime [Eq.~\eqref{eq:psff_overview}]. 
    \item Exact results for the two-point autocorrelation functions in {\gqmbcs} with TRS [Eq.~\eqref{eq:2paf_partition}], which are  dominated by local time-reversed pairings, resulting in characteristic negative and suppressed values for autocorrelation functions of antisymmetric operators, compared to their symmetric counterparts [Eq.~\eqref{eq:trs_2pcf_onekindop} with numerical evidences in Fig.~\ref{fig:2pcf_numerics_antisym} from two quantum circuit models and a Hamiltonian model]. 
    
    \item A third derivation of the Ising scaling behaviour of the spectral form factor in {\gqmbcs} with TRS from summation of autocorrelation functions [Eq.~\eqref{eq:ising_scaling_2paf}].
    %
    %
    \item Exact results for the fluctuations of two-point autocorrelation functions in {\gqmbcs} with TRS [Eq.~\eqref{eq:2paf_fluc_stat_mech}], which exhibit exponential growth  in the operator support size governed by an emergent three-state Potts model [Eq.~\eqref{eq:2pcf_fluc_1d2} and numerical evidences in Figs.~\ref{fig:2pcf_numerics_fluc_in_op_size} and \ref{fig:2pcf_numerics_fluc_growth_rate} from two quantum circuit models and a Hamiltonian model].
    \item Discussions on how the out-of-time-ordered correlator between two operators depends on their spatial separation in the presence of TRS [Fig.~\ref{fig:otoc_diag}].
\end{itemize}
We emphasise that the Ising scaling behaviour of SFF in {\gqmbcs} with global TRS is derived via three complementary approaches [Fig.~\ref{fig:three_routes} and Table~\ref{Tab:sff_3approaches}]:
\begin{enumerate}
    \item Evaluation of SFF as double Feynman path integrals by accounting for the many-body interactions between time-parallel and time-reversed pairings of Feynman paths under the large-$q$ approximation [Eq.~\eqref{eq:ising_scaling}].
    \item Evaluation of the SFF via space-time duality using a parity symmetric Ginibre ensemble without relying on the large-$q$ approximation [Eq.~\eqref{eq:ginibre_scaling_form}]. 
    \item Evaluation of SFF as a sum of correlation functions under the large-$q$ approximation, which treat time-reversed and time-parallel pairings of Feynman paths asymmetrically [Eq.~\eqref{eq:ising_scaling_2paf}].  
\end{enumerate}
Together, these approaches give the equation
\be\label{eq:3approaches}
\kappa^{\text{SFF}}_{\text{g-TRS}}(x) = \kappa^{\text{Gin}}_{\text{g-TRS}}(x)
=
\kappa^{\text{2PAF}}_{\text{g-TRS}}(x) \, ,
\ee
where $\kappa$ are the SFF evaluated at the Thouless scaling limit computed via the three different routes. The effective Ising degrees of freedom in the three approaches are summarised in Table~\ref{Tab:sff_3approaches} and Fig.~\ref{fig:three_routes}.

\section{Models}\label{sec:models}
To model generic quantum many-body dynamics, we consider random quantum many-body circuits $U(t)$ acting on the Hilbert space of many qudits with local Hilbert space dimension $q$.  We consider three conditions on TRS and time translational symmetry:
\begin{enumerate}
    \item \textbf{Local TRS}: A local interaction described by some unitary operator $u$  is \textit{TRS} if $u$ satisfies
    \be\label{eq:ltrs_cond}
\mathcal{T} u \mathcal{T}^{-1} = u^\dagger \;.
\ee
We say that a quantum circuit $U(t)$ is \textit{local TRS} if $U(t)$ is composed of $u$ satisfying \eqref{eq:ltrs_cond}.
\item \textbf{Global TRS}: The time evolution operator $U(t)$ composed of local interactions $u$-s is called \textit{global TRS} if $U(t)$ satisfies
\be\label{eq:gtrs_cond}
\mathcal{T} U(t) \mathcal{T}^{-1} = U^\dagger(t) \;.
\ee
\item \textbf{Time translational symmetry (Floquet)}: $U(t)$  is a \textit{discrete time translational symmetric} or \textit{Floquet} operator if 
\be\label{eq:floq_cond}
U(t) = U^t \;.
\ee
\end{enumerate}
Rather intuitively, the condition Eq.~\eqref{eq:ltrs_cond} (and Eq.~\eqref{eq:gtrs_cond}) is imposing that time evolution followed by the associated time-reversed evolution is equal to identity, i.e. $\mathcal{T} u \mathcal{T}^{-1}u = \mathbb{1}$. The main advantage of modelling {\gqmbcs} with unitary time evolution operator, as opposed to, say, time-independent Hamiltonians, is its simplicity: these models can be constructed so that all symmetries and conserved quantities, including energy, are removed. These symmetries can be systematically added back to the systems, e.g.~\cite{khemani2018, vonKeyserlingk2017a, friedman2019, moudgalya2020spectral, chan2021trans}. Furthermore, such time evolution operators often have homogeneous density of states, which leads to particularly clean behaviour in observables, e.g. absence of the dip in SFF at early times, eliminating the need for unfolding, unlike the case of time-independent Hamiltonians. Lastly, unitary quantum circuits are naturally realized in digital gate-based quantum simulators and processors~\cite{fauseweh2024quantum}, see for example Google's experiment~\cite{arute2019quantum}.

In the following, we will define two random quantum circuits models, namely the Random Phase Model and the Haar-Random Model. 
For both models in later sections, we will provide analytical calculations in the limit of large local Hilbert space dimension, and numerical simulations at finite local Hilbert space dimension, so that the generality of the analytical results can be verified.
We will also define two models in random matrix theory. The first (second) one is based on the circular unitary (Ginibre) ensemble, acts (non-)unitarily in the time (space) direction, and captures the late-time (large-system-size) universal behaviour of {\gqmbcs}. 
For brevity, we will define the quantum many-body circuits using tensor networks diagrammatically in the main text, and provide their algebraic definitions in the Appendix~\ref{app:models}.

\subsection{Random quantum circuits}
\subsubsection{Random phase models (RPM)}
\begin{figure*}[ht]
    \centering
    \includegraphics[width=0.95\textwidth]{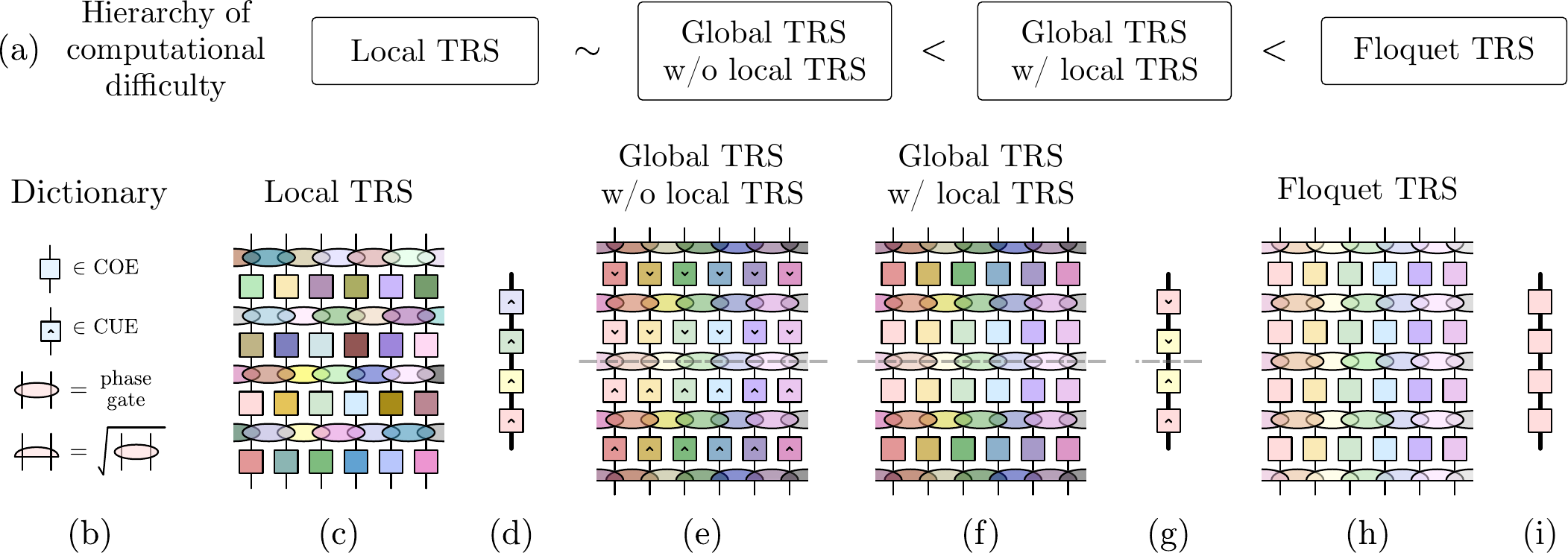}
    \caption{\textbf{TRS random phase models (RPM).} (a) Hierarchy of computational difficulty of TRS random quantum circuits. (b) Diagrammatical representation of two-site and one-site unitary gates in the RPM. 
    RPM with (c) \textbf{local TRS}, (e) \textbf{global TRS w/o local TRS}, (f) \textbf{global TRS w/ local TRS}, and (h) \textbf{Floquet TRS}. Gates illustrated in the same colour are being identified. The above models can be approximated in sufficiently late time by the coarse-grained models of random matrix theory (RMT)  in (d), (g), and (i) respectively. 
    Notice that even though 2-site unitary gates may be locally TRS, bi-layer of such gates may not have TRS, e.g. (c) (f).
For models with global TRS, the gates reflected across the time reversal axes (dashed lines) are identified. 
The gates at the top and bottom layers of (e), (f) and (h) are the half gates (see main text). Algebraic definitions of models are given in Appendix~\ref{app:models}.}
    \label{fig:model}
\end{figure*}

The \textit{Random Phase Models} (RPM) are many-body random quantum circuits  that act on the Hilbert space $\mathbb{C}^{q^L}$ of $L$ qudits. While many results derived for RPM are valid for general geometries in arbitrary dimensions, we will define RPM in one dimension for simplicity. The RPM is composed of 1-site gates $u_{\mathrm{COE}}$ drawn from the COE and/or CUE, and 2-site diagonal  gates $[u_{\mathrm{phase}}]_{a_j a_{j+1}}^{ a'_j a'_j+1}=\delta_{a_j, a'_j} \delta_{a_{j+1}, a'_{j+1}} \exp[i \varphi^{(j)}_{a_j,a_{j+1}} (t)]$ with  $a_j= 1,2\ldots, q$ that couples neighbouring sites with random phases. Each coefficient $\varphi_{a_j,a_{j+1}}^{(j)}(t)$ is an independent Gaussian random real variable with mean zero and variance $\epsilon$, which controls the coupling strength between neighboring spins. 
RPM is the first local and spatially-extended generic quantum many-body model that allows the analytical derivation of a universal deviation from RMT for time earlier than the Thouless time~\cite{chan2018spectral}.
By constraining the local gates in a quantum circuit to be site- or time-independent (or both), the RPM gives access to translational invariant in time and in space without TRS, as explored in \cite{chan2018spectral, mace2019quantumcircuitcriticality, chan2020lyap, Chan_2022, Shivam_2023, huang2023outoftimeorder, yoshimura2023operator, Yoshimura2025}.

To incorporate TRS, using the tensor networks illustrated in Fig.~\ref{fig:model}, we define the \textit{local TRS RPM} as the quantum circuit in Fig.~\ref{fig:model} (c) where all gates are independently drawn, such that condition 1 is satisfied, but not conditions 2 and 3. The \textit{global TRS RPM without local TRS}  is defined as the quantum circuit in Fig.~\ref{fig:model} (e), where gates reflected across the (horizontal) time reversal axis [dashed line in Fig.~\ref{fig:model} (e)] are identified, such that condition 2 is satisfied, but not condition 1 and 3. Similarly, the \textit{global TRS RPM with local TRS} is defined as the quantum circuit in Fig.~\ref{fig:model} (f), satisfying condition 1 and 2, but not 3. We will refer to these two models with global TRS collectively as the \textit{global TRS RPM}. Lastly, the \textit{Floquet TRS RPM} is defined as  the quantum circuit in Fig.~\ref{fig:model} (h), such that all three conditions are satisfied. Note that since the 2-site gates are diagonal, they trivially satisfy local TRS condition \eqref{eq:ltrs_cond}.
These models are defined algebraically in Appendix \ref{app:models}.
RPM without symmetries are chaotic only for $q \geq 3$ -- the same statement is found to be valid in TRS RPM.   Throughout this paper, we present exact analytics of RPM at large $q$ and general $\epsilon$, and numerics of RPM at $q=3$ and $\epsilon = 2$.

In order to satisfy the global TRS condition \eqref{eq:gtrs_cond} in quantum circuit geometries, we introduce the \textit{half gates} at the first and last time steps in Fig.~\ref{fig:model} (e,f,h) and also later in Fig.~\ref{fig:model_hrm_maintext} (b,c), illustrated by gates roughly half the size of gates in the bulks of the quantum circuits. 
For a unitary $u$ with TRS satisfying Eq.~\eqref{eq:ltrs_cond}, there always exists a $\mathcal{T}$-invariant basis such that $u$ is symmetric. Therefore, $u$ always has a spectral decomposition $u = \mathcal{S} D \mathcal{S}^{T}$ where $T$ denotes transposition, $D$ is the diagonal matrix of eigenvalues, and $\mathcal{S}^{T} \mathcal{S} = \mathbb{1}$. In turn,  we can decompose $u = v v^\mathrm{T} $ with $v = \mathcal{S} \sqrt{D} $ being the half gate.
Intuitively, one can think of the geometry of the circuits with global TRS [e.g. Fig.~\ref{fig:model} (e,f,h)] as the conventional geometry [e.g. Fig.~\ref{fig:model} (c)], but cyclically shifted half a time step upwards or downwards. These half gates can be thought of as the space-time-dual analogue of the gates that straddle between the left-most and right-most sites across the periodic boundary of a one-dimensional many-body circuits. In other words, a gate that straddle across boundaries in space would correspond to a half gate if the space and time axes are swapped. 
To define TRS RPM, we need the half gate associated to a 2-site random phase gate, which is obtained by simply taking the matrix square root of the diagonal random phase gate.  
To define TRS RPM in odd time (see Appendix \ref{app:models}) and TRS Haar-random model (see below), we also need the half gate associated to $u_{\mathrm{COE}}$ drawn from the COE. $u_{\mathrm{COE}}$ has a natural decomposition $u_{\mathrm{COE}} = u_{\mathrm{CUE}} u_{\mathrm{CUE}}^{T}$, and we take $u_{\mathrm{CUE}}$ as the half gate of $u_{\mathrm{COE}} $.

Heuristically, the computational difficulty of these models can be assessed by the moments of local gates $u_{\mathrm{CUE}}$ and  $u_{\mathrm{CUE}}^\dagger$ required to compute the moments of the entire circuit  $U(t,L)$ and $U^\dagger(t,L)$. By this criterion, the hierarchy of computational difficulty of random quantum circuits increases in the following order: local TRS {\gqmbcs} and global TRS w/o local TRS {\gqmbcs}, global TRS w/ local TRS {\gqmbcs}, and Floquet TRS {\gqmbcs} as in Fig.~\ref{fig:model} (a). For example, the computation of the 1st moment of $U(t,L)$ and $U^\dagger(t,L)$ requires respectively the 2nd, 2nd, 4th and $2t$-th moments of $u_{\mathrm{CUE}}$ and $u_{\mathrm{CUE}}^\dagger$. Note crucially that this implies that the SFF of local TRS {\gqmbcs}, global TRS w/o local TRS {\gqmbcs} only requires the 2nd moment of the Haar-random gates, and therefore can be efficiently simulated using the corresponding Clifford circuits by the Gottesman-Knill theorem~\cite{gottesman1998heisenberg}, which will be discussed in the upcoming work~\cite{upcoming}. 

\subsubsection{Haar-random models (HRM)}

\begin{figure}[ht]
    \centering
    \includegraphics[width=0.5 \textwidth]{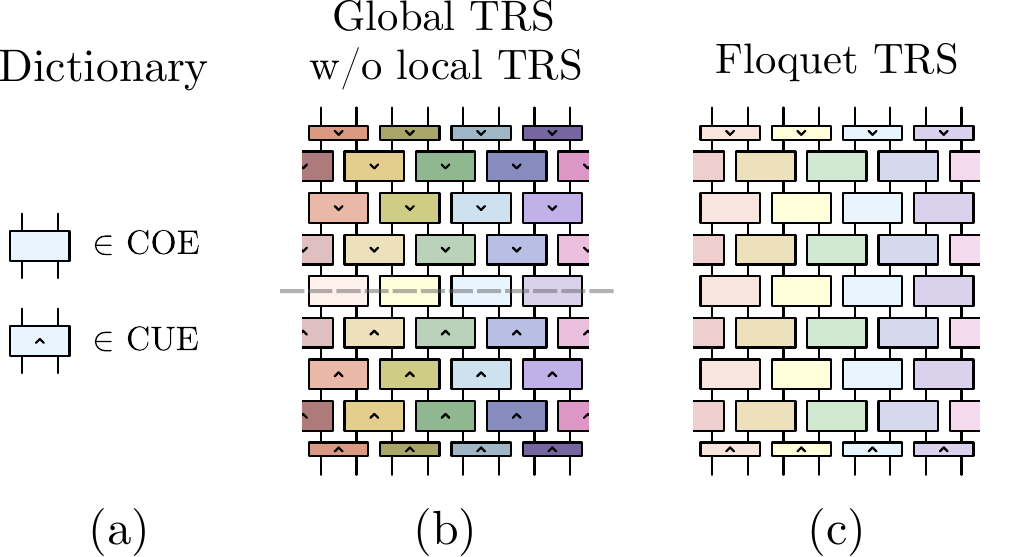}
    \caption{\textbf{TRS Haar-random models (HRM).} (a) Dictionary of the diagrammatical representation of COE and CUE unitary gates. HRM as models of {\gqmbcs} with (b) global TRS w/o local TRS and (c) Floquet TRS at time $t=4$. Gates illustrated in the same colour are being identified. The dashed line denotes TRS inversion axis. 
    Local TRS HRM (not illustrated) is similar to (c) except each unitary gate is independently drawn, and that the circuit geometry is shifted in time by half a time step. 
     Global TRS HRM w/ local TRS is defined as in (b), except that the CUE gates are replaced by COE gates, aside from the topmost and bottommost layers of CUE gates (the half gates). See the algebraic definitions in Appendix~\ref{app:models}. 
     }
     \label{fig:model_hrm_maintext}
\end{figure}

The \textit{Haar-random models} (HRM) are many-body random quantum circuits that act on the Hilbert space $\mathbb{C}^{q^L}$ of $L$ qudits.  HRM is composed of 2-site gates $u_{\mathrm{COE}}$ drawn from either the circular orthogonal ensemble (COE), or the circular unitary ensemble (CUE) [Fig.~\ref{fig:model_hrm_maintext} (a)].
Without TRS, the temporal- and spatial-random version of HRM has been studied extensively in the literature,  e.g. \cite{nahum2017, nahum2018, vonKeyserlingk2017a, keyserlingk2018, Skinner_2019,Li_2018}, and imposing time translational symmetry in HRM has allowed access to spectral statistics~\cite{chan2018solution, chan2020lyap, Chan_2022, huang2023outoftimeorder} and eigenstate correlations~\cite{cdc3}.
Closely analogous to the definitions of TRS RPM, we define the \textit{local TRS HRM} as in Fig.~\ref{fig:model_hrm_maintext} (c), except that all gates are drawn independently at each space time coordinate, and the geometry is shifted upwards by half a layer of gates. The \textit{global TRS HRM without local TRS} is defined as in Fig.~\ref{fig:model_hrm_maintext} (b). The \textit{global TRS HRM with local TRS} is defined similar to Fig.~\ref{fig:model_hrm_maintext} (b), except the CUE gates are replaced with COE gates aside from the topmost and bottommost layers. Lastly, the \textit{Floquet TRS HRM} is defined using Fig.~\ref{fig:model_hrm_maintext} (c). 
These models are defined algebraically in Appendix \ref{app:models}.
We will provide exact analytics of HRM at large $q$, and numerical simulations at finite $q=2$.

\subsubsection{3-parameter models (3PM)}
The \textit{3-parameter models} (3PM)~\cite{znidaric2022, huang2023outoftimeorder}  are many-body random quantum circuits  that act on the Hilbert space $\mathbb{C}^{q^L}$ of $L$ qudits. 
The  3PM shares the same brick-wall geometry as the HRM, but with two-site gates acting on the $j$-th and $(j+1)$-th qudits given by
  \begin{align}
  \label{eq:3pm_fullgate}
 	& u^{(j,j+1)} = \left[u^{(j)}_1 \otimes u^{(j)}_2\right]
   u^{(j)}_{\mathrm{XYZ}}
    \left[u^{(j)}_1 \otimes u^{(j)}_2\right]\, ,
    \\ \nonumber
    &u^{(j)}_{\mathrm{XYZ}}=
     \exp\left(i \sum_{\mu = x, y, z} a_\mu \sigma^\mu_{j}  \sigma^\mu_{j+1} \right) \,,
\end{align}
where for each $j$ and $i$, the unitary matrix $u^{(j)}_i$ is independently drawn from the $\text{COE}$ of $2$-by-$2$ unitary matrices in the presence of TRS.
$\sigma_j^{\mu}$ is the Pauli-$\mu$ matrix acting on site $j$, and we take $(a_x, a_y, a_z) = (0.3, 0.4, 0.5)$. In the presence of TRS, half gates for the 3PM are defined as 
\be
 \label{eq:3pm_halfgate}
 u_{\half}^{(j,j+1)} = 
    \exp\left( \frac{i}{2} \sum_{\mu = x, y, z} a_\mu \sigma^\mu_j \sigma^\mu_{j+1} \right)
    \left[u^{(j)}_1 \otimes u^{(j)}_2\right]\, ,
\ee
with $u^{(j)}_i \in$ COE, such that $u^{(j,j+1)} = \left[ u_{\half}^{(j,j+1)}\right]^T u_{\half}^{(j,j+1)}$. The Floquet TRS 3PM is defined as in Fig.~\ref{fig:model_hrm_maintext} (b) with full and half gates given by Eqs.~\eqref{eq:3pm_fullgate} and \eqref{eq:3pm_halfgate}, respectively.
Floquet 3PM (without TRS) is defined similarly to Floquet TRS 3PM, except that $u_i^{(j)}$ is drawn from the CUE instead of the COE, and that the brick-wall geometry does not require a shift. See the appendix for the algebraic definitions.
In this manuscript, we focus on Floquet TRS 3PM and Floquet 3PM (no time-reversal symmetry), although one may likewise introduce 3PM with local TRS or global TRS in direct analogy with the HRM.

\subsection{Hamiltonian models}
We use a time-independent Hamiltonian model to test the applicability of the results derived from random quantum circuits. 
Specifically, we define 
\be
\ba
H= & \, H_{\text{XYZ}} + H_{\text{NNN}}
\\
H_{\text{XYZ}}=& \, \sum_{\langle i,j \rangle} \left( 
a_x \sigma^x_i \sigma^x_j + a_y \sigma^y_i \sigma^y_j + a_z \sigma^z_i \sigma^z_j   
\right)
\\
H_{\text{NNN}} =& \, 
\begin{cases}
\sum_{\langle\!\langle i,j\rangle\!\rangle} h^{\mathrm{GOE}}_{i,j} \qquad \qquad & \text{TRS} 
\\
\sum_{\langle\!\langle i,j\rangle\!\rangle} h^{\mathrm{GUE}}_{i,j}
\qquad \qquad & \text{No symmetries} 
\end{cases}
\ea
\ee
where we take $(a_x, a_y, a_z) = \frac{1}{4}(1.2,1,0.8)$ at which the model displays chaotic behaviour. $h^{\mathrm{GOE/GUE}}_{i,j}$ are random matrices drawn from the Gaussian orthogonal ensemble (GOE) and Gaussian  unitary ensemble (GUE)  respectively. $\langle i,j \rangle$ and  ${\langle\!\langle i,j\rangle\!\rangle}$ denote all pairs of nearest-neighbour sites and next-to-nearest-neighbour (NNN) sites respectively. 
By adding the NNN terms drawn from the GOE, we break all symmetries of the XXZ model except the continuous time translational symmetry and TRS. When the NNN terms are drawn from the GUE instead, we additionally break the TRS.

\subsection{Random matrix theory (RMT)}\label{sec:mm_models}
Random matrix theory (RMT) is the studies of the statistical properties of an ensemble of random matrices, and can be used to model chaotic dynamics of many-body quantum chaotic dynamics in sufficiently late time scales or sufficiently small energy scales~\cite{bohigas1984characterization}. 
In this paper, we study two classes of models in RMT. The first class is based on the circular ensembles, acts unitarily in the time  direction, and captures the late-time universal behaviuor of {\gqmbcs}. 
More specifically, the first class of models in RMT are defined via time evolution operators with unitary $u$ drawn from the circular unitary ensemble (CUE) or the circular orthogonal ensemble (COE) of $N$-by-$N$ unitaries~\cite{Haake}. 
Using diagrammatical tensor network, we define the first class of RMT that models {\gqmbcs} with local TRS in Fig.~\ref{fig:model} (d); RMT that models {\gqmbcs} with global TRS in Fig.~\ref{fig:model} (g); and RMT that models Floquet {\gqmbcs} with TRS in Fig.~\ref{fig:model} (i). The models are algebraically defined in Appendix~\ref{app:models}. 
We will refer to the RMT models of circular ensembles simply as the ``RMT''. 

The second class of RMT models is based on the Ginibre ensemble, which is an ensemble of random matrices with independent complex Gaussian matrix elements~\cite{Mehta}.
It has recently been shown that {\gqmbcs} displays universal signatures of the  Ginibre ensemble~\cite{Shivam_2023}. In turn, the Ginibre ensemble can be used to model the coarse-grained \textit{non-unitary} evolution of the {\gqmbcs} in the \textit{spatial} direction. Since this connection between Ginibre ensemble and {\gqmbcs} is relatively recent, we will devote a separate section to define and analyse them below. We will refer to the second class of RMT models explicitly as the Ginibre ensembles.

\subsection{TRS-breaking mechanisms}
Symmetry-breaking mechanisms for the global TRS are important because in the emergent classical statistical mechanical problem dual to the SFF of random quantum circuits (see below), TRS-breaking mechanisms can be used to generate external magnetic fields, by favouring local time-parallel pairings of Feynman paths over the local time-reversed pairings of Feynman paths. Here we will describe two TRS-breaking mechanisms.

The first TRS-breaking mechanism can be used in {\gqmbcs} defined by time-independent Hamiltonians or Floquet operators.  
The computation of SFF (see below) amounts to analyse a sum of many-body diagrams which can be organised by the order in $q^{-1}$, according to Lemma \ref{lemma:leading_sff_diag}, the leading local SFF diagrams are of two types: The ladder diagrams [e.g. Fig.~\ref{fig:sff_diag} (e)], corresponding to the possible time-parallel pairings of Feynman paths, i.e. between a Feynman path and its complex conjugate counterpart, both traversing in the same direction in time; and the twisted diagrams [e.g. Fig.~\ref{fig:sff_diag} (b)], corresponding to the possible time-reversed pairings of Feynman paths, i.e. between a Feynman path and its time-reversed complex conjugate counterpart. 
Suppose that there are non-TRS unitary operators appearing in the time evolution of a quantum many-body system. We observe that ladder contractions  and twisted contractions [Fig.~\ref{fig:contraction}] in SFF diagrams (and in other observables) can form a loop around the product of the non-TRS (non-symmetric) unitary operator $u_\mathrm{b}$ and its complex conjugate $u_\mathrm{b}^*$ [Fig.~\ref{fig:trs_breaking}]. For the ladder case, this leads to $\Tr[u_\mathrm{b}^T u^*_\mathrm{b} ]= \Tr[\mathbb{1}] = q$. For the twisted case, this leads to $\Tr[u_\mathrm{b} u^*_\mathrm{b}]$. Consequently, the insertion of non-TRS unitary operators generally biase one type of contractions over the other. 
This non-TRS unitary operator $u_\mathrm{b}= \exp(ih_\mathrm{b})$ can be generated from a Hermitian matrices $h_\mathrm{b}$ such that $h_\mathrm{b}^T \neq h_\mathrm{b}$ is not symmetric, and can generally be introduced in a time-independent TRS quantum many-body Hamiltonians.

\begin{figure}[ht]
    \centering
    \includegraphics[width=0.47 \textwidth]{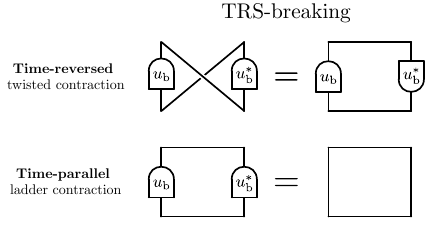}
     \caption{\textbf{TRS-breaking.} A time-independent global TRS-breaking mechanism can be constructed in quantum dynamics by noting that the ladder (time-parallel) and twisted (time-reversed) contractions [Fig.~\ref{fig:contraction}] of non-TRS (non-symmetric) unitary operator $u_\mathrm{b}$ and its complex conjugate $u_\mathrm{b}^*$  leads to $\Tr[u_\mathrm{b}^T u^*_\mathrm{b} ]= \Tr[\mathbb{1}] = q$ in the ladder or time-parallel case, and $\Tr[u_\mathrm{b} u^*_\mathrm{b} ]$ in the twisted or time-reversed case. $u_\mathrm{b}$ can be generated from a Hermitian matrices, and introduced in a time-independent TRS quantum many-body Hamiltonian. An alternative global TRS-breaking mechanism is to introduce explicit non-TRS time-dependent unitary evolution. See TRS-broken RPM model in Fig.~\ref{fig:trs_breaking_RPM}.
    }
    \label{fig:trs_breaking}
\end{figure}

The second TRS-breaking mechanism can be used in {\gqmbcs} described by time-dependent Hamiltonian or unitary time evolution operator. To this end, we introduce time-dependent gates to explicitly break the global TRS, so that the second half of the quantum circuit above time inversion axis (labelled by the dashed grey line in Fig.~\ref{fig:model} (e)) is no longer the transpose of the first half of the quantum circuit below the time inversion axis. Specifically, we define the \textit{global TRS RPM with TRS-breaking} via the tensor network illustrated in Fig.~\ref{fig:trs_breaking_RPM}, where each 1-site CUE or COE gate in global TRS RPM is now followed by a 1-site diagonal random phase gate $[u_{\mathrm{phase}}]_{a_j }^{ a'_j }=\delta_{a_j, a'_j}  \exp[i \phi^{(j)}_{a_j}(t)]$ with  $a_j= 1,2\ldots, q$. Each coefficient $\phi_{a_j}^{(j)}(t)$ is an independent Gaussian random real variable with mean zero and variance $b$, controlling the strength of the TRS-breaking mechanism.  These additional 1-site phase gates are independently drawn at each space-time coordinate, which breaks the global TRS. This is  analogous to the 2-site diagonal random phase gates in RPM. Much like the RPM, the simplicity of the model lies in the fact that these phase gates are diagonal, which implies that they are local TRS. 

Let's briefly compare the two TRS-breaking mechanisms. The first TRS-breaking mechanism utilizes unitary gates that are not local TRS, while the second mechanism relies on the insertion of time-dependent unitary gates to break TRS, and in fact these latter gates are chosen to be local TRS in Fig.~\ref{fig:trs_breaking_RPM} for simplicity. The first mechanism is suitable for a wide-range of {\gqmbcs} including time-independent Hamiltonian or Floquet many-body systems, while the second mechanism, in exchange for simplicity, can be applied only to time-dependent many-body systems.

\begin{figure}[ht]
    \centering
    \includegraphics[width=0.4 \textwidth]{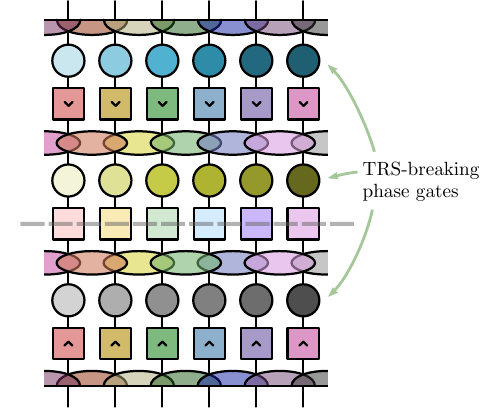}
     \caption{\textbf{Global TRS RPM with TRS-breaking mechanism} is defined identically as the global TRS RPM in Fig.~\ref{fig:model} (e), except that each 1-site CUE or COE gate in global TRS RPM is now followed by a 1-site diagonal random phase gate (analogous to the 2-site diagonal random phase gates). These additional 1-site phase gates are independently drawn at each space-time coordinate, which breaks the global TRS. The illustration is for time $t=3$. Gates with the same colour are identified.
    }
    \label{fig:trs_breaking_RPM}
\end{figure}

\section{Diagrammatical approach}\label{sec:diag}
\begin{figure}[ht]
    \centering
    \includegraphics[width=0.35 \textwidth]{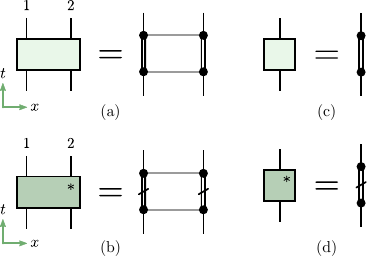}
     \caption{\textbf{Diagrammatical representation.} The diagrammatical representation of unitary matrix drawn from the COE for (a) a 2-site gate, (b) its conjugation, (c) a 1-site unitary gate, and (d) its conjugation. 
    }
    \label{fig:dictionary}
\end{figure}

\begin{figure}[ht]
    \centering
    \includegraphics[width=0.46 \textwidth]{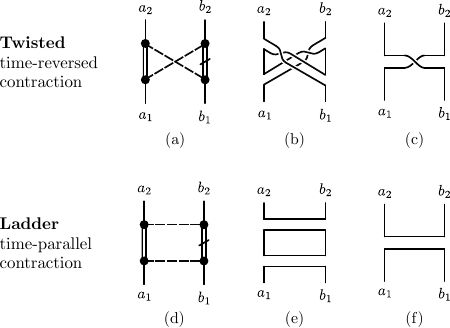}
     \caption{\textbf{Time-reversed and time-parallel contractions.} For Eq.~\eqref{eq:coe_formula} with $n=m=1$, we have $\overline{u_{a_1 a_2}u^*_{b_1 b_2}} = \wg[\mathbb{1}, N] \left(  \delta_{a_1 b_2} \delta_{a_2 b_1} +  \delta_{a_1 b_1} \delta_{a_2 b_2}  \right)$ where $u$ is drawn from the COE. (a) Diagrammatical representation of the first term, the \textit{twisted} or \textit{time-reversed contractions} with $\sigma=\text{SWAP} \in S_2$, which can be represented using 't Hooft's double line notation in (b)~\cite{tHooft1973alw}. To put it in its simplest form, the twisted term contains a twist, namely $\delta_{a_1 b_2} \delta_{a_2 b_1}$, as illustrated in (c). In contrast, the second term contains \textit{ladder} or \textit{time-parallel contractions} $\sigma=\iden_2 \in S_2$ as illustrated in (d), (e), and  (f). Note that ladder contractions appear in ensemble averages of both {\gqmbcs} with or without TRS. Note that we use the term ``twisted diagrams'' instead of ``maximally crossed diagrams'' in the literature of disordered systems. This choice emphasises  the non-planar nature of the contractions~\cite{upcoming}, and in the context of random quantum circuits,  ``crossed diagrams'' composed of ladder contractions have been used ~\cite{chan2021trans}. } 
    \label{fig:contraction}
\end{figure}

\begin{figure}[ht]
    \centering
    \includegraphics[width=0.48 \textwidth]{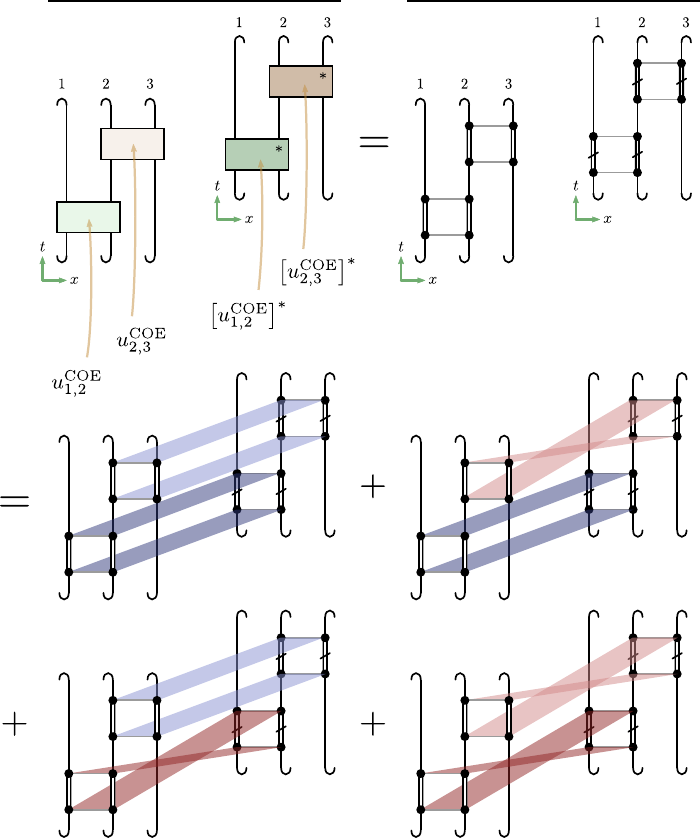}
     \caption{\textbf{Diagrammatical approach example} of $\overline{\left|\Tr \left[u^{\COE}_{2,3} u_{1,2}^{\COE}\right] \right|^2}$, where $u^{\COE}_{2,3}$ and $u^{\COE}_{1,2}$ are drawn from the COE. Here we use (blue and red) strips to denote multi-site contractions (dashed lines in Fig.~\ref{fig:contraction} for the single-site case). The contractions $(\sigma_{1,2}, \sigma_{2,3}) \in S_2 \times S_2$ for the four terms are $(\iden_2, \iden_2)$, $(\iden_2, \text{SWAP})$, $(\text{SWAP}, \iden_2)$, and $(\text{SWAP}, \text{SWAP})$ respectively. }
    \label{fig:diag_eg}
\end{figure}

Here we review the integration over $N$-by-$N$ unitary matrix $u$ and its complex conjugate $u^*$ drawn from the circular orthogonal ensemble (COE)~\cite{samuel1980wg1, creutz1978wg2, Brouwer1996, MATSUMOTO_2012}, which is given by
\be \label{eq:coe_formula}
\begin{aligned}
\overline{
[u]_{a_1 a_2} \dots 
[u]_{a_{2n-1} a_{2n}} 
[u^*]_{b_1 b_2} \dots 
[u^*]_{b_{2m-1} b_{2m}}}
\\
= \delta_{n,m} \sum_{\sigma \in S_{2n}} \wg_{\mathrm{COE}}[\sigma; N] \prod_{i=1}^{2n} \delta_{a_i, b_{\sigma(i)}} \;,
\end{aligned}
\ee
where $\wg_{\mathrm{COE}}[\sigma; N]$ is the Weingarten function of the COE, taking arguments from $\sigma \in S_{2n}$ in the symmetry group of $2n$ objects. Define $\sigma_\mathrm{o}\in S_n$ by $\sigma_\mathrm{o}(n) = \left\lceil \sigma (2n-1)/2 \right\rceil$, and similarly $\sigma_\mathrm{e}\in S_n$ by $\sigma_\mathrm{e}(n) = \left\lceil \sigma (2n)/2 \right\rceil$, where $\left\lceil  \cdot \right\rceil$ is the ceiling function. The Weingarten function is found to depend only on the cycle structure, $\{c_1, \dots, c_k \}$,  with $c_i$ the length of the $i$-th cycle of $\sigma_\mathrm{o}^{-1} \sigma_\mathrm{e}$. With an abuse of notation, we write  $\wg_{\mathrm{COE}}[\sigma; N]= \wg_{\mathrm{COE}}[\{c_1, \dots, c_k\}; N]$. The Weingarten function can be generated under a recursive relation~\cite{Brouwer1996, MATSUMOTO_2012}, which we provide in the App.~\ref{app:wg_coe}.   
As examples, for $n=1$, we have $\wg[\{ 1\}, N ]= 1/(N+1)$, and for $n=2$, we have 
$\wg[\{1, 1\}, N ]= (N+2)/N(N+1)(N+3)$, and
$\wg[\{2\}, N ]= -1/N(N+1)(N+3)$.
For our purpose, it is useful to note that $\wg[(c_1, c_2, \dots , c_k); N] = O(N^{k - 2\sum_{i=1}^k c_i})$, i.e. for fixed total loop length, $\sum_{i=1}^k c_i$, the Weingarten function is the largest in the order of $N$ when the number of cycles is maximized.   

We employ the diagrammatical approach for random matrix theory~\cite{Brouwer1996} and for quantum many-body systems~\cite{chan2018solution}, as follows:
\begin{itemize}
    \item[(i) ] Diagrammatical representation: Each random unitary matrix from the COE is represented with two dots, single lines and double lines as in Fig.~\ref{fig:dictionary}. In particular, the dots represent the indices of the unitary matrix in \eqref{eq:coe_formula}. Their complex conjugates are similarly represented except that we add a slash to distinguish the conjugation. These unitaries can act on multiple sites or qudits [Fig.~\ref{fig:dictionary} (a,b)], or a single site or qudit [Fig.~\ref{fig:dictionary} (c,d)]. Observables like the SFF are represented by connecting the single lines using the standard tensor network notation, see e.g. \cite{Or_s_2014}. 
    \item[(ii) ] Contractions: To average over $n$ pairs of unitaries and conjugates, one generates a sum of all possible contractions $\sigma \in S_{2n}$ between $2n$ indices or dots of the unitaries and $2n$ indices or dots of the conjugates. We represent the contractions with dashed lines (except for Fig.~\ref{fig:diag_eg}, where the multi-site contractions are represented using thick lines in red and blue for clarity). See Fig.~\ref{fig:contraction} for an example with $n=1$ and $\sigma \in S_2 = \{\iden_2, \text{SWAP}\}$ where SWAP is the swap permutation of two objects.
    \item[(iii) ] many-body interactions: In many-body systems, diagrammatical rules are used to account for the interactions between different sites or qudits depending on the model. For HRM with COE gates, we impose the \textit{bond constraint}, which simply refers to the fact that contractions of a given unitary matrix that act on multiple sites, have to be identical across all sites. For the RPM, a random phase gate must take on indices consistent with the diagrams of the two qudits it acts upon. 
    \item[(iv) ] Translation to algebraic terms: The diagrams are translated to algebraic terms by accounting the delta and Weingarten functions in \eqref{eq:coe_formula}. In HRM, $\wg[\sigma; q^2]$ is assigned at each set of identical random unitaries on each bond, while the delta functions typically giving rise to factors of $q^{\ell}$, with $\ell$ denoting the number of loops of single and dashed lines. For RPM, $\wg[\sigma; q]$ is assigned at each site, and upon ensemble averaging, the integral over random phases give rise to factors dependent on the variance $\epsilon$ of the random phases, see ~\cite{chan2018spectral} and examples below. 
    \item[(v)] Identifying dominant contributions: The ensemble-averaged observable is a sum of algebraic or diagrammatic contributions, organised in terms of the order in the local Hilbert space dimension $q$. The order of a diagram is the product of the orders of local diagrams associated to Haar-random unitaries, which in turn is obtained by evaluating and inspecting delta functions and Weingarten functions in Eq.~\eqref{eq:coe_formula}.  
\end{itemize}
In Fig.~\ref{fig:diag_eg}, using the diagrammatical approach, we provide a 3-site example of the evaluation of 
\be
\begin{aligned}
\overline{ \left|\Tr \left[u^{\COE}_{2,3} u_{1,2}^{\COE} \right] \right|^2}=& \; \wg_{\mathrm{COE}}[\{1\}; q^2]^2  2 q^3(q+1) 
\\
=& \; 2 q^3(q+1)/(q^2+1)^2 \,, 
\end{aligned}
\ee
where $u^{\COE}_{2,3} $ and $u^{\COE}_{1,2} $ are $q^2$-by-$q^2$ unitary matrices independently drawn from the COE, and act on pair of sites $(2,3)$ and $(1,2)$ respectively. 
The distinctiveness of the physics of TRS {\gqmbcs} lies in the existence of \textit{twisted} or \textit{time-reversed contractions}, as illustrated in Fig.~\ref{fig:contraction} (a-c), and the many-body interactions between twisted diagrams and ladder diagrams, leading to interesting features in SFF and correlation functions, as we will see below.

\section{Spectral form factor}\label{sec:sff}
The \textit{spectral form factor} (SFF) is the Fourier transform of the two-level correlation function, which describes the probability of finding two eigenenergies separated by a certain distance in the energy spectrum.
The SFF is arguably the simplest analytically tractable quantity to capture the universal spectral fluctuation of quantum systems up to an arbitrary energy scale, and consequently, SFF has been instrumental in multiple frontiers of physics, such as the semi-classical approach to quantum chaos~\cite{Hannay1984PeriodicOA, berry1985semiclassical, Sieber_2001, muller_2004, muller_2005}, the studies of black holes~\cite{kitaev2015simple, garciagarcia2016, Cotler_2017, Gharibyan_2018, Saad2019semiclassical, Altland_2021, Saad_2022kfe}, many-body quantum chaos~\cite{chan2018solution, kos_sff_prx_2018, chan2018spectral, bertini2018exact} and its transition to prethermal many-body localization (MBL)~\cite{basko2006metal, oganesyan2007localization, chaoschallengeMBL, prakash2021}, dynamics in open quantum systems~\cite{li2021spectral, fyodorov1997, dsff2022kulkarni,
Garc_a_Garc_a_2023dsff, li2024spectral}, and more \cite{Cotler_2017complexity, del_Campo_2017, ALTLAND201845, Liu_2018, liaogalitski2020, chan2020lyap, roy2020random, SUNTAJS2021168469, prakash2021, winer2022hydrodynamic, sfffilter2022delcampo, Winer_2022glass, santos2022, suntajs_2022, Dag2023, Fritzsch_2023rxt, yoshimura2023operator, vikram2024exact, das2024proposal, fritzsch2024eigenstatecorrelationsdualunitaryquantum}.  Recently, signatures of quantum chaos in SFF have been experimentally measured in quantum simulators for up to five qubits~\cite{Dong_2025}.
Formally, the SFF can be defined as 
\be\label{eq:sff_def}
K(t):=  \sum_{a,b} e^{i(E_a-E_b)t} = \,  
\left| \mathrm{Tr}\, U(t) 
\right|^2 \, ,
\ee
where $\{ E_{a} \}$ are the (quasi-)energies of the system, and $\overline{(\dots )}$ denotes the ensemble average over statistically-similar systems. It is useful to adopt a generalized formulation of SFF such that  the SFF
is defined directly through the time evolution operator $U$ of a system of interest. This formulation of SFF allows the study of temporal-random systems, which in turn allows us to isolate important physical aspects of global TRS, see below. 
 Note that we adopt the convention where $K(t)$ is normalized such that trivially $K(0)=\Nhil^2$ for all models we consider, with $\Nhil$ the dimension of the Hilbert space.

We will show that the SFF of a minimal model of  {\gqmbcs} with TRS can be mapped to the partition functions $Z_{\mathrm{classical}}$ of certain  classical spin chains described by the Hamiltonian $H_{\text{classical}}$ in the limit of large local Hilbert space dimension $q$, i.e.
\begin{equation}\label{eq:general_mapping}
\ba
 &   \lim_{q \to \infty}
    \overline{K_{\mathrm{quantum}}} 
    = \, \, Z_{\mathrm{classical}}  \,,
    \\
&Z_{\mathrm{classical}} =  \sum_{\{\sigma_i\}} e^{-\beta H_{\text{classical}}(\{ \sigma_i\})}  \,,
    \ea
\end{equation}
where $\{\sigma_i\}$ are the classical degrees of freedom. It is also often convenient to compute the partition function  $Z_{\mathrm{classical}} =\sum_{\{\sigma_i\}} \prod_{\langle i,j\rangle} \mathcal{B}^{\text{classical}}_{\sigma_i, \sigma_j}$ as a product over pairs of coupled sites $\langle i,j\rangle$ using the Boltzmann factor $\mathcal{B}^{\text{classical}}_{\sigma_i, \sigma_j}$.
This approach was first introduced in \cite{chan2018spectral} for a Floquet many-body quantum chaotic system without symmetries, which enabled the analytical derivation of the existence of an extensive ``bump'' region in generic quantum many-body chaos, and subsequently observed and investigated in \cite{friedman2019, Gharibyan_2018, moudgalya2020spectral, chan2020lyap, chaoschallengeMBL, chan2021trans, Garratt2021prx, yoshimura2023operator, Dag2023}.

\subsection{SFF diagrams}

\begin{figure*}[ht]
    \centering
    \includegraphics[width=1 \textwidth]{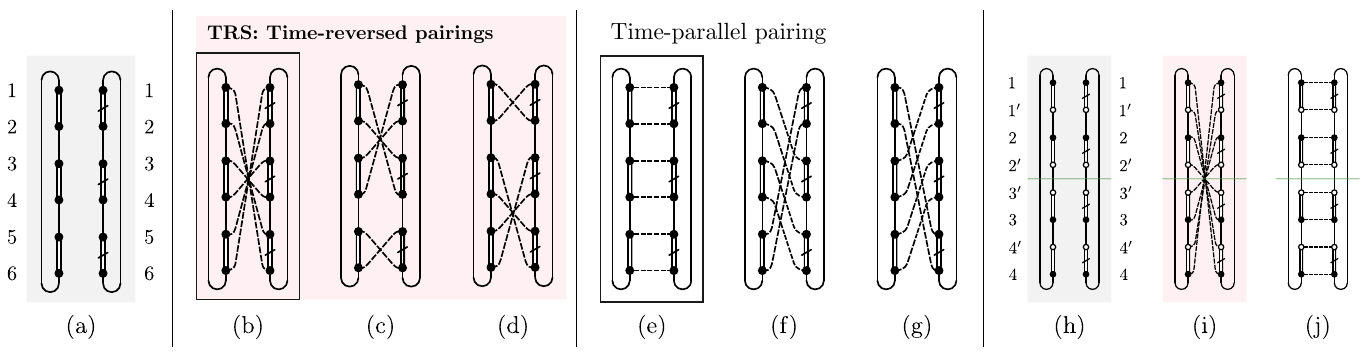}
    \caption{\textbf{SFF diagrams.} (a) The diagrammatical representation of a local diagram of SFF, $K(t=3)$, at $t=3$ for quantum many-body circuit  before ensemble averaging, for all cases of TRS dynamics except for global TRS without local TRS dynamics. 
    For \textbf{Floquet TRS} chaotic systems, the leading local diagrams of SFF after ensemble averaging are given by the twisted and ladder diagrams in (b-g), while for Floquet chaotic models without TRS, the leading diagrams are given by only the ladder diagrams in (e-g).
    For \textbf{global TRS w/ local TRS} chaotic systems, the leading local diagrams are (b) and (e). Note that for temporal-random systems without TRS, the leading local diagram is (e). 
    For \textbf{global TRS w/o local TRS} chaotic systems,  the diagrammatical representation of SFF for $K(t=4)$ and its leading diagrams are given in (h) and (i) to (j) respectively.
    For \textbf{local TRS} chaotic systems, the leading diagram is simply the ladder diagram (e). 
    The boxed diagrams, (b) and (e), form the emergent effective Ising degrees of freedom  for global TRS chaotic systems.}
    \label{fig:sff_diag}
\end{figure*}

In this subsection, we determine the leading and leading subleading local SFF diagrams in the order of local Hilbert space dimension, which will allow us to compute the leading contributions to SFF in the many-body setting later. 
Under the diagrammatical approach introduced above, the ensemble-averaged SFF is a sum of many-body SFF diagrams, whose order is the product of the orders of local SFF diagrams. 
Each local SFF diagram is represented as a permutation $\sigma$ in \eqref{eq:coe_formula}, representing the contractions between the dots of $\Tr[U(t)]$ and those of $\Tr[U^\dagger(t)]$ with a convention given in Fig.~\ref{fig:sff_diag} (a) for the Floquet TRS RPM.

\begin{lemmaalt} \label{lemma:leading_sff_diag}
\textbf{Leading local spectral form factor (SFF) diagrams for generic quantum many-body  chaotic systems with TRS.} Consider the SFF \eqref{eq:sff_def} for (i) TRS random phase model (RPM), (ii) TRS Haar-random model (HRM), and (iii) the random matrix model (RMT) for TRS {\gqmbcs}.
\begin{itemize}
    \item \textbf{Floquet TRS (TRS with discrete time translational symmetry):}
    In the order of the local Hilbert space dimension $q$ for (i) and (ii), and in matrix dimension $N$ for (iii), the $2t$ leading SFF diagrams of order $O(1)$ are given by $t$ twisted diagrams [Fig.~\ref{fig:sff_diag} (b-d)],
\be\label{eq:twisted_sff}
\sigma_{\mathrm{twisted}}^{(m)}(i) = (m - i +1) \Mod{2t}  \quad \in S_{2t}
 \,, 
\ee
for $m=2,4, \dots, 2t$, and $t$ ladder diagrams [Fig.~\ref{fig:sff_diag} (e-g)],
\be\label{eq:ladder_sff}
\sigma_{\mathrm{ladder}}^{(m)}(i) = (m + i -1) \Mod{2t} \quad \in S_{2t}
 \,, 
\ee
where $m=1,3,\dots, 2t-1$ with $i=1,2, \dots, 2t$ labelling the dots in Fig.~\ref{fig:sff_diag}(a).
    \item \textbf{Global TRS:} The two leading SFF diagrams of order $O(1)$ are given by $ \sigma_{\mathrm{twisted}}^{(m=2t)}$ and $\sigma_{\mathrm{ladder}}^{(m=1)}$.

    \item \textbf{Local TRS:} The single leading SFF diagram of order $O(1)$ is given by $\sigma_{\mathrm{ladder}}^{(m=1)}$.
\end{itemize}

\end{lemmaalt}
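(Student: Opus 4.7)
The strategy is to apply the Weingarten formula \eqref{eq:coe_formula} together with the observation (stated just below it) that $\wg_{\mathrm{COE}}[\{c_1,\dots,c_k\};N] = O(N^{k-2\sum_i c_i})$, so that for a fixed total number of indices the Weingarten weight is maximized when the cycle count $k$ is maximal, i.e.\ when every cycle has length $1$. Combining this with delta-function loop counting from the trace structure of $|\Tr U(t)|^2$ will identify which permutations $\sigma\in S_{2t}$ contribute at order $O(1)$ in $q$ (or $N$). The underlying picture is that each closed line in the diagram yields a factor of the local dimension, while each Weingarten cycle of length $c$ yields a suppression $N^{-2c+1}$; the two effects compensate exactly when the cycles are all length-$1$ and the lines close maximally, giving an $O(1)$ diagram.

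First, I would set up the diagrammatic bookkeeping. For the Floquet case, all $t$ time-step gates at a site are identified, so averaging over a single Haar gate produces a Weingarten sum over $\sigma \in S_{2t}$ connecting the $2t$ dots from $\Tr[U]$ with the $2t$ dots from $\Tr[U^\dagger]$; the two traces provide cyclic boundary conditions forming closed single-line loops whose count depends on $\sigma$. I would then verify directly that the ladder permutations \eqref{eq:ladder_sff} and the twisted permutations \eqref{eq:twisted_sff} each correspond to cycle structure $\{1,1,\dots,1\}$ of $\sigma_\mathrm{o}^{-1}\sigma_\mathrm{e}$, producing $t$ cycles with total length $t$, so that $\wg_{\mathrm{COE}}\sim N^{-t}$; these same permutations produce $t$ closed line loops, contributing $N^{t}$, for a net $O(1)$. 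The parameter $m$ in both families labels the $t$ possible cyclic shifts in the temporal direction, reflecting the discrete time-translation symmetry.

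Next, I would argue that no other $\sigma\in S_{2t}$ reaches $O(1)$. This can be framed as an inequality: any cycle of length $c>1$ in $\sigma_\mathrm{o}^{-1}\sigma_\mathrm{e}$ costs $N^{-2c+1}$ from Weingarten, and the corresponding ``non-planar'' crossing costs at least as many line loops as it saves, so the net power is strictly negative. A concise way to see this is to note that the ladder and twisted families saturate the trivial upper bound $\sum_i c_i \geq k$ with equality only when all cycles are length one, and that among such $\sigma$'s, only permutations compatible with the cyclic trace identification (i.e.\ cyclic shifts of the two base patterns) close all lines; all other length-one-cycle permutations leave some lines open, losing factors of $N$.

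For the global TRS and local TRS cases, I would argue that the half-gate structure (respectively, the independence of each gate across time) reduces the symmetry of the contraction sum. Under global TRS, the half-gate at the top and bottom boundaries fixes the temporal phase of the allowed ladder/twisted shifts to $m=1$ (ladder) and $m=2t$ (twisted); all other shifts incur a boundary mismatch costing powers of $N^{-1}$. Under local TRS, where each gate is drawn independently, the Weingarten average factorizes per time slice into $S_2$ sums, each of which has one identity and one swap; the unique product of identities yields the trivial ladder $\sigma_{\mathrm{ladder}}^{(1)}$ at $O(1)$, while any swap produces a local twist that cannot close into a global loop at leading order without global identification of gates across time. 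The main technical obstacle will be the negative part of step two, namely the careful enumeration showing that \emph{no} permutation outside the enumerated families attains $O(1)$; I would address this uniformly by expressing the order of a diagram as $N^{\ell - 2L + k}$, with $\ell$ the loop count, $L$ the total cycle length and $k$ the cycle count, and showing $\ell - 2L + k \leq 0$ with equality precisely on the families in \eqref{eq:twisted_sff}--\eqref{eq:ladder_sff} (restricted by boundary conditions in the global/local TRS cases).
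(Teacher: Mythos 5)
Your approach is essentially the paper's: average via the COE Weingarten formula \eqref{eq:coe_formula}, count closed line loops and Weingarten cycles, and use that the overall order is $\mathcal{N}^{\ell + k - 2t}$ (with $\ell \le t$ line loops and $k \le t$ cycles) so that $O(1)$ requires all loops and all cycles to have length one. Your cleaner inequality formulation $\ell - 2L + k \le 0$ is a nice compact restatement of the bound the paper uses verbally. The positive direction (verifying that $\sigma_{\mathrm{twisted}}^{(m)}$ and $\sigma_{\mathrm{ladder}}^{(m)}$ each have $\sigma_\mathrm{o}=\sigma_\mathrm{e}$ and $t$ length-one line loops) is fine.

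The gap is in the negative direction. Your proposal to ``show $\ell - 2L + k \le 0$ with equality precisely on the families in \eqref{eq:twisted_sff}--\eqref{eq:ladder_sff}'' restates the conclusion rather than proves it: among the $\sigma \in S_{2t}$ with $\sigma_\mathrm{o}=\sigma_\mathrm{e}$ (of which there are $2^t\, t!$), you must still establish that exactly $2t$ of them close all line loops. The paper does this constructively in Appendix~\ref{app:sff_diag_proof}: once $\sigma(1)=m$ is fixed, the requirement of forming a length-one Weingarten cycle forces $\sigma(2)$, then the requirement of a length-one line loop forces $\sigma(3)$, and so on iteratively, so $\sigma$ is uniquely determined by $m\in\{1,\dots,2t\}$, and even/odd $m$ give twisted/ladder respectively. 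Your argument about ``non-planar crossings costing at least as many line loops as they save'' is not needed once this forcing argument is in hand, and as stated it is not obviously true because $\ell$ and $k$ are independently bounded rather than anticorrelated. You also do not address the HRM bond constraint: there, a single-site diagram involves two sets of dots from gates on bonds $(i-1,i)$ and $(i,i+1)$, and the contractions live in a constrained $S_{2t}\times S_{2t}$ that the paper treats as a joint permutation in $S_{4t}$ with a bond-compatibility condition. For global TRS, your half-gate intuition is in the right direction, but the paper makes it precise by folding the circuit across the TRS axis [Fig.~\ref{fig:fold}] so that the identified gates stack and the contraction becomes an $S_2$ choice per site; you should adopt that picture to make the ``boundary mismatch'' argument rigorous.
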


\noindent\textit{Proof.} The proof and technical details on the parametrisation of these permutations for different models are given in Appendix~\ref{app:sff_diag_proof}.

\vspace{0.25cm}

As examples, for the twisted diagrams in SFF for Floquet TRS RPM  on each site at $t=3$, see Fig.~\ref{fig:sff_diag} (d), (c), and (b)  for  $\sigma_{\mathrm{twisted}}^{(m)}$ with $m=2,4,$ and $6$ respectively. For the ladder SFF diagrams of the same models, Fig.~\ref{fig:sff_diag} (e), (f), and (g)  for  $\sigma_{\mathrm{ladder}}^{(m)}$ with $m=1,3,$ and $5$ respectively. 

Note that for SFF diagrams for global TRS RPM without local TRS, the parametrization of the permutations needs to be modified as in Fig.~\ref{fig:sff_diag} (h). For HRM, a local SFF diagram at site $j$ requires additional colouring of dots from unitary gates acting on the bonds $(j,j-1)$ and $(j,j+1)$. See Sec.~\ref{sec:diag} and Appendix~\ref{app:sff_diag_proof} for details.

The leading local SFF diagrams, Eq.~\eqref{eq:twisted_sff} and \eqref{eq:ladder_sff}, coincide with contributions identified through the diagonal approximation in the periodic orbit theory~\cite{Hannay1984PeriodicOA, berry1985semiclassical}.
It is important to emphasise that these diagrams appear for \textit{local} quantum degrees of freedom, i.e. a single site for RPM or a single bond for HRM. Different subregions of the system can take different leading SFF diagrams, and the many-body interactions among these diagrams give rise to signatures of quantum many-body chaos.
Indeed, we will see that the onset of RMT behaviour in SFF, parametrised by the many-body Thouless time, is located when  the SFF is dominantly contributed by many-body SFF diagrams where all local SFF diagrams are identical.
Before the Thouless time, different subregions of the system can take on different values of local leading diagrams, and effectively behave like patches of RMT~\cite{chan2018spectral}. These contributions give rise to the bump before the Thouless time on top of the linear ramp, as we will derive in Sec.~\ref{sec:sff_manybody}.

Now we introduce a natural definition of irreducible SFF diagrams and topological equivalence classes of SFF diagrams, which allows one to compactly describe leading SFF diagrams. In addition, these definitions allow one to identify and prove the leading subleading SFF diagrams for {\gqmbcs} with TRS, which will be discussed in detail in an upcoming work~\cite{upcoming}.  Readers who are mainly interested in the non-technical results may skip to the next section. 

\begin{definitionalt}
    \textbf{Irreducible SFF diagrams.} An irreducible SFF diagram of an SFF diagram $\sigma \in S_t$ collapses a sequence of twisted contractions, i.e. $\sigma(i)= (m-i) \Mod{t}$  for some connected domain of $i \in [a,a+r]$ to a single twisted irreducible contraction, 
    as illustarted in Fig.~\ref{fig:sff_diag_v2} (a); and collapses a sequence of ladder contractions $\sigma(i)=(m'+i) \Mod{t}$ for some connected domain of $i \in [a',a' +r']$  to a single irreducible contraction as illustrated in Fig.~\ref{fig:sff_diag_v2} (b). The irreducible SFF diagram is labelled by a permutation $\tilde{\sigma} \in S_n$ of $n$ irreducible contractions with an additional vector $v \in \mathbb{Z}_2^n$ labelling whether the irreducible contractions are twisted or not. 
\end{definitionalt}

\noindent \textit{Examples.} Here we use the notation where $\sigma \in S_n$ is defined by $\sigma(\vec{a}) = (\sigma(a_1), \sigma(a_2), \dots, \sigma(a_n))$. Consider the SFF diagram $\sigma \in S_6$ in Fig.~\ref{fig:sff_diag} (b) defined by $\sigma(\vec{a}) =(6,5,4,3,2,1)$. Its irreducible SFF diagram is given in Fig.~\ref{fig:sff_diag_v2} (c) labelled by $\tilde{\sigma}\in S_1$ with $\tilde{\sigma}(1)= 1$ and $v=(1)$. Consider the SFF diagram  $\sigma \in S_6$ in Fig.~\ref{fig:sff_diag} (e) defined by $\sigma(\vec{a}) =(1,2,3,4,5,6)$. Its irreducible SFF diagram is given in Fig.~\ref{fig:sff_diag_v2} (d) labelled by $\tilde{\sigma}\in S_1$ with $\tilde{\sigma}(1)= 1$ and $v=(0)$. Consider rightmost SFF diagram $\sigma \in S_6$  in Fig.~\ref{fig:sff_diag_v2} (e) defined by $\sigma(\vec{a}) =(5,3,2,1,6,4)$. Its irreducible SFF diagram is given in the leftmost diagram in Fig.~\ref{fig:sff_diag_v2} (e) labelled by $\tilde{\sigma}\in S_2$ with $\tilde{\sigma}(1)= 1$ and $\tilde{\sigma}(2)= 2$, with $v=(0,1)$.

\begin{figure*}[ht]
    \centering
    \includegraphics[width=1 \textwidth]{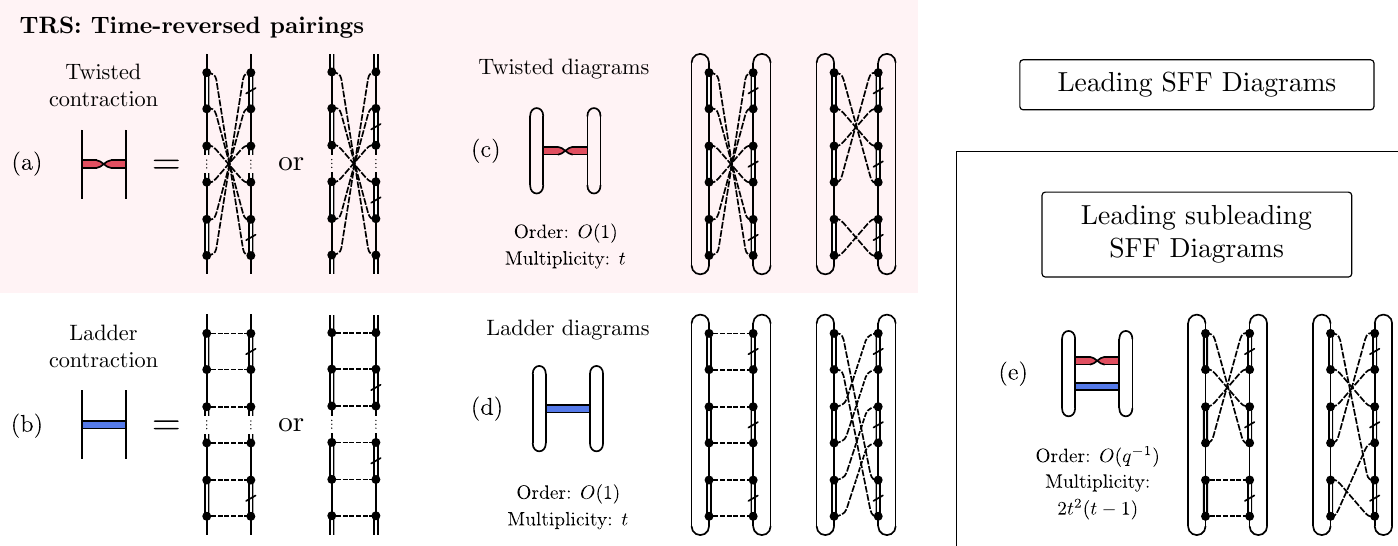}
    \caption{\textbf{Leading and leading subleading SFF  diagrams} can be expressed using the notation of (a) twisted contraction and (b) ladder contraction. The leading diagrams are (c) twisted diagrams with order $O(1)$ and multiplicity $t$; and (d) ladder diagrams with order $O(1)$ and multiplicity $t$. (e) The leading subleading diagram has a pair of twisted and ladder contraction with order $O(q^{-1})$ and multiplicity $2t^2 (t-1)$. On the right of (c), (d) and (e) two examples of each diagram are provided. The leading subleading diagram is proven and discussed in detail in an upcoming work~\cite{upcoming}.
    }
    \label{fig:sff_diag_v2}
\end{figure*}

\begin{definitionalt}\label{app_def:topo}
    \textbf{Topological equivalence classes of SFF diagrams.} Two SFF diagrams $\sigma, \sigma' \in S_n$ belong to the same \textit{topological equivalent class} if $\sigma'$ can be transformed from $\sigma$ under a sequence of operations $A_i$, i.e. $\sigma' = \prod_i A_{i} [\sigma]$. The permitted operations $A_i$ are
    \begin{itemize}
        \item[(i)] Cyclic rotation of $\Tr[U(t)]$: $A^{(r)}[\sigma](i)= \sigma((i+r)\Mod{n})$ for all $i$ and any $r$; 
        \item[(ii)] Reflection between $\Tr[U(t)]$ and $\Tr[U^\dagger(t)]$: $A[\sigma](i) =\sigma^{-1}(i)$; and 
        \item[(iii)] Twist of $\Tr[U^\dagger(t)]$ with respect to $\Tr[U^\dagger(t)]$: $A[\sigma](i)= \sigma((n+1-i)\Mod{n})$ for all $i$.
    \end{itemize}
\end{definitionalt}

\noindent \textit{Examples.} For (c) in Fig.~\ref{fig:sff_diag_v2}, we provide two diagrams topologically equivalent to the irreducible SFF diagrams labelled by $\sigma = \iden \in S_1$ with $v=(1)$. Similarly, for (d) in Fig.~\ref{fig:sff_diag_v2}, we provide two diagrams topologically equivalent to the irreducible SFF diagrams labelled by $\sigma = \iden \in S_1$ with $v=(0)$.

Under (iii) in \ref{app_def:topo}, the SFF diagrams in  Fig.~\ref{fig:sff_diag_v2} (c) and Fig.~\ref{fig:sff_diag_v2} (d) belong to the same equivalence class. However, to emphasise the many-body interactions between these diagrams (see later), we will discuss these diagrams separately. 
Note also that cyclic rotation of $\Tr[U^\dagger(t)]$ can be performed by using a combination of (i) and (ii).

\begin{lemmaalt}\label{app_lemma:sublead}
    \textbf{Leading subleading SFF diagrams with TRS and time translational symmetry.}  Consider the SFF \eqref{eq:sff_def} for (i) Floquet TRS random phase model (RPM), (ii) Floquet TRS Haar-random model (HRM), and (iii) the random matrix model (RMT) for Floquet TRS {\gqmbcs}.
For (i) and (ii) at each physical site in the order of the local Hilbert space dimension $q$, and for (iii)  in the order of matrix dimension $N$, the leading subleading SFF diagrams are of order $O(q^{-1})$ for (i) and (ii), and $O(N^{-1})$ for (iii). The leading subleading SFF diagrams are the topological equivalent class equivalent to
\be\label{app_eq:sublead}
\tilde{\sigma}= \iden \in S_2  \text { with } v=(0,1)\,,
\ee
as illustrated in Fig.~\ref{fig:sff_diag_v2} (e).
\end{lemmaalt}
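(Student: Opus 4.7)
The plan is to classify SFF diagrams by their irreducible reductions and bound their orders in $q^{-1}$ using the asymptotic behaviour of the Weingarten function, $\wg_{\mathrm{COE}}[\{c_1,\ldots,c_k\};N]=O(N^{k-2\sum_i c_i})$. Concretely, I would begin by writing any local SFF diagram $\sigma\in S_{2t}$ in the reduced form $(\tilde\sigma,v)$, where $\tilde\sigma\in S_n$ labels the permutation of $n$ irreducible contractions and $v\in\mathbb{Z}_2^n$ records whether each irreducible contraction is of twisted or ladder type. Lemma~\ref{lemma:leading_sff_diag} already covers the $n=1$ case, giving order $O(1)$ with multiplicity $t$ twisted and $t$ ladder, so the analysis here reduces to $n\geq 2$.

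Next, for each irreducible reduction $(\tilde\sigma,v)$ I would express the contribution to the averaged SFF as a product of (i) a Weingarten factor controlled by the cycle structure of $\tilde\sigma_{\mathrm{o}}^{-1}\tilde\sigma_{\mathrm{e}}$ on the reduced $n$-object permutation, and (ii) loop factors $q^{\ell}$ coming from delta functions associated with the closed single- and dashed-line loops in the tensor network. A careful loop count in 't Hooft's double-line representation (Fig.~\ref{fig:contraction}) shows that introducing one additional irreducible contraction, relative to the $n=1$ leading configurations, removes exactly one closed loop and thus costs a factor of $q^{-1}$. Any further crossing of the irreducible diagram (i.e., $\tilde\sigma\neq\iden$), or any additional irreducible contraction beyond $n=2$, introduces a further suppression, so the only candidates at order $O(q^{-1})$ lie in the $n=2$, $\tilde\sigma=\iden\in S_2$ sector.

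Within this sector, $v$ can in principle take the values $(0,0)$, $(1,1)$, or $(0,1)$, where cyclic reordering identifies $(0,1)\sim(1,0)$. The configurations $v=(0,0)$ and $v=(1,1)$ must be excluded: two adjacent irreducible contractions of the same type would by definition collapse back into a single irreducible contraction, reducing $n$ to $1$ and contradicting irreducibility. This leaves $v=(0,1)$ as the unique topological equivalence class of Definition~\ref{app_def:topo} that contributes at order $O(q^{-1})$, establishing Eq.~\eqref{app_eq:sublead}.

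Finally, I would count the multiplicity by parametrising the ladder and twisted irreducible segments by offsets analogous to $m$ in Eqs.~\eqref{eq:twisted_sff}--\eqref{eq:ladder_sff}, each contributing $t$ choices, together with a cyclic choice of the break position between the two segments on the SFF diagram, and then removing the overcounted configurations whose offsets align in such a way that the diagram collapses back to a leading $n=1$ diagram; this bookkeeping yields the stated multiplicity $2t^2(t-1)$. The main obstacle is the loop-counting step of the second paragraph: one must verify that no accidental cancellation restores a factor of $q$, and that for $n\geq 3$ or for $n=2$ with $\tilde\sigma\neq\iden$ the contribution is strictly $O(q^{-2})$ or smaller. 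This in general requires a topological argument on the associated ribbon graph rather than a naive cycle count, which is consistent with the detailed proof being deferred to the companion work~\cite{upcoming}.
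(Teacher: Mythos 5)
The paper does not actually prove Lemma~\ref{app_lemma:sublead}: immediately after the statement it defers the proof to the companion work~\cite{upcoming}, so there is no in-paper argument to compare your proposal against. Your proposal therefore has to stand on its own, and its overall scaffolding --- classify local diagrams by irreducible reductions $(\tilde\sigma,v)$ and bound orders via $\wg_\COE[\{c_1,\dots,c_k\};N]=O(N^{k-2\sum_i c_i})$ together with a delta-function loop count --- is the natural framework and matches the machinery the paper itself sets up.

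There is, however, a genuine gap: your exclusion of $v=(0,0)$ and $v=(1,1)$ from the $n=2$, $\tilde\sigma=\iden$ sector rests on a misreading of the Definition of irreducible SFF diagrams. You assert that two adjacent irreducible contractions of the same type ``would by definition collapse back into a single irreducible contraction,'' but the collapsing rule only fuses a connected domain on which $\sigma$ acts as a ladder (respectively twisted) contraction with a \emph{fixed} offset $m'$. Two adjacent ladder segments with distinct offsets $m_1'\neq m_2'$, or two adjacent twisted segments with distinct offsets, are legitimate $n=2$ irreducible configurations and do not reduce to $n=1$. Excluding them from the $O(q^{-1})$ sector therefore requires a quantitative argument --- either that each such diagram appears only at $O(q^{-2})$ or lower, or that their contributions cancel when summed over offsets. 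That cancellation is nontrivial: for CUE (no TRS) the ramp $K(t)=t$ is exact for $t\le N$ with no $O(N^{-1})$ correction, which is precisely the statement that the ladder--ladder $n=2$ diagrams conspire to vanish at that order; an analogous cancellation for the COE Weingarten weights has to be demonstrated, not invoked. You correctly flag the ribbon-graph loop count as an unresolved obstacle, but the same-type exclusion is where the argument as written is wrong rather than merely incomplete, and without it the identification of $v=(0,1)$ as the unique leading subleading class does not follow.
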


\noindent\textit{Proof.} The proof for \ref{app_lemma:sublead}, for higher order subleading SFF diagrams are provided in an upcoming work~\cite{upcoming}.

\noindent \textit{Examples.} In Fig.~\ref{fig:sff_diag_v2} (e), we provide two diagrams belonging to the topological equivalence class of the subleading irreducible SFF diagram in Eq.~\eqref{app_eq:sublead}.
\vspace{0.2cm}

The leading subleading local SFF diagrams in Eq.~\ref{app_eq:sublead} and Fig.~\ref{fig:sff_diag_v2} (e)  indeed coincide with the Sieber-Ritchler pairs in the studies of periodic-orbit theory~\cite{Sieber_2001}, and ~\cite{upcoming} will discuss the signatures of such leading subleading SFF diagrams in the presence of quantum many-body interactions.

\subsection{Many-body quantum chaos}\label{sec:sff_manybody}
In this subsection, we systematically evaluate the SFF in the TRS random matrix models (RMT), random phase models (RPM) and  Haar-random models (HRM).
\subsubsection{RMT}
The Bohigas-Giannoni-Schmidt conjecture postulates that random matrix theory (RMT) behaviour emerges from chaotic system for sufficiently late time scales and sufficiently small energy scales. 
Consequently, the SFF behaviour of {\gqmbcs} converge to the SFF behaviour of  RMT at late time, and as we see below.
We define  the RMT models for different TRS dynamics  in Fig.~\ref{fig:model} and  App.~\ref{app:matrixmodel}. The SFF for RMT for local TRS and global TRS {\gqmbcs} are given by
\be\label{eq:rmt_non_floq}
\overline{K_{\text{TRS}}^{\mathrm{RMT}}(t, N)} 
=\begin{cases}
1 \, ,  \quad & \text{RMT for local TRS {\gqmbcs}}\,, \\
2 \frac{N}{N+1}  \, ,  \; \;  & \text{RMT for global TRS {\gqmbcs}}\,, 
\end{cases} 
\ee
for $t \neq 0$. The case of RMT for local TRS {\gqmbcs} can be trivially computed, while the derivation for RMT for global TRS {\gqmbcs} is given in App.~\ref{app:global_trs_mm_sff}. 
The SFF for RMT for Floquet TRS {\gqmbcs} is given by~\cite{Haake}, 
\be
\ba \label{eq:rmt_floq}
\nonumber
&\overline{K_{\text{f-TRS}}^{\mathrm{RMT}}(t, N)} \equiv 
\overline{K_{\text{COE}}^{\mathrm{RMT}}(t, N)} 
\\
\quad  =&
\begin{cases}
2t - t \sum_{m=1}^t \frac{1}{m + (N-1)/2} \, ,  \quad & 1 \leq t \leq N \,, \\
2N - t \sum_{m=1}^N \frac{1}{m + t - (N+1)/2} \, ,  \quad & N \leq t \,, 
\end{cases} 
\ea
\ee
for finite $N$ and $t\neq 0$, and the infinite-$N$ expression can be found in~\cite{Haake}.

\subsubsection{Local TRS}
In the absence of TRS or other symmetries, the SFF \eqref{eq:sff_def} of temporal-random many-body quantum circuits systems is trivially and exactly computed as unity~\cite{chan2021trans}.
In the presence of local TRS, the SFF for temporal-random systems can exceed unity, due to the additional twisted contractions that are local in space and time. However, such local twisted contractions are suppressed in $q$. 
In the limit of large-$q$, for the local TRS HRM, we obtain exactly
\be
\lim_{q \to \infty} \overline{K^{\text{HRM}}_{\text{l-TRS}} (t,L)}
=1 \,,
\ee
coinciding with the RMT result in Eq.~\eqref{eq:rmt_non_floq}.
This result is derived by identifying a single leading order many-body SFF diagram, whose local SFF diagram is given by $\sigma^{(m=1)}_{\mathrm{ladder}}$ according to Lemma~\ref{lemma:leading_sff_diag}.
For the local TRS RPM, a similar reasoning leads to
\be
\lim_{q \to \infty} \overline{K^{\text{RPM}}_{\text{l-TRS}}(t,L)}
=1 \,.
\ee
In Fig.~\ref{fig:sff_gtrs_num}(a) inset, we provide the finite-$q$ numerics of the local TRS HRM, and we observe that the local TRS SFF at finite-$q$ converges to the large-$q$ results after a time independent of the system sizes.

\begin{figure*}[ht]
    \centering
    \includegraphics[width=1\textwidth]{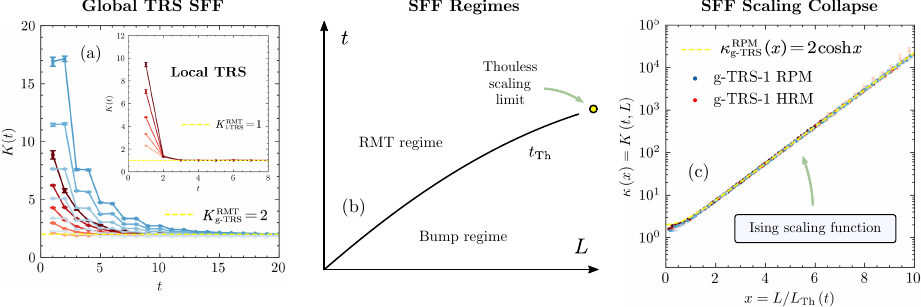}
    \caption{\textbf{SFF for Global and local TRS quantum many-body systems.} 
 (a) $K(t)$ versus $t$  for the 1D global TRS HRM (red) for $q=2$ and $L= 4,6,8,10,12$ and  1D global TRS RPM (blue) for $q=3$ and $L= 3, 4,5,6,7,8$  in increasingly dark shades. Both models have pbc and do not have local TRS. For both cases, the SFF converge to the RMT behaviour $K_{\text{g-TRS}}^\mathrm{RMT}\approx 2$ after the Thouless times, which increase in system size. (a) Inset:  $K(t)$ versus $t$  for 1D local TRS HRM for $L=4,6,8,10,12$ in increasingly dark shades. (b) SFF regimes in time $t$ and system size $L$ for {\gqmbcs} with  global TRS. In the RMT regime, the SFF of the spatially-extended {\gqmbcs} exhibit the RMT behaviour, while in the bump regime, the SFF takes values higher than the RMT ones due to the effects of many-body interactions. The Thouless scaling limit is the limit where $t$ and $L$ are sent to infinity, while $x=L/\Lth(t)$ is kept fixed. (c) Global TRS SFF scaling function $\kappa_{\mathrm{g-TRS}}(x) = K_{\mathrm{g-TRS}}(t,L)$ versus $x=L/\Lth(t)$ for the same models and parameters in (a). The scaling collapse for global TRS SFF numerical data obtained at finite $q$ shows excellent agreement with the Ising scaling function \eqref{eq:ising_scaling} obtained at infinite $q$. The scaling collapse for global TRS SFF with local TRS is provided in App.~\ref{app:numerics} and also shows agreement with the prediction.}
    \label{fig:sff_gtrs_num}
\end{figure*}

\subsubsection{Global TRS}

Here we evaluate that the SFF of global TRS HRM and RPM. According to Lemma~\ref{lemma:leading_sff_diag}, in the presence of global TRS, there are two leading local SFF diagrams  $\sigma_{\mathrm{twisted}}^{(m=2t)}$  [Fig.~\ref{fig:sff_diag} (b)]  or $\sigma_{\mathrm{ladder}}^{(m=1)}$  [Fig.~\ref{fig:sff_diag} (e)]  regardless of the presence of local TRS.
For the global TRS HRM with or without local TRS, we have
\be
   \lim_{q\to \infty}  \overline{K^{\text{HRM}}_{\text{g-TRS}} (t,L) }= 2\, ,
\ee
which counts two leading order many-body SFF diagrams, in which all local SFF diagrams either take $\sigma_{\mathrm{twisted}}^{(m=2t)}$ or $\sigma_{\mathrm{ladder}}^{(m=1)}$. Since many-body SFF diagrams containing two different local SFF diagrams will be suppressed in $q$~\cite{chan2018solution}. As such, the bump observed in SFF cannot be derived in the HRM in the large-$q$ limit, but it has been numerically observed in finite-$q$~\cite{chan2018spectral}. 
To circumvent this issue while retaining the analytical tractability in large-$q$, the RPM was introduced such that the coupling strength is controlled by the additional parameter $\epsilon$. The SFF of global TRS HRM without local TRS can be treated exactly in finite-$q$. This will be discussed together with other finite-$q$ analysis in \cite{upcoming}.

To evaluate the SFF in the global TRS RPM, we perform the ensemble averages over the one-site Haar-random gates, two-site coupling random phase gates, and one-site TRS-breaking random phase gates in sequence. Firstly, the Haar-random averages lead to two leading local SFF diagrams at each physical site as given by Lemma~\ref{lemma:leading_sff_diag}. 
As a result, we can define an effective classical degree of freedom $\sigma_i= \sigma_{\mathrm{ladder}}^{(m=1)}, \sigma_{\mathrm{twisted}}^{(m=2t)}$ at the $i$-th site of the original quantum circuit.
Secondly, we perform the ensemble average with respect to the random phase gates which act on two coupled sites $i$ and $j$. This involves performing Gaussian integrals on the phase contribution $\sum_{t'=1}^t [\varphi_{a(t') b(t')}(t') - \varphi_{a(\sigma_i(t')) b(\sigma_j(t'))}(t')]$. If $\sigma_i = \sigma_j$, the phases of the coupling gates from $\Tr[U(t)]$ and $\Tr[U^\dagger(t)]$ annihilate each other entirely, leading simply to a factor of $1$. If $\sigma_i \neq \sigma_{j}$, $2\alpha(t)$ terms in the summation of phases remain, leading to a factor of $e^{-\epsilon \alpha(t)}$ upon the Gaussian integral in the large-$q$ limit, where $\alpha(t) =  t-1 - \delta_{0, t\Mod{2}}$. Thirdly, we perform the ensemble average over the TRS-breaking gates  with $\sigma_i$ at site $i$. Similar to the previous step, this involves performing Gaussian integrals on the phase contribution $\sum_{t'=1}^t [\phi_{a(t')}(t') - \phi_{a(\sigma_i(t'))}(t')]$, which gives $1$ if $\sigma_i = \sigma^{(1)}_{\mathrm{ladder}}$ and $e^{-b \alpha(t)}$ if $\sigma_i = \sigma^{(2t)}_{\mathrm{twisted}}$. 
In Appendix \ref{app:sff_diag_proof}, we present an alternative diagrammatic representation of the quantum-classical mapping using a folding construction [Fig.~\ref{fig:fold}]. In this approach, the unitary circuits are folded so that all identically-sampled unitary gates are  stacked on top of each other. This formulation makes it clear that the leading local SFF diagrams in Lemma~\ref{lemma:leading_sff_diag} correspond to the identity and SWAP elements of the symmetric group $S_2$.
Lastly, note that the above derivation can be applied to the RPM with general geometry in arbitrary dimension. The SFF of this model in higher dimension will be discussed in~\cite{upcoming}. 

The two effective Ising degrees of freedom $s_i$ at the $i$-th site are
\begin{enumerate}
    \item $s_i=1$ denoting the time-parallel pairing of Feynman paths in SFF given by $\sigma^{(1)}_{\mathrm{ladder}}$, and 
    \item $s_i=-1$ denoting the time-reversed pairing of Feynman paths in SFF given by $\sigma^{(2t)}_{\mathrm{twisted}}$.
\end{enumerate}
See Table~\ref{Tab:sff_3approaches} and later sections for comparisons of the effective Ising degrees of freedom with the other approaches.
For global TRS RPM in a general geometry in arbitrary dimension, we have exactly
\be \label{eq:ising_mapping}
\ba
\lim_{q\to \infty}  &\overline{K^{\text{RPM}}_{\text{g-TRS}}(t,L)} 
\\
=& \sum_{\{s_i \}} \exp\left[ - \epsilon t \sum_{\langle i, j \rangle} (1- \delta_{s_i , s_j})  - b t \sum_i  \delta_{s_i, -1 } \right]  ,
\ea
\ee
in the large-$q$ limit, and therefore, we write
\be
\ba\label{eq:ising_model}
\lim_{q\to \infty}  \overline{K^{\text{RPM}}_{\text{g-TRS}}(t,L,\epsilon)} \, \propto&
\, \, Z_{\text{Ising}}(\beta J \propto \epsilon t, \beta h \propto bt, L) \,,
\\
\text{with }\, \, H_{\text{Ising}}:= & \,  J \sum_{\langle i, j \rangle} s_i s_j  + h \sum_i  s_i \, ,
\ea
\ee
i.e. the SFF of global TRS RPM is dual to the partition function of a classical Ising model with ferromagnetic interactions.
Note that the SFF of RPM with discrete time translational symmetry but without TRS maps to an emergent $t$-state Potts model, where the number of state on each site in the emergent problem increases with $t$. In contrast, the mapping of SFF for global TRS RPM strips off even the feature of the linear ramp, making it an ideal test ground to understand SFF of {\gqmbcs}~\cite{upcoming}.

%
%
%
In particular, for one-dimensional global TRS RPM (with or without local TRS), the SFF  can be evaluated exactly using the Ising model in large-$q$ limit as 
\begin{widetext}
\begin{equation}\label{eq:trs_psff_gtrs_full}
\ba
\lim_{q\to \infty} \overline{K^{\text{RPM}}_{ \text{g-TRS}}(t,L)}
= 
\begin{cases}
\lambda _1^L+\lambda _2^L \quad  & \text{pbc,}
   \\
  \frac{1}{1+ e^{-\epsilon  \alpha (t)}} 
 \left( \lambda _1^L+\lambda _2^L \right)
 +
  \frac{ e^{-\epsilon  \alpha (t)} +e^{-(b+\epsilon) \alpha (t)} }{\zeta (1+e^{-\epsilon  \alpha (t)}) } \left( \lambda _1^L-\lambda _2^L \right)
   \quad \quad \quad  &\text{obc,}
\end{cases}
\ea
\end{equation}
\end{widetext}
for the periodic boundary condition (pbc) and open boundary condition (obc). Here, $\alpha(t) = t-1 - \delta_{0, t \, (\mathrm{mod} 2)}$, and $\lambda_1$ and $\lambda_2$ are given by
\be\label{eq:eval_ising_mag}
\ba
\lambda_1 =& \,\, \frac{1}{2} \left(1+e^{-b\alpha(t)}+\zeta \right) \, , 
\\
\lambda_2 =& \, \,  \frac{1}{2} \left(1+e^{-b\alpha(t)}-\zeta\right) \, ,
\\
\zeta= & \, \,   \sqrt{4 e^{-b \alpha (t)-2 \epsilon  \alpha (t)}+e^{-2 b \alpha (t)}-2 e^{-b \alpha (t)}+1} \, .
\ea
\ee
For global TRS RPM with $b=0$, i.e. the TRS-breaking mechanism is turned off, SFF simplifies to 
\begin{equation} \label{eq:sff_gtrs_nomag}
\begin{aligned}
   \lim_{q\to \infty}  & \overline{K^{\text{RPM}}_{\text{g-TRS}}(t,L)} 
   \\
   =& \, \, 
   \begin{cases}
\left[1+e^{-\epsilon \alpha (t)}\right]^L + \left[1-e^{-\epsilon \alpha (t)}\right]^L\,  \quad \quad  & \text{pbc,}  \\
2 \left[1 + e^{-\epsilon \alpha (t)}\right]^{L-1} \,   \quad  & \text{obc.}
   \end{cases} 
\end{aligned}
\end{equation}
We define \textit{Thouless time} as the time after which  the {\gqmbcs}  behaves indistinguishably from the corresponding RMT in terms of its connected SFF. From \eqref{eq:sff_gtrs_nomag}, we see that the SFF of the global TRS {\gqmbcs} approaches the RMT value $K^{\mathrm{RMT}}_{\text{g-TRS}}\approx 2$ in Eq.~\eqref{eq:rmt_non_floq} after $\tth(L) = \ln L / \epsilon$.
Let us interpret the behaviour of SFF in {\gqmbcs} in terms of the emergent classical statistical mechanical problem with reference to the Thouless time. 
For $ 0 \lesssim t \ll \tth(L)$, all Boltzmann weights are close to unity regardless of the configurations. Therefore, all configurations of the Ising model contribute similarly to the partition function, resulting in an SFF that exceeds the RMT value. 
For $t \gg \tth(L)$,  the cost of having a domain wall is so large that the dominant contributions to the partition function are the two ferromagnetic ground states, thus recovering the RMT value. 
As $t$ approaches $\tth(L)$ from above, domain wall configurations begin to contribute to the SFF, and the SFF begins to deviate from the RMT behaviour. 

To test the universality of the result in the \eqref{eq:sff_gtrs_nomag}, we take the thermodynamic limit, where the microscopic details of the model should become irrelevant. If $t, L$ are sent large, while keeping $t\gg \tth(L)$, the SFF is in the familiar ramp-plateau regime. If $t, L$ are sent large, while $t$ is fixed, SFF displays exponential growth characterised by the spectral Lyapunov exponent studied in~\cite{cdclyap}. Instead, we can take the \textit{Thouless scaling limit}~\cite{chan2021trans}, where $t, L$ are sent to infinity, with $L/\Lth(t)$ fixed where the \textit{Thouless length} $\Lth(t)= e^{\epsilon t}$ is the inverse function of $\tth(L)$. This limit is illustrated in the SFF regime diagram of global TRS {\gqmbcs} represented by a yellow dot in Fig.~\ref{fig:sff_gtrs_num} (b). Indeed, universal behaviour in SFF and other observables have been observed in this limit in~\cite{chan2021trans, Shivam_2023, christopoulos2024universaldistributionsoverlapsunitary}.
The exact solution in one-dimensional global TRS RPM allows us to compute a scaling function at the Thouless scaling limit,
\be\label{eq:ising_scaling}
\kappa^{\text{SFF}}_{\text{g-TRS}}(x)\equiv    \lim_{\substack{L,t \to \infty\\ x = L/\Lth}}  \overline{K^{\text{RPM}}_{\text{g-TRS}}} =\begin{cases}
2 \cosh x \, ,  \quad & \text{pbc}\,, \\
2 \, e^{x} \, , \quad & \text{obc} \,.
\end{cases} 
\ee
This is the first route to obtain the universal Ising scaling behaviour of SFF in global TRS {\gqmbcs}, and we obtain the same result in two other routes, see  Eq.~\eqref{eq:ginibre_scaling_form} and Eq.~\eqref{eq:ising_scaling_2paf}.
Numerically, we compute the SFF for global TRS HRM at $q=2$ and global TRS RPM at $q=3$ in Fig.~\ref{fig:sff_gtrs_num} (a), where we observe a convergence to the SFF RMT behaviour at sufficiently late time, and that the Thouless time increases with the system sizes for both models. 
In Fig.~\ref{fig:sff_gtrs_num} (c), we observe an excellent agreement between the scaling function \eqref{eq:ising_scaling} derived from the Ising model in large-$q$, and the numerical data from global TRS HRM and RPM obtained at finite-$q$. 

\begin{figure*}[ht]
    \centering
    \includegraphics[width=1 \textwidth]{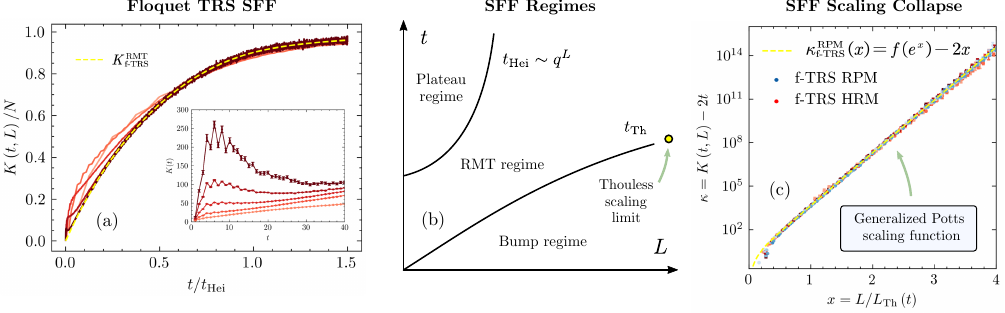}
    \caption{\textbf{SFF for Floquet TRS quantum many-body systems.} 
 (a) The rescaled SFF $K(t)/N$ versus $t/\thei$ for the 1D Floquet TRS HRM for $L= 6,8,10,12,14$ in increasingly dark shades. $N=\thei = q^L$ is the Hilbert space size and the Heisenberg time. SFF has been rescaled to demonstrate convergence to the RMT behaviour after the Thouless times, which increases in system size without rescaling, as shown from the plot of  $K(t)$ versus $t$ in the inset. 
(b) SFF regimes in time $t$ and system size $L$ for {\gqmbcs} with  global TRS. In the RMT and plateau regimes, the SFF of the spatially-extended {\gqmbcs} exhibit the RMT behaviour, while in the bump regime, the SFF takes values higher than the RMT ones due to the effects of many-body interactions. The Thouless scaling limit is the limit where $t$ and $L$ are sent to infinity, while $x=L/\Lth(t)$ is kept fixed.
  (c) Floquet TRS SFF scaling function $\kappa_{\mathrm{f-TRS}}(x) = K_{\mathrm{f-TRS}}(t,L)- 2t$ at odd times versus $L/\Lth(t)$ for the 1D Floquet TRS HRM with pbc for $q=2$ and $L= 4,6,8,10,12$, and for the 1D Floquet TRS with pbc for $q=3$ and $L=3,4,5,6,7,8$ in increasingly dark shades. 
    The scaling collapse for data obtained at finite-$q$ shows excellent agreement with the generalized Potts model scaling function \eqref{eq:ftrs_pbc_scaling} obtained at infinite-$q$.  }
    \label{fig:sff_ftrs_num}
\end{figure*}

\subsubsection{Floquet TRS}
Here we evaluate that the SFF of Floquet TRS HRM and RPM.
According to Lemma~\ref{lemma:leading_sff_diag}, in the presence of Floquet TRS, there are $2t$ leading local SFF diagrams: $t$ twisted diagrams $\sigma_{\mathrm{twisted}}^{(m)}$ with $m=2,4,\dots, 2t$ [e.g. Fig.~\ref{fig:sff_diag} (d,c,b)]  and  $t$ ladder diagrams $\sigma_{\mathrm{ladder}}^{(m)}$ with $m=1,3,\dots, 2t-1$ [e.g. Fig.~\ref{fig:sff_diag} (e,f,g)]. 
Use the notation $\sigma^{(m)}(i)= m+ (-1)^m (-i +1)$ which  combines \eqref{eq:twisted_sff} and \eqref{eq:ladder_sff}. 
We define a basis with these $2t$ SFF diagrams on each site,
$\omega = \{\sigma^{(1)}, \sigma^{(3)}, \dots \sigma^{(2t-1)}, \sigma^{(2t)}, \sigma^{(2t-2)}, \dots, \sigma^{(2)} \dots \}$, so that $\omega(a) = \sigma^{(2a-1)}$ for $a= 1,2,\dots, t$, and $\omega(a) = \sigma^{2(2t+1-a)}$ for $a= t+1,t+2,\dots, 2t$.
These $2t$ diagrams represent the effective degrees of freedom of the statistical mechanical model at each site. 

For the Floquet TRS HRM, using methods described above, we can directly evaluate the SFF as
\be
   \lim_{q\to \infty}  \overline{K^{\text{HRM}}_{\text{f-TRS}} (t,L) }= 2t\, ,
\ee
recovering the leading order behaviour of the SFF RMT behaviour in \eqref{eq:rmt_floq}. This result follows from the fact that many-body SFF diagrams in HRM composed of multiple kinds of local leading diagrams are suppressed in $q$, and therefore all the leading order SFF contributions must occupy the same local diagram at all sites.

For Floquet TRS RPM, following the identical approach described 
above Eq.~\eqref{eq:ising_mapping}, we derive the quantum-classical mapping for Floquet TRS SFF for general geometries,
\be\label{eq:floq_trs_mapping}
\ba
&\lim_{q\to \infty}  \overline{K^{\text{RPM}}_{\text{f-TRS}}(t,L)}   = Z_{2t\text{-gPotts}}
(\beta J \propto \epsilon t) \, ,
\ea
\ee
where $Z_{2t\text{-gPotts}}= \sum_{\{ a_i=1, \dots, 2t \}} \prod_{\langle i,j \rangle} \mathcal{B}_{a_i a_j}$  denotes the partition function of a  generalized  classical Potts model. In this Potts model, each site possesses $2t$ degrees of freedom, labeled as $a=1,2,\dots, 2t$  according to the basis above. The product extends over all lattice bonds where the RPM have 2-site coupling gates. The Boltzmann weight for the bond between site $i$ and $j$ is given by $\mathcal{B}_{a_i a_j}$, with two distinct types:
\begin{itemize}
\item \textbf{Domain walls of type I} are the domain walls between two time-parallel pairings of Feynman paths, i.e. two ladder diagrams, or domain walls between two time-reversed pairings, i.e. two twisted diagrams. The corresponding Boltzmann weights $\mathcal{B}_{a b}$ are given by 
\be \label{eq:dw_typ1}
 A_{ab} = \; \delta_{ab} + (1 - \delta_{ab}) \, e^{-\epsilon t} \,, 
\ee
with $a, b= 1,2,\dots t$ for  ladder-ladder interactions, or with $a, b= t+1, t+2,\dots 2t$ for  twisted-twisted interactions. Note that $A_{ab}$ are the Boltzmann weights that describe the emergent $t$-state Potts model from the SFF of {\gqmbcs} in the absence of symmetries~\cite{chan2018spectral}.

\item \textbf{Domain walls of type II} are domain walls between a time-parallel (ladder diagram) and a time-reversed (twisted diagram) pairing of Feynman paths, with the Boltzmann weight given by  
\be\label{eq:dw_typ2}
    [B(t)]_{ab} = \;
    \begin{cases}
    e^{-\epsilon (t-1)}, 
    & \quad t \;\; \text{odd}\,,
       \\
    e^{-\epsilon t + 2 \epsilon [(a-b)\mod 2]}, & \quad  t \;\; \text{even}\,.
        \end{cases}
        \ee
 where $a \in  \{ 1,2,\dots t \}$ and $b \in  \{ t+1, t+2,\dots 2t \}$, or $a \in  \{ t+1, t+2,\dots 2t \}$ and $b \in  \{ 1,2,\dots t \}$.  
\end{itemize}
Now we evaluate the SFF of Floquet TRS RPM in the one-dimensional setting
\be \label{eq:k_floq_trs}
\ba
\lim_{q\to \infty} \overline{K^{\text{RPM}}_{\text{f-TRS}}(t,L)} =& 
\begin{cases}
 \tr[T_{\mathrm{gPotts}}^L] \,,  & \text{ pbc} \,,
    \\
\bra{\eta} T_{\mathrm{gPotts}}^{L-1} \ket{\eta}  \,,  & \text{ obc} \,,
\end{cases}
\ea
\ee
where $\ket{\eta} = (1,1,\dots, 1)$, and  $T_{\mathrm{gPotts}}$ is a $2t$-by-$2t$ transfer matrix   between site $i$ and $i+1$ given by
\be
\ba
T_{\mathrm{gPotts}} = & \, 
 \begin{pmatrix}
    A & B \\
    B & A 
    \end{pmatrix} \, .
    \ea
\ee
For obc, we have for  odd $t>0$, 
\be \label{eq:ftrs_sff_oddt_obc}
\ba
&
\lim_{q\to \infty} 
\overline{K^{\text{RPM}}_{\text{f-TRS}}(t \in 2\mathbb{Z}+1)} 
\\
& \qquad = 2 t \left[1+ 
t e^{-\epsilon(t-1)}
+ (t-1) e^{-\epsilon t}
\right]^{L-1}\,,
\ea
\ee
and even $t>0$,
\be
\ba
& 
\lim_{q\to \infty} \overline{K^{\text{RPM}}_{\text{f-TRS}}(t \in 2\mathbb{Z})}
\\
& \qquad 
= 2 t \left[1+ \frac{t}{2}  e^{-\epsilon (t-2)}+\left(\frac{3 t}{2}-1\right) e^{-\epsilon t}\right]^{L-1} \,. 
\ea
\ee
For pbc, for odd $t$,  we have 
\begin{equation}\label{eq:k_floq_scal_oddt}
\ba
&  
\lim_{q\to \infty} 
\overline{K^{\text{RPM}}_{\text{f-TRS}}(t\in 2\mathbb{Z} + 1,L)} =
(2 t-2) \left(1-e^{-\epsilon t}\right)^L
\\ & \qquad +\left[1-t e^{-\epsilon (t-1)}+(t-1) e^{-\epsilon t} \right]^L
\\ & \qquad +\left[1+t e^{-\epsilon (t-1)}+(t-1) e^{-\epsilon t}\right]^L
 \;,
   \ea
\end{equation}
while for even~$t$, we have 
\begin{equation} \label{eq:ftrs_event_pbc}
\ba
& 
\lim_{q\to \infty} 
\overline{ K^{\text{RPM}}_{\text{f-TRS}}(t \in 2\mathbb{Z},L)} = (2 t-4) \left(1-e^{-\epsilon t}\right)^L
  \\
  & \qquad  +2 \left[1-\frac{t}{2}  e^{-\epsilon (t-2)}+\left(\frac{t}{2}-1\right) e^{-\epsilon t}
  \right]^L
  \\
  & \qquad  +
  \left[1+\frac{t}{2}  e^{-\epsilon (t-2)}
  -\left(\frac{t}{2}+1\right) e^{-\epsilon t}\right]^L 
    \\
  & \qquad  +\left[1+ \frac{t}{2}  e^{-\epsilon (t-2)}+\left(\frac{3 t}{2}-1\right) e^{-\epsilon t}\right]^L
  \;.
\ea
\end{equation}
Again, the Thouless time is  the time after which  the {\gqmbcs}  behaves indistinguishably from the corresponding RMT in terms of its connected SFF, and in this case, $\tth(L)$ is determined by $\tth L e^{-\epsilon \tth} = O(1)$, and its inverse function, the Thouless length, is given by $\Lth(t) = e^{\epsilon t}/t$.  
The SFF can be interpreted with respect to $\tth(L)$ in terms of the statistical mechanical problem  as in the global TRS case, except that the emergent statistical mechanical problem has $2t$ states per site. 
For $t \gg \tth(L)$, the Boltzmann weights of any configurations other than the $2t$ ferromagnetic ground states are small, and hence we recover the leading behaviour of $2t$ ramp of SFF. 
For $ 0 \lesssim t \ll \tth(L)$, all Boltzmann weights are close to unity regardless of the configurations. Therefore, all configurations of the generalized Potts model contribute similarly to the partition function, resulting in an SFF that exceeds the RMT value. 
As $t$ approaches $\tth(L)$ from above,  configurations with domain walls of type I and type II begin to contribute to the SFF, and the SFF begins to deviate from the RMT behaviour.

As in the global TRS case, we look for universal behaviour in the Thouless scaling limit \cite{chan2021trans} where $t$ and $L$ are sent to infinity, while $x= L/ \Lth(t)$ is fixed.  
For the obc, we have  the scaling form for Floquet TRS RPM as 
\be \label{eq:ftrs_obc_scaling}
\kappa^{\text{RPM}}_{\text{f-TRS}}(x)\equiv    \lim_{\substack{L,t \to \infty\\ x = L/\Lth}}  \frac{ \overline{K_{\text{f-TRS}}} }{2t} =
g_\epsilon(e^x)
\,. 
\ee
where $g_\epsilon(y) = y^{(1+e^{\epsilon})}$ for odd $t$, and $g_\epsilon(y) = y^{\frac{1}{2}(3+e^{2\epsilon})}$ for even $t$.
For the pbc, we obtain the scaling form for Floquet TRS RPM as
\be \label{eq:ftrs_pbc_scaling}
\kappa^{\text{RPM}}_{\text{f-TRS}}(x)\equiv   \lim_{\substack{L,t \to \infty\\ x = L/\Lth}} \overline{K^{\text{RPM}}_{\text{f-TRS}}} -2t =
f_\epsilon(e^{x}) -2x
\,, 
\ee
where $f_\epsilon(y)= y^{1+e^{\epsilon}} + y^{1-e^{\epsilon}} -2 $ for odd $t$, and 
$f_\epsilon(y)= 2 y^{\frac{1}{2}(1-e^{2\epsilon})} + 
y^{\frac{1}{2}(e^{2\epsilon} -1)}+
y^{\frac{1}{2}(3+e^{2\epsilon})} -4$ for even $t$.
Numerically, we compute the SFF for Floquet TRS HRM at $q=2$ and Floquet TRS RPM at $q=3$ in Fig.~\ref{fig:sff_ftrs_num} (a), where we observe a convergence to the SFF RMT behaviour at sufficiently late time, and that the Thouless time increases with the system sizes for both models, as shown in the inset. 
In Fig.~\ref{fig:sff_ftrs_num} (c), we observe an excellent agreement between the scaling function \eqref{eq:ftrs_pbc_scaling} derived from the generalized Potts model in large-$q$, and the numerical data from global TRS HRM and RPM obtained at finite-$q$.

These result should be compared with the results obtained for Floquet RPM~\cite{chan2018spectral} which serves as a minimal model for many-body quantum chaotic systems without symmetries. This model is defined similarly to the Floquet TRS RPM, except that the COE gates are replaced with CUE gates. The SFF is given by
\be \label{eq:sff_cue}
\ba
&\lim_{q\to \infty} \overline{K^{\text{RPM}}_{\text{no-sym}}(t,L)} 
\\
& 
= 
\begin{cases}
(t-1)(1-e^{-\epsilon t})^L + [1+(t-1)e^{-\epsilon t}]^L\,,  & \text{ pbc} \,,
    \\  t [1+(t-1) e^{-\epsilon t}]^{L-1} \,,  & \text{ obc} \,.
\end{cases}
\ea
\ee
Taking the Thouless scaling limit with the Thouless length $\Lth(t) = e^{\epsilon t}/t $, we obtain the scaling form of SFF for {\gqmbcs} without TRS with the pbc as~\cite{chan2021trans} 
\be \label{eq:nosym_pbc_scaling}
\kappa^{\text{RPM}}_{\text{no-sym}}(x)\equiv    \lim_{\substack{L,t \to \infty\\ x = L/\Lth}}   \overline{K^{\text{RPM}}_{\text{no-sym}}} -t  = e^x - x -1 \, ,
\ee
and for obc, we have the scaling form
\be\label{eq:nosym_obc_scaling}
\kappa^{\text{RPM}}_{\text{no-sym}}(x)\equiv   \lim_{\substack{L,t \to \infty\\ x = L/\Lth}}   \frac{ \overline{K^{\text{RPM}}_{\text{f-TRS}}} }{t} =
e^{x}
\,.
\ee
In comparison with case without TRS, due to the many-body interactions with time-reversed pairings of Feynman paths, the presence of TRS leads to additional factors of $2$ appear with $t$ and $x$ in ~\eqref{eq:ftrs_pbc_scaling} and ~\eqref{eq:ftrs_obc_scaling} compared to  in ~\eqref{eq:nosym_pbc_scaling} and ~\eqref{eq:nosym_obc_scaling}, and that the type I and II domain walls modify the scaling forms via the functions $f$ and $g$.

\subsection{Ginibre ensemble}

\begin{figure}[ht]
    \centering
    \includegraphics[width=0.4 \textwidth]{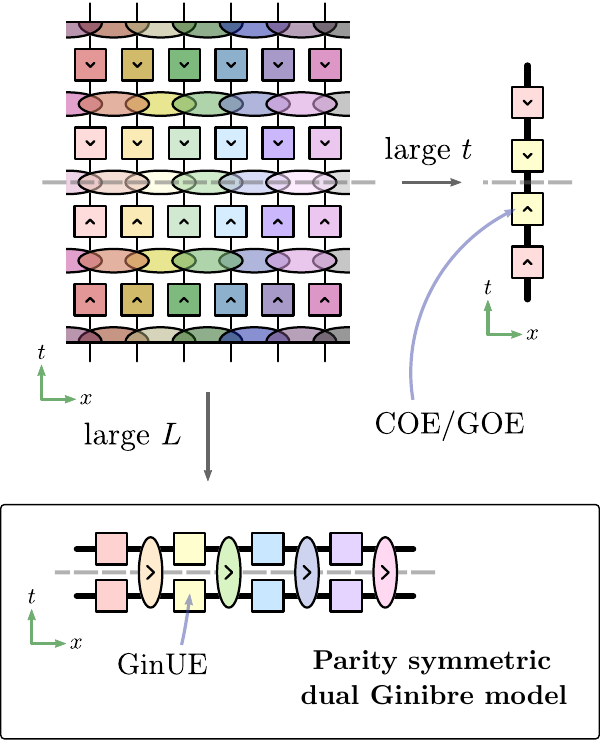}
    \caption{\textbf{TRS Dual Ginibre models are parity symmetric.} For sufficiently large time $t$, the TRS random quantum circuits displays behaviour that can be modelled by a RMT model of unitary or Hermitian matrices drawn from the COE or GOE (right). In sufficiently large system size $L$, the TRS quantum circuits can be modelled by the dual Ginibre model with spatial parity symmetry, that is, a RMT model of non-Hermitian matrices drawn from the Ginibre ensemble that is parity symmetric in the dual Hilbert space (bottom).  The dashed lines refer to the TRS inversion axes. 
    }
    \label{fig:ginue}
\end{figure}

The Ginibre ensemble is a subclass of RMT that contains random matrices with independent complex Gaussian matrix elements~\cite{Mehta}. In particular, unlike the Circular and Gaussian ensembles of RMT, unitarity and hermiticity conditions are not imposed in the Ginibre ensemble.  
In quantum many-body physics, space-time duality or rotation refers to the fact that one can learn about a quantum many-body system by studying the spatial propagation of circuit in space, rather than the unitary time evolution in time~\cite{guhr2016kim, bertini2018exact,chan2018spectral, chan2020lyap, lerose2020influence, Ippoliti_2022}. 
It has recently been shown that the dual transfer matrix of spatially-extended generic many-body quantum chaotic systems belongs to the class of universality class of the Ginibre ensemble~\cite{Shivam_2023}. This emergence provides a new set of tools to tackle the problem of quantum many-body chaos, leading to analytical results on understanding the spectral form factor of generic and local spatially-extended quantum many-body systems beyond the infinite-$q$ approximation. 

For simplicity, we focus on the dual Ginibre model [Fig.~\ref{fig:ginue}] for the global TRS {\gqmbcs}, one of the simplest setting with TRS.
Consider a system with two (dual) coarse-grained sites, each with Hilbert space size $\mathbb{C}^{\NN}$ such that the total Hilbert space has the size $\mathbb{C}^{\NN^2}$. 
The non-unitary evolution in the spatial direction is defined by
\be\label{eq:gin_model}
\ba
V_{ \text{g-TRS}}^{\mathrm{Gin}}(L) = &\,\, \prod_{x=1}^L v(x) \, ,
\quad v(x) =  v_2 (x) v_1 (x) \,,
\\
v_1 (x) =& \, \, G_1(x;\NN) \otimes G_1(x;\NN)\,,
\\
v_2 (x) =& \,\,   G_2(x;\NN^2) + S G_2(x;\NN^2)  S  \, ,
\ea
\ee
where $G_2(x;{N})$ and $G_1(x;\NN^2)$ are independently drawn from the complex Ginibre ensemble of $\NN$-by-$\NN$ non-Hermitian random matrices, such that $\overline{[G_i]_{a a'}(x;\NN ) [G^*_i]_{b b'}(x;\NN)}=\delta_{ab} \delta_{a' b'} \sigma_i^{-2}(\NN)$ with $\NN$-dependent variance $\sigma_1^2=\NN$ and $\sigma_2^2=2\NN^2$.
$S$ is the two-site swap operator, defined by its action $S\ket{a_1, a_2}= \ket{a_2, a_1}$ where $\ket{a_1,a_2}$ with $a_i = 1,2,\dots, N$ is the computational basis in the Hilbert space. 
 For spatially-random systems, $v(x)$ does not correlate with $v(x')$ for $x\neq x'$. 
To incorporate global TRS, we crucially choose the dual Ginibre model to be parity symmetric along the spatial direction, i.e. $Sv(x)S^{-1} = v(x)$.
Note that dual Ginibre models can be constructed for Floquet  and/or translational invariant {\gqmbcs}. See details in ~\cite{Shivam_2023}.

The SFF of the dual Ginibre model for global TRS {\gqmbcs} is defined by
\be\label{eq:sff_gin}
\begin{aligned}
K^{\mathrm{Gin}}_{\text{g-TRS}}(N, L) =
\begin{cases}
\left| \Tr [V_{ \text{g-TRS}}^{\text{Gin}}(L) ] \right|^2  \, ,  \quad & \text{pbc,}
\\
\left| \bra{\eta} [V_{ \text{g-TRS}}^{\text{Gin}}(L)  \ket{\eta} \right|^2 \, ,  \quad & \text{obc,}
\end{cases}
\end{aligned}
\ee
where in the dual or space-time-rotated set-up, $\ket{\eta} = (1,1)$ accounts for the obc, and the trace accounts for the pbc in the original setting. 
Firstly, we perform the ensemble average over $G_1(x)$ in \eqref{eq:gin_model}. This average is performed over a pair of $G_1(x)$ and a pair of $G_1^\dagger (x)$, leading to two possible Wick contractions: Identity or SWAP, which we label with 0 and 1 respectively. See Fig.~\ref{fig:three_routes} for a schematic illustration, where circled dots represent $G_1(x)$, and the dashed lines represent Wick contractions. Using these two Wick contractions as the basis states of a transfer matrix, we can write the SFF without using the large-$N$ approximation as  
\be\label{eq:gtrs_gin_step1}
\ba
\overline{K_{ \text{g-TRS}}^{\mathrm{Gin}}(N, L)} = & \, \, \Tr \left[ (T_{ \text{g-TRS}}^{\mathrm{Gin}})^L \right] \,, 
\\
[T_{ \text{g-TRS}}^{\mathrm{Gin}}]_{ab} = & \, \, \frac{1}{\sigma_1^2}\overline{\Tr \, ( v_2^\dagger S^{a} v_2 S^{b} )} \,,
\ea
\ee
where $a,b = 0,1$ are the indices for the basis states, and  $S$ the two-site swap operator with $S^0 \equiv \mathbb{1}$.
Using $\overline{\Tr \, (G_2 G_2^*)}= N^4/ \sigma^2_2$, $\overline{\Tr \, (SG_2 SG_2^*)}=N^2 /\sigma^2_2$, and $\overline{\Tr \, (G_2 G_2^*S)}= \overline{\Tr \, (G_2 SG_2^*)}= N^3 /\sigma^2_2$, we obtain
\begin{equation}
\ba
    T_{ \text{g-TRS}}^{\mathrm{Gin}}(N) 
    = 
    \begin{pmatrix}
    1 + \frac{1}{N^2} &  \frac{2}{N}  \\
     \frac{2}{N}  & 1 + \frac{1}{N^2}
    \end{pmatrix} 
    \;,
    \ea
\end{equation}
which coincides with the transfer matrix for ferromagnetic Ising model. 
The effective Ising degrees of freedom are
\begin{enumerate}
    \item The identity contraction along the dual spatial direction; and 
    \item The parity SWAP contraction along the dual spatial direction.
\end{enumerate}
See Table~\ref{Tab:sff_3approaches} for a comparison of the effective Ising degrees of freedom with the other approaches.
Using the Ising transfer matrix, we obtain
\be
\ba
& \overline{K_{ \text{g-TRS}}^{\mathrm{Gin}}(N, L)} \, \,
\\
& \quad =
\begin{cases}
    \left(1 +\frac{2}{N}+ \frac{1}{N^2}\right)^L+\left(1-\frac{2}{N}+ \frac{1}{N^2}\right)^L, \quad & \text{pbc,}
    \\
    2 \left( 1 + \frac{2}{N} + \frac{1}{N^2}\right)^{L-1}, \quad & \text{obc,}
\end{cases}
\ea
\ee
which tends to 2 in large $N$ as expected.
In the dual Ginibre model, the Thouless length can be read off as $L_{\mathrm{Th}}= N/2$.  Taking the thermodynamic and Thouless scaling limit where $x= L/ L_{\mathrm{Th}}$ is fixed but $N$ and $L$ are sent to infinity, we arrive
\be\label{eq:ginibre_scaling_form}
\kappa^{\text{Gin}}_{\text{g-TRS}}(x)\equiv  \lim_{\substack{L,N \to \infty\\ x = L/\Lth}}  \overline{ K^{\text{Gin}}_{\text{g-TRS}} }=\begin{cases}
2 \cosh x \, ,  \quad & \text{pbc,} \\
2 \, e^{x} \, , \quad & \text{obc,}
\end{cases} 
\ee
and therefore, in the presence of global TRS, the Ginibre SFF scaling function coincides with the global RPM SFF scaling function, i.e. $\kappa^{\text{SFF}}_{\text{g-TRS}}(x) = \kappa^{\text{Gin}}_{\text{g-TRS}}(x)$. This is the second route to derive the emergent Ising scaling behaviour of SFF in {\gqmbcs} with global TRS. 

The use of Ginibre universality is, as far as the authors know, the only finite-$q$  methods known to describe the spectral statistics of generic, local, and spatially-extended many-body chaotic systems. This method therefore complements the large-$q$ analysis done with the RPM and HRM. Beyond spectral statistics, the use of Ginibre ensembles have also found other applications in generic quantum many-body systems~\cite{christopoulos2024universaldistributionsoverlapsunitary, Bulchandani_2024, deluca2024universalityclassespurificationnonunitary}.

\section{Partial spectral form factor}\label{sec:psff}
The SFF is the modulus square of the trace of the time evolution operator for a quantum system of interest. The partial spectral form factor is a generalization of the SFF where only a subregion of a quantum many-body system is being traced out.
More precisely, the \textit{partial spectral form factor} (PSFF) for subregion $A$ of a quantum many-body system with time evolution operator $U(t)$ is defined as~\cite{Gong_2020, Garratt2021prx, ZollerSFF2021}
\be
K_A(t) = \Nhil_{\overline{A}}^{-1} \Tr_{\overline{A}}[ \Tr_A[U(t)] \Tr_A[U^\dagger(t)] ]
\ee
where $\Tr_A$ is the partial trace over region $A$, $\overline{A}$ is the complement of $A$, and $\Nhil_{\overline{A}}$ is the Hilbert space dimension of $\overline{A}$. We define $K_A \equiv K$ if $A$ is the whole system.  PSFF has recently been analytically studied in quantum many-body systems without symmetries~\cite{Garratt2021prx, ZollerSFF2021, yoshimura2023operator} and with the dual unitarity condition~\cite{fritzsch2024eigenstatecorrelationsdualunitaryquantum}. 
Importantly, in addition to eigenvalue correlations, PSFF contains information on the eigenstate correlations, which is one of the robust diagnostic of quantum chaos, as exemplified by the Eigenstate Thermalization Hypothesis~\cite{deutsch1991quantum, Srednicki, Rigol2008}. This connection can be seen by taking sufficiently large time $t$, when PSFF $K^A(t)$ tends to the averaged purity of the reduced energy eigenstate~\cite{ZollerSFF2021}.

We expect {\gqmbcs} to display RMT behaviour in late time in PSFF as in SFF and 2PAF. In particular, for Floquet TRS {\gqmbcs}, we expect the late-time behaviour to be of the Gaussian or Circular Orthogonal Ensemble.
The PSFF of these ensembles exhibit a shift-ramp-plateau behaviour, which can be obtained using the fact that the distributions of eigenvalues and eigenvectors of random matrices decouple~\cite{ZollerSFF2021}
\be \label{eq:psff_rmt}
\overline{K^{\mathrm{RMT}}_{A, \text{f-TRS}}(t)} =  c^{(1)}_A + c^{(2)}_A \overline{K^{\mathrm{RMT}}_{\text{f-TRS}}(t)} \, ,
\ee
where $c_A^{(1)}= \Nhil_A^2(\Nhil_{\Abar}^2 + \Nhil_{\Abar} -2)/((\Nhil - 1 ) (\Nhil +2 ))$ and $c_A^{(2)}= (\Nhil + \Nhil_{\Abar} +1) (\Nhil_{A} + 1)/ ({\Nhil}_{\Abar} (\Nhil - 1 ) (\Nhil +2 ))$ with Hilbert space dimension $\Nhil = \Nhil_A \Nhil_{\Abar}$. 
For $\Nhil_A, \Nhil_{\Abar} \gg 1$, we have $\overline{ K^{\mathrm{RMT}}_{A, \text{f-TRS}}(t)} \approx 1 +  \overline{ K^{\mathrm{RMT}}_{ \text{f-TRS}}(t)} / \Nhil_{\Abar}^{2}$. In other words, the RMT PSFF behaviour coincides with the rescaled RMT SFF behaviour after a shift. 

In observables such as the PSFF and the 2PAF (see below), the quantum many-body system is partitioned into subregion $A$ and its complement $\overline{A}$. We will derive quantum-classical mappings between the ensemble averaged of such an observable and the partition functions of a classical statistical mechanical system
\begin{IEEEeqnarray}{l}  
Z_A := \sum_{\{\sigma_i\}} \exp\left[-\beta \left(H_A + H_{\overline{A}} + H_{A\overline{A}} \right)\right]  \,,  \label{eq:Z_A}
\\
= \sum_{\{ \sigma \}, \{\bar{\sigma} \} }
\prod_{\langle i,j\rangle \in A} \mathcal{B}^{A }_{\sigma _i \sigma _j}
\prod_{\langle k,l\rangle \in \overline{A}} \mathcal{B}^{\overline{A} }_{\bar{\sigma }_{k} \bar{\sigma }_{l}}
 \prod_{\langle m,n\rangle \in \partial A} \mathcal{B}^{A \overline{A}}_{\sigma _m \bar{\sigma }_n}  \nonumber
\end{IEEEeqnarray}
where the bulks of subregion $A$ and its complement $\overline{A}$ are governed by the Hamiltonian $H_A$ and $H_{\overline{A}}$ respectively, and the boundaries between the subregions $\partial A$ are governed by the Hamiltonian $H_{A\overline{A}}$.
It is also convenient to represent the Boltzmann factor using $ 
\exp(-\beta H_A)= 
\prod_{\langle i,j\rangle \in A}  \mathcal{B}^{A }_{\sigma _i \sigma _j}$, where the product is over the bonds of coupled sites, and similarly for the bulk $\overline{A}$ and the boundary $\partial A$.
We will see that for 2PAF and PSFF, $H_{\overline{A}}$ describes a trivial 1-level system in large $q$, consistent with the fact that in $\overline{A}$, many pairs of unitary gates and their Hermitian conjugate annihilate each other due to unitarity.
Then, the many-body dynamical behaviour of such observables can be interpreted in terms of the statistical mechanical properties of the emergent system in subregion $A$ and the interactions on the boundaries between $A$ and $\overline{A}$.

\begin{figure}[t]
    \centering
    \includegraphics[width=0.4 \textwidth]{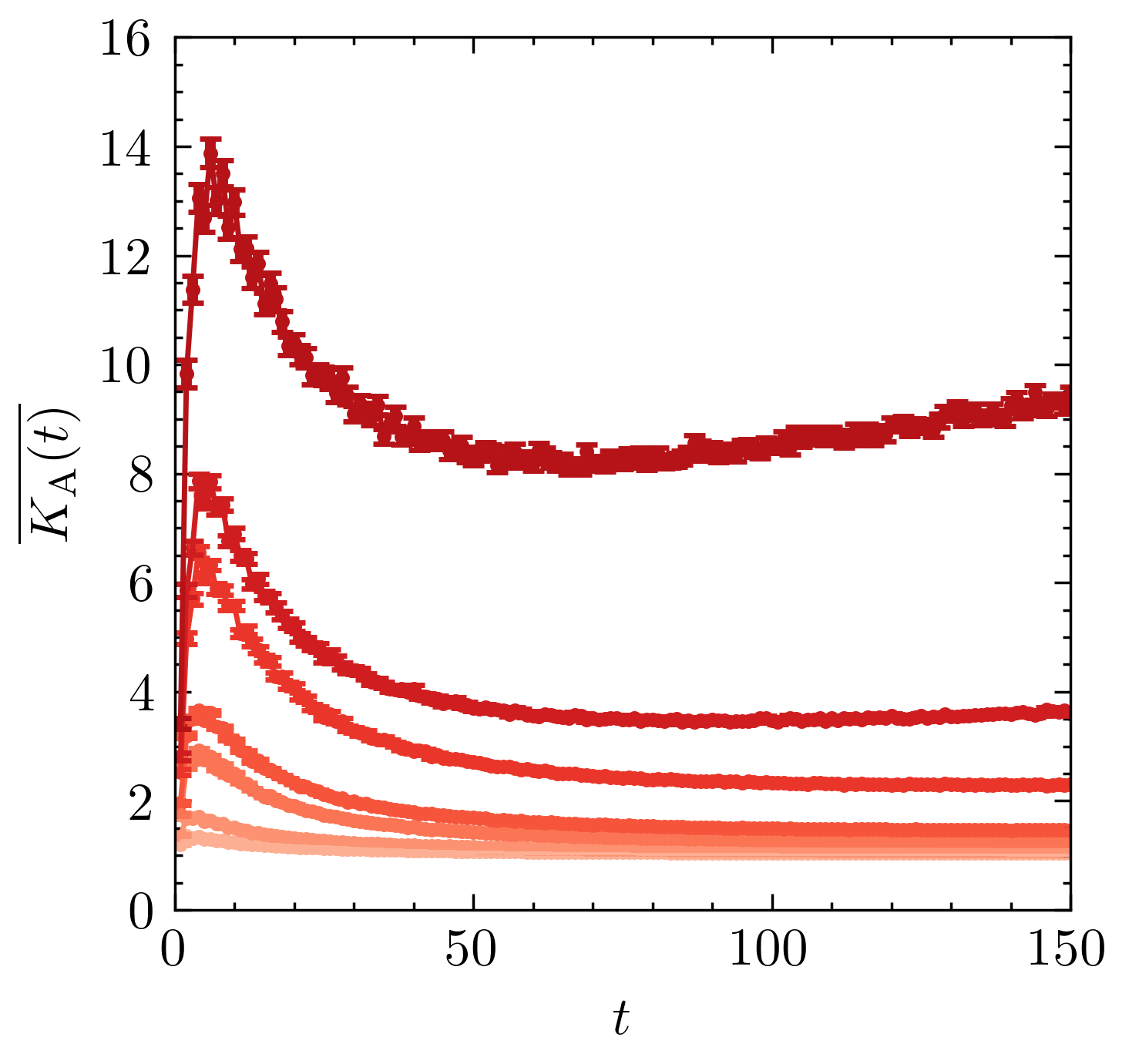}
    \caption{\textbf{PSFF against $t$.} Numerical simulations of PSFF $\overline{K_A(t)}$ against $t$ for the 1D Floquet TRS HRM for $q=2$ and $L=10$ with $L_A=1,2,\dots, 7$ in increasingly dark shades. The HRM is chosen to have obc with the region $A$ connected to the boundary. 
    }
    \label{fig:psff_num}
\end{figure}

The evaluation of PSFF in general geometries with arbitrary dimensions follows from the calculation of SFF and leads to the mappings 
\be\label{eq:psff_overview}
\ba
\lim_{q\to\infty} \overline{K^{\text{RPM}}_A} \propto 
\begin{cases}
     Z_A^{t\text{-Potts}}   \qquad & \text{Floquet,}
    \\
      1 \qquad & \text{Local TRS,}
     \\
     Z_A^{\text{Ising}} \qquad & \text{Global TRS,}
    \\
     Z_A^{2t\text{-Potts'}} \qquad & \text{Floquet TRS,}
\end{cases} 
\ea
\ee
using definitions in \eqref{eq:ising_model}, \eqref{eq:floq_trs_mapping}, and \eqref{eq:Z_A}. We will make these statements precise below in one-dimensional settings. For both local TRS RPM and HRM, in the limit of large local Hilbert space dimension $q$,  we have 
\be\label{eq:trs_psff_ltrs}
\ba
\overline{K_{A, \text{l-TRS}}(t)}= 
1  \, ,
\ea
\ee
since there are only a single leading local diagram on each site: Fig.~\ref{fig:2pcf_diag} (b) for sites in $\overline{A}$, and Fig.~\ref{fig:sff_diag} (e) for sites in $A$ according to Lemma \ref{lemma:leading_sff_diag}. For global TRS RPM, there is still a single leading diagram Fig.~\ref{fig:2pcf_diag} (b)  at sites in $\overline{A}$, but two leading diagrams Fig.~\ref{fig:sff_diag} (b) and (e) at sites in $A$. 
For one-dimensional global TRS RPM with and TRS breaking mechanism, this gives
\begin{equation}\label{eq:trs_psff_gtrs_full}
\ba
&\lim_{q\to \infty} \overline{K^{\text{RPM}}_{A,\, \,  \text{g-TRS}}(t)}=
\\
&\quad \prod^n_{i=1} \left( \frac{ \zeta -y+1
}{2 \zeta }\lambda _1^{L_{A_i}+1}
+
\frac{
 \zeta +y-1
}{2 \zeta }\lambda _2^{L_{A_i}+1} \right)
\ea
\end{equation}
where $i$ labels the $i$-th connected subregion of $A$, and 
the variables $\lambda_1$, $\lambda_2$ and $\zeta$ are given in Eq.~\eqref{eq:eval_ising_mag}.
In the simplest case where the TRS mechanism (therefore, the emergent magnetic field) is turned off with $b=0$, and where region $A$ is connected with $n=1$, we have
\be\label{eq:trs_psff_gtrs}
\ba
\lim_{q\to \infty}
\overline{K^{\text{RPM}}_{A, \text{g-TRS}}(t,L)}
=& 
\frac{1}{2} \left[\left(1-e^{-\epsilon \alpha(t)}\right)^{L_{A}+1} \right.
\\
& \quad \quad
\left. +\left(1+e^{-\epsilon \alpha(t)}\right)^{L_{A}+1}\right] \, .
\ea
\ee
\eqref{eq:trs_psff_gtrs} tends to one for time after the Thouless time  $t_{\mathrm{Th}, A}= \ln L_A / \epsilon$, which also defines a corresponding Thouless subsystem size $L_{\mathrm{Th}, A}(t) = e^{\epsilon  t}$. To seek universal behaviour, as in the SFF case, we take the Thouless scaling limit, except that $L_A$ in PSFF now plays the role of $L$ in SFF. More precisely, we send $t$ and $L_A$ (and therefore $L$) to infinity, while keeping $x= L_A/ L_{\mathrm{Th}, A}$ fixed. 
This results in the scaling form 
\be
\kappa^{\text{RPM}}_{A, \text{g-TRS}}(x)\equiv  \lim_{\substack{L_A,t \to \infty\\ x = L_A/L_{\mathrm{Th}, A} }}   \overline{K^{\text{RPM}}_{A, \text{g-TRS}}}=
 \cosh x \, ,  
\ee
which coincide with Eq.~\eqref{eq:ising_scaling} aside from a factor of $1/2$, which arises due to the fact that there is a single leading diagram $\sigma^{(1)}_{\mathrm{ladder}}$ in $\overline{A}$.

 For one-dimensional fTRS RPM, in the limit of large local Hilbert space dimension $q$, we have for odd time, 
\be\label{eq:trs_psff_ftrs_oddt}
\ba
&\lim_{q\to \infty} \overline{K^{\text{RPM}}_{A, \text{f-TRS}}(t \in 2 \mathbb{Z}+1 )}= 
\\
& \qquad \prod_{i}^n \frac{1}{2t} \Bigl\{   
(2t-2)[1-e^{- \epsilon t}]^{L_{A_i}+1}
\\
&\qquad +
[1 -t e^{-\epsilon (t-1)} + (t-1)e^{-\epsilon t }]^{L_{A_i}+1}
\\
&\qquad 
 +
[1 +t e^{-\epsilon (t-1)} + (t-1)e^{-\epsilon t }]^{L_{A_i}+1}
\Bigr\}
 \, ,
 \ea
\ee
and for even time, 
\be\label{eq:trs_psff_ftrs_event}
\ba
& \lim_{q\to \infty} \overline{K^{\text{RPM}}_{A, \text{f-TRS}}(t\in 2 \mathbb{Z})}= 
\\
& \quad
\prod_{i=1}^n
\frac{1}{2t} 
\Biggl\{   
(2t-4)(1-e^{- \epsilon t})^{L_{A_i}+1}  
\\
 & \quad +2 \left[1-\frac{t}{2}  e^{-\epsilon (t-2)}+\left(\frac{t}{2}-1\right) e^{-\epsilon t}
  \right]^{L_{A_i}+1}
  \\
  & \quad   +
  \left[1+\frac{t}{2}  e^{-\epsilon (t-2)}
  -\left(\frac{t}{2}+1\right) e^{-\epsilon t}\right]^{L_{A_i}+1} 
  \\
&  \quad   +\left[1+ \frac{t}{2}  e^{-\epsilon (t-2)}+\left(\frac{3 t}{2}-1\right) e^{-\epsilon t}\right]^{L_{A_i}+1}
\Biggr\}
 \, .
 \ea
\ee
where again $i$ labels the $i$-th connected subregion of $A$. The Thouless subsystem length is given by  $L_{A, \mathrm{Th}}(t) = e^{\epsilon t}/t$. The Thouless scaling limit for PSFF can be taken analogous to the SFF case, except that the role of $L$ is taken by $L_A$, i.e. we take $t$ and $L_A$ (and therefore $L$) to  infinity, while keeping $x= L_A/ L_{A, \mathrm{Th}}(t)$  fixed. In this limit, we obtain the scaling form for Floquet TRS RPM as
\be \label{eq:psff_ftrs_pbc_scaling}
\ba
\kappa^{\text{RPM}}_{A, \text{f-TRS}}(x) & \,\equiv \,  \lim_{\substack{L,t \to \infty\\ x = L/\Lth}} 2t \, \overline{K^{\text{RPM}}_{\text{f-TRS}}} -2t
\\
& \, \qquad   = f_\epsilon (e^x) - 2x
\,, 
\ea
\ee
where $f_\epsilon (y)= y^{1+e^{\epsilon}} + y^{1-e^{\epsilon}} -2 $ for odd $t$, and 
$f_\epsilon(y)= 2 y^{\frac{1}{2}(1-e^{2\epsilon})} + 
y^{\frac{1}{2}(-1+e^{2\epsilon})}
y^{\frac{1}{2}(3+e^{2\epsilon})} -4$ for even $t$. This coincides with the scaling form for SFF with pbc given in Eq.~\eqref{eq:ftrs_pbc_scaling}, except for an extra factor of $2t$.

\begin{figure}[t]
    \centering
    \includegraphics[width=0.4 \textwidth]{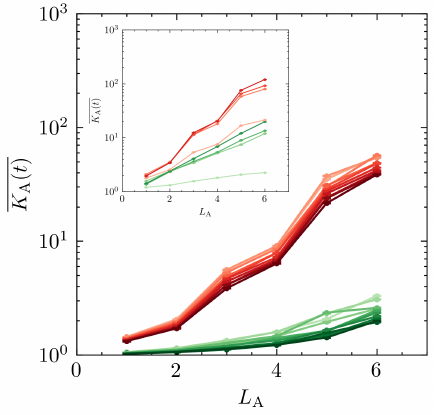}
    \caption{\textbf{PSFF against $L_A$.}   Numerical simulations of PSFF $\overline{K_A(t)}$ against $L_A$ for the 1D Floquet HRM with TRS (green) and without TRS (red) for $q=2$ and $L=10$ with $t\in [15,24]$ (main panel), and $t\in [1,4]$ (inset) in increasingly dark shades. The HRM is chosen to have obc with the region $A$ connected to the boundary. 
    }
    \label{fig:psff_num_against_La}
\end{figure}

The one dimensional Floquet RPM without TRS or other symmetries~\cite{chan2018spectral} can be evaluated as the partition function of $t$-state Potts model given by~\cite{yoshimura2023operator} 
\be
\ba
\lim_{q\to \infty} \overline{K^{\text{RPM}}_{A, \text{no-sym}}(t)}=
\prod_{i=1}^n
\frac{1}{t}\left[(1+(t-1)e^{-\epsilon t})^{L_{A_i}+1}  \right.
\\
\left.  \quad + (t-1) (1 - e^{-\epsilon t})^{L_{A_i} +1} \right] \,.
\ea
\ee
For simplicity, take the region $A$ to be a single connected region. Then in the Thouless scaling limit, we send $t$ and $L_A$ (and therefore $L$) to infinity, while keeping $x= L_A/ L_{\mathrm{Th}, A} = L_A t e^{-\epsilon t}$ fixed. This gives the scaling form
\be
\kappa^{\text{RPM}}_{A, \text{no-sym}}(x)\equiv   \lim_{\substack{L_A,t \to \infty\\ x = L_A/L_{\mathrm{Th}, A} }}
t  \overline{K^{\text{RPM}}_{A, \text{no-sym}}}- t=
 e^x -x-1 \, ,
\ee
coinciding with the SFF scaling form for RPM without symmetries and with pbc, except for a factor of $t$, and except that the role of Thouless length is taken by the Thouless subsystem size $L_{\mathrm{Th}, A}$. 
We numerically simulate the PSFF for Floquet TRS HRM and Floquet TRS 3PM with system size $L=10$. 
We consider obc and choose the region $A$ to be connected to the boundary. 
In Fig.~\ref{fig:psff_num}, we plot PSFF $K_A(t)$ against $t$ for the 1D Floquet TRS HRM for $q=2$ and $L=10$ with various $L_A$. 
We observe that PSFF exhibits a characteristic bump  which is qualitatively described in the large-$q$ solution of the Floquet TRS RPM in Eqs.~\eqref{eq:trs_psff_ftrs_oddt} and \eqref{eq:trs_psff_ftrs_event}. 
In Fig.~\ref{fig:psff_num_against_La}, we plot PSFF against $L_A$ for the same model and find that PSFF grows exponentially with $L_A$, with a higher growth  rate in the presence of  TRS compared to the case without TRS. 
In Appendix~\ref{app:numerics}, we provide numerical data of PSFF for Floquet 3PM which show qualitatively similar behavior to PSFF of Floquet HRM.

\section{Correlation functions}\label{sec:corr}

\begin{figure*}[ht]
    \centering
    \includegraphics[width=1 \textwidth]{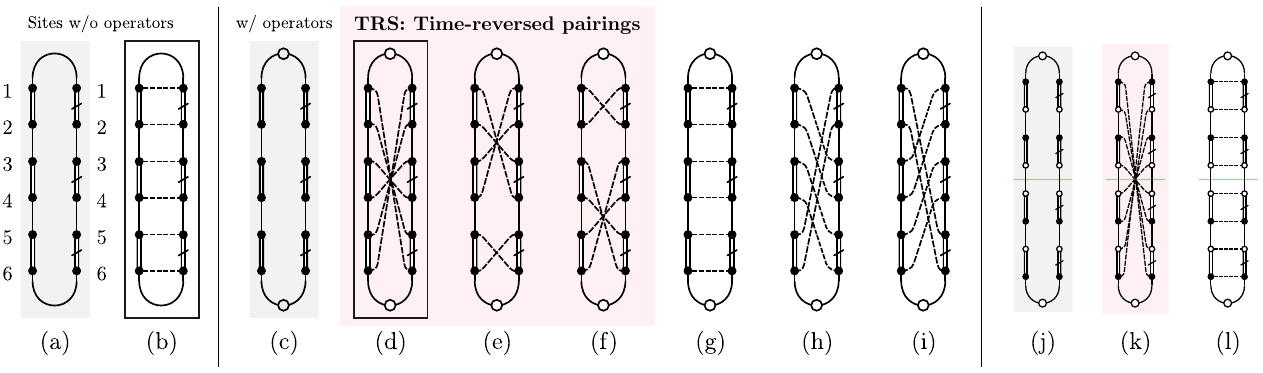}
    \caption{\textbf{2PAF diagrams.} (a,b) Diagrammatical representation and the leading local diagram on sites without operator support respectively for all TRS dynamics (for models with global TRS without local TRS, the black dots have to be partitioned into black and white dots, but the diagram is the same). (c) Diagrammatical representation on sites with operator support with (d) a single leading 2PAF diagrams with time-reversed pairing, (e-f) subleading diagrams with time-reversed pairing, and (h-i) subleading  2PAF diagrams with time-parallel pairings for \textbf{Floquet TRS} chaotic systems. (g) A 2PAF diagram with time-parallel pairing that vanishes due to tracelessness condition of  operators. 
    For \textbf{global TRS w/ local TRS} chaotic systems, the same diagrams appear except that (e,f,h,i) are absent. 
    (j-l) are the analogues of (c), (d) and (g) respectively for \textbf{global TRS w/o local TRS} chaotic systems.
    For \textbf{local TRS} systems, only (a), (b), (c) and (g) remain. 
(l) vanishes due to tracelessness condition of operators.
    If the operators are identity and the 2PAF is locally normalized with $q^{-1}$ as in Eq.~\eqref{eq:2paf_def}, then the diagrams are of order $O(1)$ for (b,g,l); $O(q^{-1})$ for (d) and (k); and $O(q^{-2})$ for (e,f,h,i), i.e. the twisted and ladder diagrams are not on equal footing. The boxed diagrams, (b) and (d), are the leading local diagrams and they form the emergent effective Ising degrees of freedom for global TRS many-body chaotic systems.}
    \label{fig:2pcf_diag}
\end{figure*}

In this section, we derive exact results for the two-point autocorrelation function (2PAF) and its fluctuation in {\gqmbcs} with TRS in the large-$q$ limit. 
We demonstrate that, due to TRS and the many-body interactions between time-reversed pairings of Feynman paths [Fig.~\ref{fig:contraction}], both the 2PAF and its fluctuations exhibit distinct signatures of many-body quantum chaos, in contrast to the case without TRS.
Specifically, we show that the 2PAF are  dominated by time-reversed pairings of Feynman paths on physical sites with local operator support, resulting in characteristic negative and suppressed values of 2PAF of antisymmetric operators, as compared to the case with symmetric operators. By summing 2PAF over a complete operator basis (using Eq.~\eqref{eq:sff_2paf_connection} below), we rederive the emergent Ising scaling behaviour of the SFF in global TRS  {\gqmbcs}. Furthermore, we find that the fluctuations of the 2PAF are governed by an emergent three-state Potts model, leading to exponential growth of the operator support size in the presence of TRS.
Consider the Hilbert space $\mathbb{C}^{q^L}$ of a quantum many-body system with $L$ sites, where each local Hilbert space is $\mathbb{C}^q$. At each physical site, we use the generalized Gell-Mann matrices to form an operator basis.
 To this end, let $E_{j,k}$ denote a matrix with the $(j,k)$-th matrix element given by 1, and 0 otherwise. The Gell-Mann matrices are defined by
\begin{IEEEeqnarray}{ll}   
\quad o_{\mu } \equiv o_{j,k } =
\label{eq:gellmann}
\\ \nonumber
\begin{cases}
\sqrt{\frac{2}{j (j+1)}} \left( \, 
\sum_{\ell=1}^j E_{\ell,\ell}  - j E_{j+1, j+1} 
\right), \,  &  j=k \in [1, q-1], 
\\ 
 E_{j,k} + E_{j,k} \, ,  &  j>k ,
\\
 -i (E_{j,k} - E_{j,k}) \, , &  j<k, 
\end{cases}
\end{IEEEeqnarray}
such that there are $q+q(q-1)/2+q(q-1)/2=q^2-1$ basis states, and we have used $\mu$ to label the 2-tuple $(j,k)$. 
For convenience, we  further define $o_{q,q}\equiv \mathbb{1}$.
This operator basis is traceless for $(j,k) \neq (q,q)$, and is orthonormal, i.e. $q^{-1} \Tr[o_{\mu} o_{\nu}] = \delta_{\mu \nu}$.  For $q=2$, this basis consists of Pauli matrices. 
 Note that the set of 2-tuples $(j,j)$ for $j \in [1, q-1]$ labels all diagonal operator basis states other than the identity, and the set of 2-tuples $(j,k)$ with $j>k$ ($j<k$) labels all off-diagonal and (anti-)symmetric operator basis states.
Importantly, we will see that the behaviour of 2PAF and its fluctuation of a given operator depend on whether the local operator support at time zero is diagonal and symmetric.

\begin{figure*}[ht]
    \centering
    \includegraphics[width=1 \textwidth]{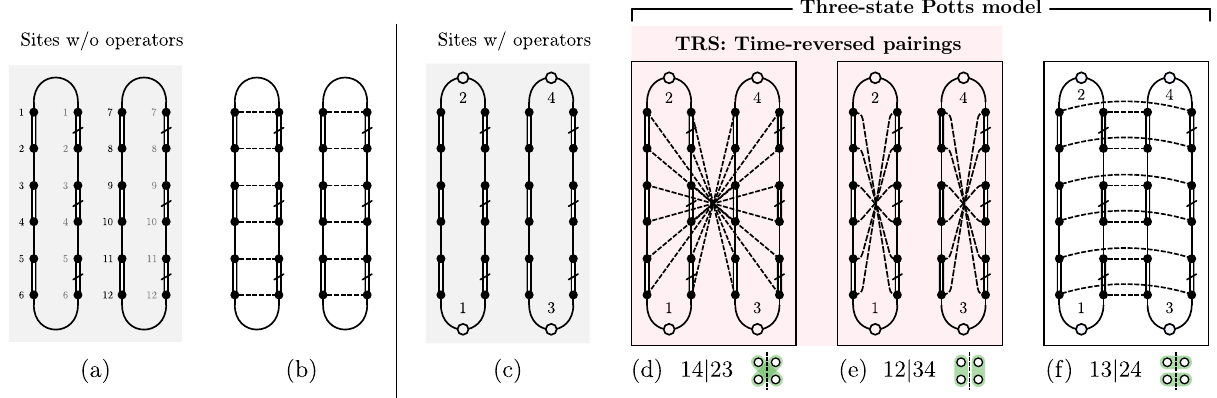}
    \caption{\textbf{2PAF fluctuations diagrams.} (a,b) Diagrammatical representation and the leading local diagram on sites without operator support respectively for all TRS dynamics. (c) Diagrammatical representation on sites with operator support, with leading diagrams given by (d,e,f) for \textbf{Floquet TRS} and \textbf{global TRS} {\gqmbcs}. (d,e,f) corresponds to the three ways of pairing up four operators, forming the effective degrees of freedom for the three-state Potts model, see discussion in the main text and in App.~\ref{app:2paf_proof}. 
    For {\gqmbcs}   with \textbf{local TRS} or \textbf{without TRS}, (f) is the only leading diagrams.
    If the operators are identity and the 2PAF is locally normalized with $q^{-1}$ as in Eq.~\eqref{eq:2paf_def}, then the diagrams are of order $O(1)$ for (b); and $O(q^{-2})$ for (d,e,f).
    The boxed diagrams, (d,e,f), form the emergent effective  degrees of freedom  of three-state Potts model for global TRS chaotic systems.
    }
    \label{fig:2pcf_fluc_diag}
\end{figure*}

For the many-body Hilbert space, we take operator strings $O_{{\mu}} =\bigotimes_{i=1}^L o_{\mu_i}$ to form the basis states, where we have abused the notations and denoted  $\mu=(\mu_1,\mu_2, \dots,\mu_L)$.
The \textit{two-point correlation function} at time $t$ at infinite temperature is defined by
\be\label{eq:2paf_def}
C_{\mu \nu }(t)  = \frac{1}{\mathcal{N}} \Tr[ O_{\mu}(t) O_\nu(0)] \, ,
\ee
where $O_\mu(t) = U(t) O_\mu U^\dagger (t)$ is the time-evolved operator under the Heisenberg picture, and $\Nhil$ is the Hilbert space dimension. The \textit{two-point autocorrelation function} (2PAF) of operator $O_\mu$ is defined as $C_{\mu \mu }(t)$. 
Connections between  SFF and correlation functions have been pointed out in \cite{Cotler2017complexity, Gharibyan_2018}. Specifically, it has been shown that SFF or partial SFF is equal to the sum of 2PAF of operators with all possible operator supports in the system or in subregion $A$, i.e. 
\be \label{eq:sff_2paf_connection}
\ba
K(t) = & \,\sum_{\mu \in \mathcal{P}} C_{\mu \mu} (t)  \, , 
\ea
\ee
where $\mathcal{P}$ is a complete set of basis operators acting on the system including the identity operator. For PSFF, we have similarly $K_A(t) = \, \sum_{\mu \in \mathcal{P}_A} C_{\mu \mu} (t)$ with $\mathcal{P}_A$ the complete set of basis operators acting on $A$. 
Note that these identities hold generally, independent of the properties of the quantum many-body systems $U(t)$, and without the ensemble average.

Two-point correlation functions play a crucial role in  characterizing quantum fluctuations and understanding non-equilibrium dynamics in quantum many-body systems~\cite{Lieb, rickayzen1980greens, Rigol2008, Rigol, Luitz_2016, Cheneau_2012}. 
%
%
%
The behaviour of correlation functions in {\gqmbcs} as modeled by random quantum circuits has recently been studied in temporal-random HRM~\cite{Nahum2022correlator}, Floquet RPM~\cite{yoshimura2023operator}, and Floquet HRM~\cite{chan2018solution}. 
For temporal random {\gqmbcs}, the 2PAF decays rapidly and the SFF approaches unity on a timescale of order one. 
For {\gqmbcs} with time translational symmetry~\cite{yoshimura2023operator}, the system tends to exhibit full random matrix behaviour at late times, displaying the ramp-plateau structure as captured by \( \overline{C_{\mu\mu}} \approx [\overline{K_{\mathrm{RMT}}(t)} - 1]/(q^{2L} - 1) \).
At sufficiently early times, the 2PAF displays a bump preceding the onset of the ramp, arising from a mechanism analogous to that in the SFF, as {\gqmbcs} resemble patches of random matrices.
Here, we demonstrate that the 2PAF and its fluctuations in {\gqmbcs} exhibit several generic behaviours that are uniquely tied to the presence of time-reversal symmetry (TRS).
We begin by stating a lemma identifying the leading diagrams contributing to the 2PAF and its fluctuations in Sec.~\ref{sec:2paf_diag}, followed by the computation of the 2PAF in Sec.~\ref{sec:2paf_calc} and its fluctuations in Sec.~\ref{sec:2paf_fluc}.

We note that in the large-\( q \) limit, fluctuations of the 2PAF in our random quantum circuit models can be computed analytically from the first two moments. Throughout this work, we will use the term `fluctuations' to refer interchangeably to both the variance and the second moment of the 2PAF, with the latter offering a simpler analytical expression.

\subsection{2PAF diagrams}\label{sec:2paf_diag}
Here we give a lemma identifying the leading local 2PAF diagrams and 2PAF fluctuation diagrams in {\gqmbcs} with Floquet or global TRS, and also in {\gqmbcs} without symmetries. 

\begin{lemmaalt}
\label{lemma:leading2pcf}
\textbf{Leading local two-point autocorrelation function (2PAF) for generic quantum many-body quantum chaotic systems with or without TRS.} Consider the 2PAF \eqref{eq:2paf_def} with local diagram  labeled as in Fig.~\ref{fig:2pcf_diag} (a). 
For  (i) random phase model (RPM), (ii) Haar-random model (HRM), and (iii) the random matrix model (RMT) with Floquet TRS / with global TRS / without symmetries.
For (i) and (ii) at each physical site in the order of the local Hilbert space dimension $q$, and for (iii)  in the order of matrix dimension $N$, 
 the leading local 2PAF diagram is of order $O(1)$ given by the ladder contraction in $S_{2t}$   [Fig.~\ref{fig:2pcf_diag} (b)] 
\be
\sigma_{\mathrm{ladder}}^{(m=1)}(i) =  i \, ,   \quad \text{TRS / No symmetries, }\ee
for sites without local operator support with or without TRS. 
For sites with non-trivial operator support, the leading local 2PAF diagram  in the presence of TRS is of order $O(q^{-1})$ or $O(N^{-1})$, and is given by the twisted contractions  in $S_{2t}$   [Fig.~\ref{fig:2pcf_diag} (d)] 
\be \label{eq:2paf_diag_v1}
\sigma_{\mathrm{twisted}}^{(m=2t)}(i) = 2t - i +1 \, , \quad \text{TRS.}
\ee
For sites with non-trivial operator support, for RPM, HRM, and RMT in the absence of TRS, the leading local 2PAF diagram are of order $O(q^{-2})$ or $O(N^{-2})$, and are given by ladder contractions in $S_{2t}$ [Fig.~\ref{fig:2pcf_diag} (h,i)] 
\be \label{eq:2paf_diag_notrs}
\sigma_{\mathrm{ladder}}^{(m)}(i) = (m + i -1) \Mod{2t}\, ,  \quad \text{No sym.,} 
\ee
for $m=3,5,\dots, 2t-1$.
\end{lemmaalt}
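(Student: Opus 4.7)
The plan is to adapt the diagrammatic expansion developed for Lemma~\ref{lemma:leading_sff_diag} to the 2PAF. First I would write the ensemble-averaged 2PAF $\overline{C_{\mu\mu}(t)} = \Nhil^{-1}\, \overline{\Tr[U(t) O_\mu U^\dagger(t) O_\mu]}$ as a tensor network, apply the COE Weingarten formula~\eqref{eq:coe_formula} to each random unitary (at each site for RPM, each bond for HRM, or the whole matrix for RMT), and generate a sum over local permutations $\sigma \in S_{2t}$. The scaling of each diagram in $q^{-1}$ (or $N^{-1}$) is then controlled by three sources: the Weingarten function $\wg_{\mathrm{COE}}[\{c_i\}; N] = O(N^{k-2\sum_i c_i})$, which favours contractions with many short cycles; delta-function loops contributing factors of $q$; and, at operator insertion points, traces of Gell-Mann products constrained by $q^{-1}\Tr[o_\mu o_\nu] = \delta_{\mu\nu}$ together with tracelessness of $o_\mu$ for $\mu \neq 0$.

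The key geometric point is that the 2PAF is a single trace passing through both $U(t)$ and $U^\dagger(t)$, so that at each physical site the double-line network has the topology of a single oriented time loop, in contrast to the two disjoint time loops of the SFF. On sites without operator support, the local operator reduces to the identity and the insertion is transparent; the restricted single-loop topology breaks the cyclic symmetry that made all $t$ ladder and $t$ twisted SFF diagrams leading in Lemma~\ref{lemma:leading_sff_diag}, and only the trivial pairing $\sigma_{\mathrm{ladder}}^{(m=1)}$ closes all single-line loops maximally, yielding the unique $O(1)$ contribution independently of TRS. On sites with nontrivial operator support, every surviving SFF-type contraction acquires an operator trace at the insertion: a generic ladder pairing $\sigma_{\mathrm{ladder}}^{(m)}$ forces this trace to factor as $\Tr[o_\mu]$ or $\Tr[o_\mu]^2$, which vanishes by tracelessness [Fig.~\ref{fig:2pcf_diag}(g)], while in the presence of TRS the twisted pairing $\sigma_{\mathrm{twisted}}^{(m=2t)}$ threads the two insertions into a single trace $\Tr[o_\mu o_\mu^T] = \pm q$ by Gell-Mann orthonormality, with the sign encoding (anti-)symmetry of $o_\mu$. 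Balancing this nonvanishing trace against the cost of one missing $q$-loop relative to the SFF case yields the $O(q^{-1})$ scaling. Without TRS no twisted pairings are permitted by the contraction rules, so the leading surviving diagrams are ladder contractions with $m=3,5,\dots,2t-1$ that route the two insertions through a single product $o_\mu o_\mu$, producing $\Tr[o_\mu^2] = q$ at the further cost of one missing loop, hence $O(q^{-2})$.

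The main obstacle will be confirming that no other $\sigma$ beyond the listed families attains the stated leading order. This requires a careful cycle-counting argument: combining $\wg_{\mathrm{COE}}[\{c_i\}; N] = O(N^{k-2\sum_i c_i})$ with the fact that the number of $q$-loops is bounded above by a topological invariant of the contracted diagram, one shows that any $\sigma$ whose cycle structure differs from those of the leading families by a nontrivial cycle merge loses at least one additional factor of $q$, while any $\sigma$ that threads the operator insertions differently necessarily produces a vanishing trace by tracelessness. For HRM this uses the bond constraint enforcing a consistent contraction on each shared two-site gate, and for RPM it uses that the Gaussian integral over random phases is nonzero only when the phase contributions from $\Tr[U(t)]$ and $\Tr[U^\dagger(t)]$ fully cancel, further restricting admissible contractions at each bond. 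The detailed case-by-case enumeration across RPM, HRM, and RMT would be deferred to an appendix paralleling the proof of Lemma~\ref{lemma:leading_sff_diag}.
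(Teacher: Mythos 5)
Your proposal rests on the same core diagrammatic machinery as the paper's proof in Appendix~\ref{app:2paf_proof}: the COE Weingarten expansion at each site, loop-counting to fix the order in $q$, and tracelessness of the Gell-Mann operators to kill the trivial ladder at operator-support sites. The organization, however, is genuinely different. The paper first proves the fluctuation result, Lemma~\ref{lemma:2paf_fluc_v2}, by working in $S_{4t}$: tracelessness forbids $\sigma(1)=1$, so $\sigma(1)\in\{2t,2t+1,4t\}$ is forced in order to close a short loop through the operator insertions, after which an iterative shortest-loop construction pins down the three pairings $\sigma_{(12|34)}$, $\sigma_{(13|24)}$, $\sigma_{(14|23)}$; the 2PAF lemma is then obtained as a corollary because only $\sigma(1)=2t$ is available in $S_{2t}$. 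You instead argue the 2PAF directly, via a topological observation that the single connected trace of $\Tr[U O_\mu U^\dagger O_\mu]$ breaks the independent cyclic invariances of $\Tr U$ and $\Tr U^\dagger$ that underlie the $2t$ leading SFF diagrams, so only $\sigma_{\mathrm{ladder}}^{(m=1)}$ is $O(1)$ at transparent sites, while the twisted contraction at support sites threads a factor $\Tr[o_\mu^T o_\mu]=\pm q$. Both routes work: the paper's route proves the harder $S_{4t}$ statement in the same pass and cleanly isolates which contractions survive tracelessness, while yours is more economical if only the 2PAF is needed and, unlike the paper (which defers to \cite{yoshimura2023operator}), supplies a self-contained argument for the no-TRS case as well. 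One phrasing to tighten: the statement that the $O(q^{-1})$ scaling reflects ``one missing $q$-loop relative to the SFF case'' is loose, since the SFF and 2PAF carry different local normalizations; the correct comparison is within the 2PAF itself, where the twisted contraction at a support site has one fewer single-line loop than the $m=1$ ladder with identity insertions, and the operator trace $\pm q$ plays the role of a loop factor rather than recovering the missing one.
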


\noindent\textit{Proof.} The proof for \ref{lemma:leading2pcf} is provided in the Appendix~\ref{app:2paf_proof}.

 In regions of operator support, there is a single leading local 2PAF diagram in the presence of TRS, Eq.~\eqref{eq:2paf_diag_v1}, of order $O(q^{-1})$. 
In contrast, there are $(t-1)$ leading local 2PAF diagrams \cite{yoshimura2023operator} of order $O(q^{-2})$ in Eq.~\eqref{eq:2paf_diag_notrs}, with $\sigma_{\mathrm{ladder}}^{(m=1)}$ vanishing due to the traceless condition of the operator by definition.

\begin{lemmaalt}\label{lemma:2paf_fluc_v2}
\textbf{Leading local two-point autocorrelation function (2PAF) fluctuations diagrams for generic quantum many-body quantum chaotic systems with or without TRS.} Consider the 2PAF fluctuations with local diagram labeled as in Fig.~\ref{fig:2pcf_fluc_diag} (a).
For (i) random phase model (RPM), (ii) Haar-random model (HRM), and (iii) the random matrix model (RMT) with Floquet TRS / with global TRS / without symmetries.
For (i) and (ii) at each physical site in the order of the local Hilbert space dimension $q$, and for (iii)  in the order of matrix dimension $N$,  the leading local 2PAF fluctuations diagram is of order $O(1)$ given by the ladder contraction in $S_{4t}$   [Fig.~\ref{fig:2pcf_fluc_diag} (b)] 
\be
\sigma_{\mathrm{ladder}}^{(m=1)}(i) =  i \, ,   \quad \text{TRS / No symmetries, }\ee
for sites without local operator support with or without TRS. 
 In the presence of Floquet TRS or global TRS, for sites with non-trivial operator support, the leading local 2PAF fluctuations diagrams are of order $O(q^{-2})$ or $O(N^{-2})$, and are given by contractions in $S_{4t}$ [Fig.~\ref{fig:2pcf_fluc_diag} (d,e,f) respectively] 
\begin{IEEEeqnarray}{rll} 
\sigma_{(14|23)}(i) = & \,(4t +1 - i) \Mod{4t}  \, , 
\quad &    \nonumber
\\
\sigma_{(12|34)}(i) = & \,  (2t +1 - i) \Mod{4t} \, ,  
\quad &  \text{TRS.}   \label{eq:2paf_fluc_diag_trs}
\\
\sigma_{(13|24)}(i) = &  \, (2t+ i) \Mod{4t}  \, , 
\quad &   \nonumber
 \end{IEEEeqnarray}
In the absence of TRS, for sites with non-trivial operator support, the leading local 2PAF fluctuations diagram is of order $O(q^{-2})$ or $O(N^{-2})$, and is given by contractions in $S_{4t}$ [Fig.~\ref{fig:2pcf_fluc_diag} (f)]
 \begin{IEEEeqnarray}{rll} 
\sigma_{(13|24)}(i) = &  \, (2t+ i) \Mod{4t}  \, 
,  \quad & \text{No sym.}  \label{eq:2paf_fluc_notrs_diag}
 \end{IEEEeqnarray}
\end{lemmaalt}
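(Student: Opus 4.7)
The plan is to extend the proof of Lemma~\ref{lemma:leading2pcf} to the squared observable entering the 2PAF fluctuations. Writing $\overline{|C_{\mu\mu}(t)|^2}$ diagrammatically doubles the number of $U(t)$ and $U^\dagger(t)$ factors, so each physical site (for the RPM) or each bond (for the HRM) now carries $4t$ Haar-random gates and $4t$ conjugates. Applying the diagrammatic rules of Sec.~\ref{sec:diag}, each local contraction is labelled by a permutation $\sigma \in S_{4t}$, and its order in the inverse local Hilbert space dimension is controlled by the Weingarten function $\wg[\sigma;q^2]$ (or $\wg[\sigma;q]$ for the RPM) together with the count of closed index loops formed by the delta functions.

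First I would dispose of sites without operator support. There the four operator insertions are identity, no constraint is imposed on the loop structure, and the analysis reduces verbatim to the SFF bookkeeping of Lemma~\ref{lemma:leading_sff_diag} lifted from $S_{2t}$ to $S_{4t}$. Maximizing the number of loops while respecting the bond constraint yields the unique order-$O(1)$ contraction $\sigma_{\mathrm{ladder}}^{(m=1)}$ depicted in Fig.~\ref{fig:2pcf_fluc_diag}(b), with every other ladder or twisted sector producing a loop deficit once the normalization $\mathcal{N}^{-2}$ and the Weingarten factors are included.

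Second, I would analyse sites carrying non-trivial operator support. Four traceless Gell-Mann matrices $o_{\mu_i}$ are inserted, and tracelessness forces any closed loop that wraps a single such insertion to vanish. A non-zero leading contraction must therefore organize the four insertions into two pairs, each connected by a closed Wick loop that encloses an even number of operators; this leaves exactly the three combinatorial pairings $(14|23)$, $(12|34)$, $(13|24)$ of the four operator slots. Translating each pairing back to a permutation of $S_{4t}$ and counting loops gives the three $\sigma$'s in Eq.~\eqref{eq:2paf_fluc_diag_trs}, each at order $O(q^{-2})$ or $O(N^{-2})$, corresponding to Figs.~\ref{fig:2pcf_fluc_diag}(d,e,f).

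Finally, I would impose the TRS/no-TRS distinction. In the absence of TRS only ladder (time-parallel) contractions survive at leading order, since twisted sectors are suppressed by additional powers of $q^{-1}$ already at the SFF level. The pairings $(14|23)$ and $(12|34)$ require a twisted segment to match operator indices across the two factors of $C_{\mu\mu}(t)$, so only $(13|24)$ persists, producing Eq.~\eqref{eq:2paf_fluc_notrs_diag}. The hard part of the argument will be verifying that no permutation mixing ladder and twisted segments achieves the same $O(q^{-2})$ scaling as the three distinguished pairings; this requires an inspection of all cycle structures in $S_{4t}$ whose Weingarten order plus loop count could tie the leading diagrams, directly analogous to the subleading-SFF analysis underlying Lemma~\ref{app_lemma:sublead}. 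I would present this exhaustive enumeration in Appendix~\ref{app:2paf_proof} alongside the proof of Lemma~\ref{lemma:leading2pcf}, using the irreducible-diagram and topological-equivalence language introduced for the SFF to keep the case analysis compact.
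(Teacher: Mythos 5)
Your proposal tracks the paper's argument closely: it labels local contractions by $\sigma\in S_{4t}$, maximizes the number of short loops, uses tracelessness to exclude $\sigma(1)=1$ on supported sites, and identifies the three pairings $(14|23)$, $(12|34)$, $(13|24)$ of the four operator insertions as the surviving leading structures. Two points are worth tightening. First, the step you flag as "the hard part" and propose to settle by an exhaustive scan of cycle structures analogous to Lemma~\ref{app_lemma:sublead} is handled in the paper by a simpler constructive device: once one fixes $\sigma(1)\in\{2t,2t+1,4t\}$ (the only values that avoid a trace over a single traceless insertion while still closing a short loop through a pair of operators), the requirement that every subsequent loop have length one forces each $\sigma(j)$ uniquely, because any other choice produces a longer loop and costs at least one power of $q^{-1}$. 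There is therefore no family of mixed ladder/twisted permutations to rule out separately -- they are excluded automatically by the forced-choice chain. Second, your no-TRS reasoning says twisted sectors are "suppressed by additional powers of $q^{-1}$ already at the SFF level"; that is not the right mechanism. Without TRS the local gates are drawn from CUE, and the CUE Weingarten expansion contains no twisted contractions at any order in $q$: the permutations $\sigma_{(14|23)}$ and $\sigma_{(12|34)}$ simply do not occur, rather than appearing with extra $q^{-1}$ suppression. The paper's statement -- that those two permutations consist of twisted contractions and hence exist only for COE, leaving $\sigma_{(13|24)}$ alone -- is the accurate version. With these two corrections, your proof is essentially the paper's.
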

\noindent\textit{Proof.} The proof for Lemma \ref{lemma:2paf_fluc_v2} is provided in Appendix~\ref{app:2paf_proof}.

\vspace{0.3cm}
An intuitive understanding of why there are three leading diagrams Eq.~\eqref{eq:2paf_fluc_diag_trs} in the presence of TRS is described in the introduction -- time-reversed pairing of Feynman paths allows all three ways to group four objects into pairs. 
Note that one can derive Eq.~\eqref{eq:2paf_fluc_notrs_diag} for the case without TRS simply by observing that only $\sigma_{(13|24)}$ in Eq.~\eqref{eq:2paf_fluc_diag_trs} does not involve time-reversed pairings.

\begin{figure*}[t]
    \centering
    \includegraphics[width=0.85 \textwidth]{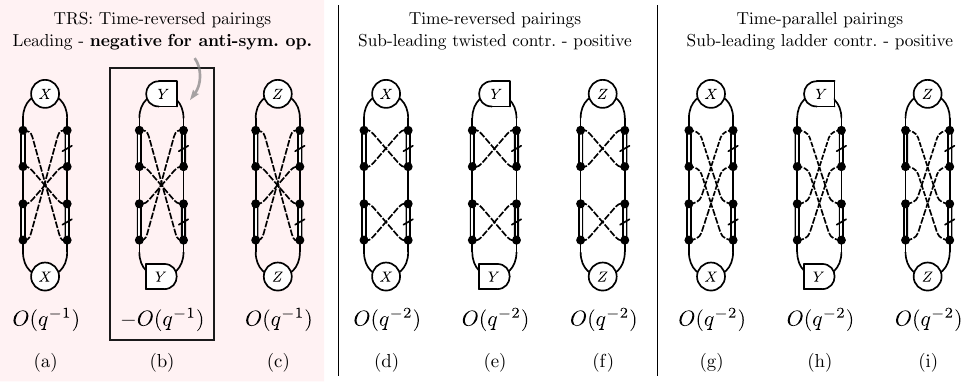}
    \caption{\textbf{Time-reversed pairings in 2PAF of antisymmetric operators give negative leading contributions. } Consider local 2PAF diagrams at time $t=2$ for sites where operators are supported.
     The 2PAF has leading contributions at order $O(q^{-1})$ from time-reversed pairings of Feynman paths, giving  (a, c) positive contributions for symmetric operators, e.g. Pauli-$X$ and -$Z$; and (b) negative contributions for antisymmetric operators, e.g. Pauli-$Y$. 
    A subset of subleading contributions of 2PAF of $O(q^{-2})$ with (d, e, f)  time-reversed pairings and (g,h,i) time-parallel pairings are positive, regardless of the symmetricity of the opoerator. The full set of subleading contributions are discussed in a upcoming work in~\cite{upcoming}.
In the absence of TRS, the leading contributions to the 2PAF appear at order $O(q^{-2})$, corresponding to (d, e, f). These contributions are positive and independent of whether the operators are symmetric or antisymmetric.
    }
    \label{fig:op_dependence}
\end{figure*}

\subsection{2PAF evaluations}\label{sec:2paf_calc}
\subsubsection{Time-reversed pairings}
In the case of SFF, the presence of TRS introduces both time-reversed pairings of Feynman paths (the twisted diagrams) and time-parallel pairings (the ladder diagrams). These two types of diagrams contribute to SFF on an equal footing. In contrast, in TRS {\gqmbcs}, time-reversed pairings of Feynman paths lead to contributions in 2PAF that are more \textit{dominant} -- by an order in $q$ at each site with non-trivial operator support -- than the leading contributions without TRS [Lemma~\ref{lemma:leading2pcf}]. This imbalance indicates that time-reversed and time-parallel pairings are \textit{not} on equal footing in 2PAF.

Here we evaluate the 2PAF of a given operator $O_\mu$ in the global TRS or Floquet TRS RPM. As before, we perform the ensemble averages over the one-site Haar-random gates and two-site coupling random phase gates in sequence. 
In regions with non-trivial operator support, Lemma~\ref{lemma:leading2pcf} states that the ensemble averages over local Haar-random gates lead to a single local leading twisted diagram $\sigma_{\mathrm{twisted}}^{(m=2t)}$ [Fig.~\ref{fig:2pcf_diag} (d)] of 2PAF of order $O(q^{-1})$. 
In contrast, in the case without TRS~\cite{yoshimura2023operator}, where the leading diagrams are the $t-1$  ladder diagrams [Fig.~\ref{fig:2pcf_diag} (h,i)] of order $O(q^{-2})$, where the missing ladder diagram [Fig.~\ref{fig:2pcf_diag} (g)] vanishes due to the traceless property of the operator.
In regions without operator support, it is straightforward to show that the leading local diagram is $\sigma_{\mathrm{ladder}}^{(m=1)}$ [Fig.~\ref{fig:2pcf_diag} (a)].

Importantly, due to the time-reversed pairing of Feynman paths, the dominant local 2PAF diagram [Eq.~\eqref{eq:2paf_diag_v1}] contributes a factor of \( \Tr[O_{\mu_i}^T O_{\mu_i}] = \pm q \) for each site with operator support. The sign of this contribution depends on whether the local operator \( O_{\mu_i} \) is symmetric (positive) or antisymmetric (negative).  
In fact, the negative values of the 2PAF for antisymmetric operators are essential for explaining several key phenomena. A number of remarks are in order. First, we emphasize that this sign structure arises from time-reversed pairings, which are inherently tied to the presence of TRS in {\gqmbcs}, and can be diagrammatically represented in Fig.~\ref{fig:op_dependence}. Second, the presence of this sign implies that the leading order 2PAF is negative for antisymmetric operators. This prediction is directly confirmed by numerical simulations of the 1D Floquet TRS HRM (not RPM), shown in Fig.~\ref{fig:2pcf_numerics_antisym} for two Floquet models and a Hamiltonian model. Third, this sign is crucial for the relation Eq.~\eqref{eq:sff_2paf_connection} to be satisfied. Suppose, for contradiction, that \( C_{\mu \mu}(t) = O(q^{-|O|}) \) remained strictly positive for all operators. Since there are \( O(q^{2|O|}) \) operators with support size \( |O| \), this would lead to a total contribution of \( O(q^{|O|}) \) to the spectral form factor (SFF). However, the SFF is known to scale as \( O(1) \) in \( q \), resulting in a contradiction. Therefore, the scaling behaviour of the 2PAF with respect to \( |O| \), as shown in Fig.~\ref{fig:2pcf_numerics}, provides a non-trivial consistency check of Eq.~\eqref{eq:trs_2pcf}. Fourth, we show below that the Ising scaling behaviour of the SFF in {\gqmbcs} with global TRS can be rederived by summing the 2PAF over a complete operator basis after accounting for the negative contributions from antisymmetric operators. Given universal nature of time-reversed pairings and the required consistency between the SFF and the 2PAF, we expect this operator-dependent structure of the 2PAF to be a generic dynamical feature of {\gqmbcs} in the presence of TRS.

The ensemble average of the two-site coupling gates closely follows to the average performed for SFF described above Eq.~\eqref{eq:ising_mapping}, except that the many-body interactions between local diagrams on site $a$ and $b$ depend on whether the local operators at $a$ and $b$ are off-diagonal. 
Explicitly, for Floquet TRS RPM and Global TRS RPM with general geometries, in the leading order in $q^{-1}$, the 2PAF of operator $O_{\mu}$ is given by 
\begin{IEEEeqnarray}{rl} \label{eq:2paf_full}
& \lim_{q\to \infty} \overline{C_{\mu \mu, \text{TRS}}^{\text{RPM}}(t)} 
= \bigg( \, 
\prod_{a } \mathcal{B}_{\mu_a \mu_a}^{\text{s}}
\bigg)
\bigg( \, 
\prod_{\langle a,b \rangle} \mathcal{B}_{\mu_a, \mu_b}^{\text{b}}
\bigg)
\\ \nonumber
&\mathcal{B}^{\text{s}}_{\mu_1, \mu_2 }=
\delta_{j_1 j_2} \delta_{k_1 k_2}
\big\{
\delta_{j_1 q } \delta_{k_1 q} 
+ 
(1- \delta_{j_1 q } \delta_{k_1 q} ) \delta_{j_1, k_1} q^{-1}
\\ \nonumber
& \quad  \quad 
(1- \delta_{j_1, k_1})
\left[ \Theta(j_1 - k_1)q^{-1} 
-\Theta(k_1 - j_1)q^{-1} 
\right]
\big\}
\\ \nonumber
&\mathcal{B}^{\text{b}}_{\mu_1, \mu_2 }=
\delta_{j_1 j_2} \delta_{k_1 k_2}
+
(1-\delta_{j_1 j_2 \delta_{k_1 k_2}})
\\ \nonumber
&
 \quad \quad  \big\{
 \delta_{j_1 q} \delta_{k_1 q}
 [\delta_{j_2 k_2} e^{-\epsilon \alpha(t) } + (1- \delta_{j_2 k_2}) e^{-\epsilon (\alpha(t) +1)}]
\\  \nonumber
 &
 \quad\quad 
 \delta_{j_2 q} \delta_{k_2 q}
 [\delta_{j_2 k_2} e^{-\epsilon \alpha(t) }  + (1- \delta_{j_2 k_2})
  e^{-\epsilon (\alpha(t) +1)} ]
 \\ \nonumber
 & 
\quad \quad +
 (1-\delta_{j_1 q} \delta_{k_1 q})
 (1-\delta_{j_2 q} \delta_{k_2 q})
\big\}
\end{IEEEeqnarray}
where $\mu = (\mu_1, \mu_2, \dots, \mu_L)$ with $\mu_i = (j_i, k_i)$, and $\alpha(t) =  t-1 - \delta_{0, t\Mod{2}}$ as before. The first product in \eqref{eq:2paf_full} is over all sites, and the second is over all bonds. 
Each site is associated with a Boltzmann weight $\mathcal{B}^{\text{s}}$ which is diagonal in its indices $\mu_1$ and $\mu_2$. At each site, $\mathcal{B}^{\text{s}}$ assigns a value of 1 to identity operators, $q^{-1}$ to symmetric operators, and $-q^{-1}$ to antisymmetric operators.
Each bond is associated with the Boltzmann weight $\mathcal{B}^{\text{b}}$, which describes the many-body interactions between time-reversed and time-parallel pairings of Feynman paths at different sites. 
Specifically, $\mathcal{B}^{\text{b}}=1$ between two sites with time-reversed pairings or two sites with time-parallel pairings. $\mathcal{B}^{\text{b}}$ is exponentially suppressed in $t$ for bonds connecting time-reversed  and time-parallel pairings.

\begin{figure*}[ht]
    \centering
    \includegraphics[width=1 \textwidth]{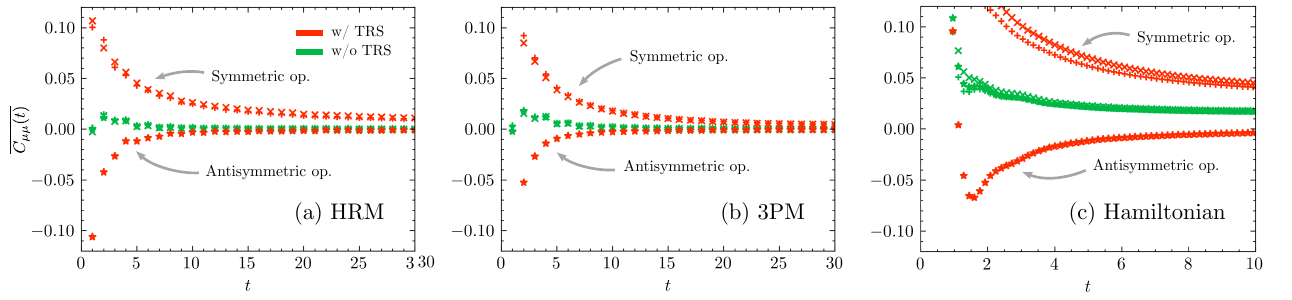}
    \caption{\textbf{2PAF of symmetric and antisymmetric operators} for one-dimensional (a) Floquet HRM, (b) Floquet 3PM, and (c) time-independent chaotic Hamiltonian model, in the presence (red) and absence (green) of TRS with $q=2$ and $L=10$.
 %
    %
The operators are single-site Pauli-$Z$ (cross symbols), Pauli-$X$ (x symbols), and  Pauli-$Y$  (star symbols) operators placed in the middle of the system.
    The 2PAF of antisymmetric (symmetric) operators is negative (positive) in the presence of TRS, with a smaller absolute value compared to the symmetric case, consistent with the result in Eq.~\eqref{eq:trs_2pcf_onekindop} and Fig.~\ref{fig:op_dependence}.
}   \label{fig:2pcf_numerics_antisym}
\end{figure*}

%
For $t \gg 1$, the exponential factors in Eq.~\eqref{eq:2paf_full}  can be approximated with $e^{-\epsilon t}$, and $\mathcal{B}^{\text{b}}_{\mu_1, \mu_2}$ depends only on whether the operators $\mu_{i}$ are identity or not. The 2PAF of an operator $O_\mu$ in the Floquet TRS RPM or Global TRS RPM in general dimensions is given by
\begin{IEEEeqnarray}{rl}\label{eq:trs_2pcf_onekindop}
    & \quad \lim_{q\to \infty}  \overline{C^{\text{RPM}}_{\mu\mu, \text{TRS}} (t)} = \Omega^{|O|} \Gamma^{|\partial O|} f(O_\mu) \,,
\\
\nonumber
&   \Omega = \, \,  \left(q^{2}-1 \right)^{-1} , \quad 
\Gamma =  \, \,  e^{-\epsilon  t } , \quad f(O_\mu) = \frac{1}{\mathcal{N}} \Tr[O_{\mu}^T O_{\mu}] ,
\end{IEEEeqnarray}
where  $|O|$ and $|\partial O|$ respectively denote the total size and boundaries of the operator support  (of possibly disconnected regions). 
$f(O_\mu)$ gives 1 if $O_\mu$ is symmetric, and $-1$ if  $O_\mu$ is antisymmetric.  
Further, we can average 2PAF over the set $\MM_A$ of all operators with non-trivial support at all sites in the region $A$ (in addition to the ensemble average over the quantum many-body systems). 
For sufficiently large $t$, for Floquet TRS and Global TRS RPM in general dimensions, we obtain 
\be\label{eq:trs_2pcf}
\ba
& \lim_{q\to \infty} \overline{C^{\text{RPM}}_{A, \text{TRS}} (t)} \equiv
\lim_{q\to \infty} \mathbb{E}_{\mu \in \MM_A} \left[ \overline{C^{\text{RPM}}_{\mu\mu, \text{TRS}} (t)}\right] \, ,
\\
& \, \,  \, \, = \Omega^{|O|} \Gamma^{|\partial O|} ,
 \quad \Omega = \, \,  \left(q^{2}-1 \right)^{-1} , \quad 
\Gamma =  \, \,  e^{-\epsilon  t } , 
\ea
\ee
where  the expectation value $\mathbb{E}$ denotes the average over $\MM_A$, the set of operators with non-trivial operator support at each site in $A$.

In the absence of TRS (and other symmetries),  2PAF for 1D Floquet RPM can be evaluated using the transfer matrix of SFF as~\cite{yoshimura2023operator}
\begin{equation}\label{eq:2pcf_nosym_1d}
\ba
& \lim_{q\to \infty} \overline{C^{\text{RPM}}_{\mu \mu, \,  \text{no-sym}} (t)} 
\\ & \quad = \, \,  q^{-2|O|} 
e^{-2 n \epsilon t} (t-1)^{n} [ 1 + (t-2) e^{-\epsilon t}]^{|O| - n} \, ,
\ea
\end{equation}
where $n$ is the number of connected region of operator support of $O_{\mu}$, and can be identified as $n=|\partial O|/2$ for one-dimensional {\gqmbcs} in PBC. 
The Thouless time is the time when $ |O| \tth e^{-\epsilon \tth }=O(1)$ holds, and the Thouless operator support size is $|O|_{\mathrm{Th}} = e^{\epsilon t}/t$.
For simplicity, take $n=1$, i.e. there is only a single connected region of operator support. Taking the Thouless scaling limit where $t$ and $|O|$ are sent large, and $x= |O| / |O|_{\mathrm{Th}}$ is fixed, we obtain the scaling form of 2PAF of operator $O_{\mu}$ as 
\be\label{eq:2paf_nosym_scale}
\ba
\gamma^{\text{2PAF}}_{\text{no-sym}}(x)  := & \, 
  \lim_{\substack{t,|O| \to \infty \\ x= |O|/|O|_{\mathrm{Th}} }} 
q^{2|O|} \, t\,  |O|^2  \, \overline{C^{\text{RPM}}_{\mu \mu, \text{no-sym}}(t)} 
\\
& \qquad \qquad  \quad =  x^2 e^x \,. 
\, 
\ea
\ee

In the presence of TRS, we can interpret the 2PAF \eqref{eq:trs_2pcf} of operator $O_\mu$ as the partition function of a trivial statistical mechanical system described by a single state with Boltzmann weight depending on $|O|$ and $|\partial O|$, except that there is an additional sign dependent on whether $O_\mu$ is antisymmetric.  
In the absence of TRS in the one-dimensional case, by writing $n = |\partial O|$, we can cast  \eqref{eq:2pcf_nosym_1d} in the form of \eqref{eq:Z_A} with an emergent Potts model of $t-1$ states  living in region where the operator has non-trivial support. The number of states of the emergent Potts model is not $t$ (as  in the case for SFF in {\gqmbcs} with no symmetries), because the 2PAF diagram Fig.~\ref{fig:2pcf_diag} (g) vanishes due to the traceless condition of the operator.
Together, we write
\be\label{eq:2paf_partition}
\ba
\lim_{q\to\infty} \overline{C_A^{\text{RPM}}(t)} 
\propto 
\begin{cases}
     Z_A^{(t-1)\text{-Potts}}   \qquad & \text{No symmetries,}
     \\
    f(O_\mu) \,  Z_A^{\text{cluster}} \qquad & \text{TRS,}
\end{cases} 
\ea
\ee
where $ f(O_\mu) $ is defined in \eqref{eq:trs_2pcf_onekindop}, and gives rise to a sign if $O_\mu$ is antisymmetric.
For {\gqmbcs} without TRS~\cite{yoshimura2023operator}, the 2PAF \eqref{eq:2pcf_nosym_1d} displays a characteristic feature common to the SFF, namely a competition between exponential decays in $t$, and  polynomial increases in $t$, leading to a bump at an intermediate timescale without TRS at the order $q^{-2 |O|}$.
There are $t-1$ states in the emergent statistical mechanical model given by $t$ local time-parallel pairings of Feynman paths $\sigma_{\mathrm{ladder}}^{(m)}$ for $m=3,5,\dots, 2t-1$ defined in Lemma~\ref{lemma:leading_sff_diag}. One pairing $\sigma_{\mathrm{ladder}}^{m=1}$ is excluded due to the traceless operator condition.

\begin{figure}[ht]
    \centering
    \includegraphics[width=0.47 \textwidth]{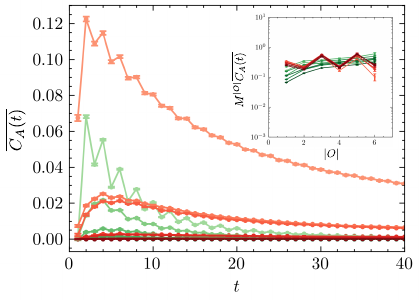}
    \caption{\textbf{2PAF numerical simulations} for the 1D Floquet TRS HRM (red) and 1D Floquet HRM (green) for $q=2$ and $L=10$ with obc. 
Main Panel: 2PAF $\overline{C_A(t)}$
against $t$  with $|O| \in [1,6]$ in increasingly dark shades. In (b) (and (c)), the operator supports are chosen to be contiguous regions positioned adjacent to the boundary. 
Inset: Normalized 2PAF $M^{|O|} \overline{C_A(t)}$ against $|O|$ with $t\in [3,9]$, where $M= q^{2 }-1 $ and $q^2$ for the TRS and no symmetries cases respectively. The normalized 2PAF does not grow (grows) in operator support size for the case with (without) TRS -- consistent with the theoretical predictions in Eq.~\eqref{eq:2pcf_nosym_1d} and Eq.~\eqref{eq:trs_2pcf}, which in turn requires the presence of negative contributions as in (a) and Eq.~\eqref{eq:trs_2pcf_onekindop}. 
}   \label{fig:2pcf_numerics}
\end{figure}

We provide numerical simulations of the 2PAF for two Floquet models (HRM and 3PM) and a chaotic Hamiltonian model with and without TRS. The models have system size $L=10$, local Hilbert space dimension $q=2$, and obc.
In Fig.~\ref{fig:2pcf_numerics_antisym}, we simulate 2PAF of symmetric operators (Pauli-\( Z \) and Pauli-\( X \)) and an antisymmetric operator (Pauli-\( Y \)) for the two Floquet models and the Hamiltonian model. 
In the presence of TRS, the symmetric operators yield positive 2PAF values, while the antisymmetric operator yields negative values. In contrast, all operators produce positive 2PAF values in the absence of TRS. 
Further, we observe that the absolute values of 2PAF for antisymmetric operators are smaller than the ones for symmetric operators (see discussion above).
The consistency of this behaviour in all three models supports Eq.~\eqref{eq:trs_2pcf_onekindop} and our assertions regarding the operator dependence of the 2PAF.
In Fig.~\ref{fig:2pcf_numerics}, we plot the operator-averaged $\overline{C_A^{\text{RPM}}(t)}$ against $t$  for the 1D Floquet HRM. The operator supports are chosen to be contiguous regions positioned adjacent to the boundary.  
For fixed operator support size \( |O| \), the averaged autocorrelation \( \overline{C_A(t)} \), computed over operators supported on region \( A \), is larger in Floquet TRS HRM than in Floquet HRM without symmetries. In both cases, $\overline{C_A(t)}$ exhibits a bump after averaging over operators in region $A$. 
This bump is not captured by \eqref{eq:trs_2pcf} in the TRS case, and requires subleading-in-$q$ analyses of 2PAF involving all $2t$ time-reversed and time-parallel pairings of Feynman paths (see discussions on finite-$q$ corrections below). 
In Fig.~\ref{fig:2pcf_numerics} inset, we plot $M^{|O|} \overline{C_A^{\text{RPM}}(t)}$ against $|O|$ for the 1D Floquet HRM, where $M=q^{2}-1$ and $q^2$ for the cases with and without TRS respectively.
%
$M^{|O|} \overline{C_A^{\text{RPM}}(t)}$ does not grow in operator size $|O|$ in the presence of TRS, but exhibits  an exponential increase  in operator size $|O|$ in the absence of TRS, supporting the large-$q$ results in Eq.~\eqref{eq:trs_2pcf} and Eq.~\eqref{eq:2pcf_nosym_1d}.

We expect the finite-$q$ corrections to lead to a suppressed 2PAF magnitude for antisymmetric operators compared to their symmetric counterparts, which is indeed observed in Fig.~\ref{fig:2pcf_numerics_antisym} for three Floquet and Hamiltonian models.  
While the unique dominant time-reversed pairing in 2PAF give negative or positive contributions depending on the symmetricity of the operators, the other subleading diagrams  from time-reversed pairings and time-parallel pairings [Fig.~\ref{fig:2pcf_diag} (a, c, d, e, f)] are always positive without operator dependence [Fig.~\ref{fig:op_dependence}].
As a result, the 2PAF for antisymmetric operators is expected to be weakened relative to symmetric operators, since the negative leading-order contribution is partially offset by these positive subleading terms.  
However, this argument is not rigorous, as the  subleading diagrams with time-reversed and time-parallel pairings do not span the full set of subleading contributions, which we will address in future work~\cite{upcoming}.

Lastly, we note that finite-$q$ corrections are necessary to accurately capture the bump observed in the operator-averaged 2PAF at $q = 2$ in the presence of TRS with discrete time translation symmetry, as shown in Fig.~\ref{fig:2pcf_numerics}.  
In the large-$q$ analysis, this bump is absent in Eq.~\eqref{eq:trs_2pcf} because the 2PAF is dominated by a single time-reversed Feynman pairing. This leading-order contribution alone is insufficient to account for the observed structure, indicating the importance of subleading-in-$q$ corrections.  
This necessity is further underscored by the fact that, at leading order, the 2PAF contributions are identical for both global and Floquet TRS RPMs, due to the same dominant pairing. However, their spectral form factors differ at leading order, driven by the summation over exponentially many subleading 2PAF terms in Eq.~\eqref{eq:sff_2paf_connection}.  
Moreover, the leading-order $2t$-ramp in the SFF of Floquet TRS {\gqmbcs} originates from the remaining $2t$ time-reversed and time-parallel pairings [Fig.~\ref{fig:2pcf_diag} (e, f, h, i)], which are absent at leading order in the 2PAF but emerge through subleading contributions.  
These observations collectively highlight the crucial role of subleading corrections in fully understanding the behaviour of the 2PAF -- a direction to be pursued in future work~\cite{upcoming}.

\subsubsection{SFF from 2PAF}

\begin{figure*}[ht]
    \centering
    \includegraphics[width=1 \textwidth]{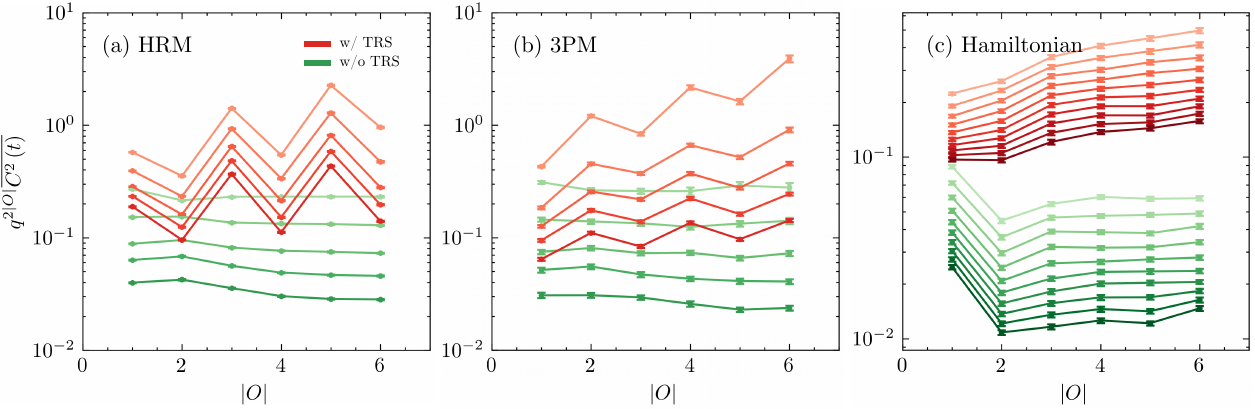}
    \caption{\textbf{Normalized 2PAF fluctuations against operator support size} for the 1D (a) Floquet HRM, (b) Floquet 3PM, and (c) Hamiltonian model in the presence (red) and absence (green) of TRS. The data are taken at $q=2$ and $L=10$ with $t\in [1,5]$ and $|O| \in [1,8]$ in increasingly dark shades. 
    The systems are configured with open boundary conditions, and the operator supports are chosen to be contiguous regions positioned adjacent to the boundary.
The normalized 2PAF exhibits an exponential scaling behaviour in $|O|$ only in the presence of TRS, as consistent with the emergent 3-state Potts model predicted in Eqs.~\eqref{eq:2paf_fluc_stat_mech}, \eqref{eq:2pcf_fluc_1d2}, and \eqref{eq:notrs_2pcf_fluc_1d}.
}\label{fig:2pcf_numerics_fluc_in_op_size}
\end{figure*}

Here we provide a third complementary approach to rederive the universal Ising scaling behaviour of SFF for global TRS {\gqmbcs} using \eqref{eq:2paf_full} and \eqref{eq:sff_2paf_connection} at sufficiently large $t$. 
%
%
For $t\gg 1$, we evaluate SFF as a sum of 2PAF for one-dimensional global TRS RPM in Eq.~\eqref{eq:2paf_full}
\be\label{eq:sff_2paf_eval}
\begin{aligned}
&\overline{K^{\text{RPM}}_{\text{g-TRS}}(t)} = \sum_{\mu \in \mathcal{P}} \overline{C_{\mu \mu, \, \text{g-TRS}}^{\text{RPM}} (t) }
\\
& =
\begin{cases}
 \Tr [T_{ \text{Ising}}^L ] =
 \left[1+e^{-\epsilon t}\right]^L + \left[1-e^{-\epsilon t}\right]^L
 \, ,  \quad & \text{pbc,}
\\
 \bra{\eta} T_{ \text{Ising}}^{L-1} \ket{\eta} = 2 \left[1 + e^{-\epsilon t}\right]^{L-1} \, ,  \quad & \text{obc,}
\end{cases}
\end{aligned}
\ee
where $\ket{\eta}= (1,1)$ accounts for the obc. Here the Ising transfer matrix $[T_{\text{Ising}}]_{ab} = \delta_{ab} + (1-\delta_{ab})e^{-\epsilon t}$ emerges after summing over all non-identity operator at each site in Eq.~\eqref{eq:2paf_full} and Eq.~\eqref{eq:sff_2paf_eval}. The two effective Ising degrees of freedom are then
\begin{enumerate}
\item time-reversed pairing for 2PAF on sites \textit{with} operator support, and
    \item time-parallel pairings for 2PAF on sites \textit{without} operator support.
\end{enumerate}
See Table~\ref{Tab:sff_3approaches} for a comparison of the effective Ising degrees of freedom with the other approaches.
There are $q^2-1$ operators giving the 2PAF diagram with time-reversed pairing of order $O(q^{-1})$. Among these contributions, $q(q-1)/2$ antisymmetric operators are associated with negative terms, and $q(q+1)/2$ symmetric operators (including the diagonal operators) give positive terms, leading to an over all factor of 1 associated to these diagrams. 
As before, take the Thouless scaling limit, where $t$ and $L$ are sent large, and $x= L / L_{\mathrm{Th}}$ with the  Thouless length $\Lth(t) = e^{\epsilon t}/t$ is held fixed. We directly obtain the SFF scaling form  from the 2PAF for global TRS {\gqmbcs} as 
\be\label{eq:ising_scaling_2paf}
\ba
&\kappa^{\text{2PAF}}_{\text{g-TRS}}(x)\equiv  
\lim_{\substack{L,t \to \infty\\ x = L/\Lth}} \sum_{\mu \in \mathcal{P}} \overline{C^{\text{RPM}}_{\mu \mu} (t)}  
=\begin{cases}
2 \cosh x \, ,  \quad & \text{pbc}\,, \\
2 \, e^{x} \, , \quad & \text{obc} \,,
\end{cases} 
\ea
\ee
and therefore we have $\kappa^{\text{SFF}}_{\text{g-TRS}}(x) = \kappa^{\text{2PAF}}_{\text{g-TRS}}(x)$. This is the third route to derive the emergent Ising scaling behaviour of SFF in {\gqmbcs} with global TRS.

\subsection{Fluctuations and three-state Potts model}\label{sec:2paf_fluc}

\begin{figure}[ht]
    \centering
    \includegraphics[width=0.4 \textwidth]{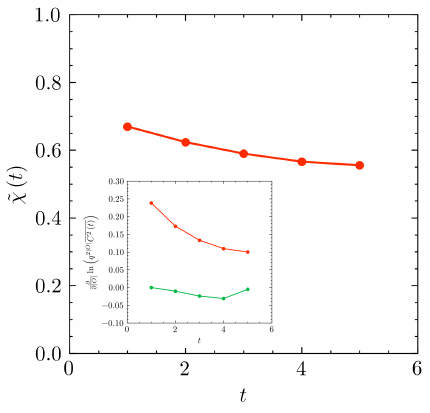}
    \caption{\textbf{Exponential scaling of 2PAF fluctuations} for the 1D Floquet TRS HRM (red) and 1D Floquet HRM (green) for $q=2$ and $L=10$ with open boundary condition as in Fig.~\ref{fig:2pcf_numerics_fluc_in_op_size}. 
 $\tilde{\chi}$ against $t$ (main panel) and the growth rate of $q^{2|O|} \overline{C^2_A(t)}$ as a function of $|O|$ (inset) for 1D Floquet TRS HRM, demonstrating that  the emergent three-state Potts model in Eq.~\eqref{eq:2pcf_fluc_1d2} captures the qualitative exponential scaling of the normalized 2PAF fluctuations as a function of operator support size $|O|$ in the presence of TRS. }\label{fig:2pcf_numerics_fluc_growth_rate}
\end{figure}

Here we show that in the presence of Floquet or global TRS, the ensemble-averaged of the fluctuations of 2PAF is mapped to the partition function of the three-state Potts model.
In the presence of TRS, at sites with non-trivial operator support, Lemma~\ref{lemma:2paf_fluc_v2} states that the three local leading diagrams of 2PAF fluctuations are  $\sigma_{(14|23)}$, $\sigma_{(12|34)}$, and $\sigma_{(13|24)}$  of order $O(q^{-2})$ [Fig.~\ref{fig:2pcf_fluc_diag} (d,e,f)]. 
In contrast, in the case without TRS, there is only a single leading diagram $\sigma_{(13|24)}$ of order $O(q^{-2})$ [Fig.~\ref{fig:2pcf_fluc_diag} (f)].
Regardless of the presence of TRS, at sites without operator support, there is a single leading local diagram given in Fig.~\ref{fig:2pcf_fluc_diag} (b). 
Note that even though the local diagrams Fig.~\ref{fig:2pcf_fluc_diag} (b, e) are disconnected, the many-body diagram containing this local diagram may not be disconnected.
Following the derivation described 
above Eq.~\eqref{eq:ising_mapping}, we derive the quantum-classical mapping for the 2PAF fluctuations of the Floquet TRS or Global TRS RPM.
While exact large-$q$ expression of 2PAF fluctuations can be obtained for general $t$, the result vastly simplifies for large $t \gg 1$, where the 2PAF fluctuations of the Floquet TRS RPM or Global TRS RPM is given by 
\be \label{eq:2paf_fluc_general_map}
\lim_{q\to \infty} \overline{ C_{\text{TRS}}^{2}(t) }= 
Z^{\text{3-Potts}}_A \,,
\ee
where we have left the operator index implicit. Defined in Eq.~\eqref{eq:Z_A}, $Z^{\text{3-Potts}}_A$  is the partition function of the three-state Potts model defined on region $A$. Let indices $a$ and $b$ label the three Potts states $\{(13|24), (12|34), (14|23)\}$. The Boltzmann weight for each bond in $A$ is given by $\mathcal{B}^{A }_{a b}= \delta_{ab} + (1-  \delta_{ab}) e^{-2\epsilon t} $ and the weight for each site is $\mathcal{B}^{A }_{a }= q^{-2}$. The complementary region $\overline{A}$ is associated with the trivial Boltzmann weight, $\mathcal{B}^{\overline{A} }_{a b}= 1$. The boundary between $A$ and $\overline{A}$ carries the Boltzmann weight $\mathcal{B}^{A \overline{A} }_{a b}= e^{-2\epsilon t}$. 
Specifically for one-dimensional Floquet TRS or global TRS RPM with pbc, we obtain using the transfer matrix approach
\be\label{eq:2pcf_fluc_1d2} 
\ba
\overline{C^{2}_{\text{TRS}}(t)}=&\,  q^{-2|O|}     \left[ 3 e^{-4\epsilon t} \right]^n   \left(1+ 2 e^{-2 \epsilon t}\right)^{|O|-n} \,.
\ea
\ee
where  $n$ is the number of connected regions of operator support, and $|O|$ is the total size of  the (possible disconnected) operator support. 
 The appearance of the three states is fundamental -- it stems from the combinatorics that there are three distinct ways of pairing four operators in the fluctuations of 2PAF, as schematically sketched in Eqs.~\eqref{eq:2paf_cont_ill_v1}, ~\eqref{eq:2paf_cont_ill_v2}, and Fig.~\ref{fig:2pcf_fluc_diag}. 
 For this reason, we expect the form of Eq.~\ref{eq:2paf_fluc_general_map} to remain valid (up to certain effective $\epsilon$) for TRS {\gqmbcs} beyond the TRS RPM.

 Consider the fluctuation of 2PAF of an operator $O$ with operator support size $|O|$, the corresponding Thouless time and Thouless operator support length are $t_{\mathrm{Th}, O} = \ln |O|/ 2 \epsilon$ and $|O|_{\mathrm{Th}} = e^{2 \epsilon t}$ respectively.
For $t \ll t_{\mathrm{Th}, O} $, on regions with operator support, the system can be heuristically viewed as a collection of expanding patches of random matrices, each adopting an identical pairing of Feynman paths, i.e. $\sigma_{(14|23)}$, $\sigma_{(12|34)}$, or $\sigma_{(13|24)}$. As time progresses, these patches grow and merge, eventually encompassing the regions with non-trivial operator support.
For $t \gg t_{\mathrm{Th}, O}$, the regions with non-trivial operator support behave like a a single large random matrix, with all regions adopting the same pairing.
For simplicity, consider 2PAF fluctuations of an operator which has support in a single connected cluster, i.e. $n=1$. 
We take the Thouless scaling limit, where $t$ and $|O|$ are sent to infinity, while $x= |O|/ |O|_{\mathrm{Th}}$ is fixed. 
For one-dimensional global or Floquet TRS RPM, we define and evaluate the scaling function of 2PAF fluctuation as 
\be
\ba
\gamma^{\text{2PAF-Fluc}}_{\text{TRS}}(x) \,  &:=  
  \lim_{\substack{t,|O| \to \infty \\ x= |O|/|O|_{\mathrm{Th}} }} 
  q^{2|O|} |O|^2 \,  \overline{C^{2}_{\text{TRS}}(t)} 
  \\
 & \qquad \qquad \qquad = \, 3 x^2 e^{2x}
\, .
\ea
\ee
Note that $L$ is bounded below by $|O|$ and so it is also sent to infinity.

\begin{figure}[h]
    \centering
    \includegraphics[width=0.46 \textwidth]{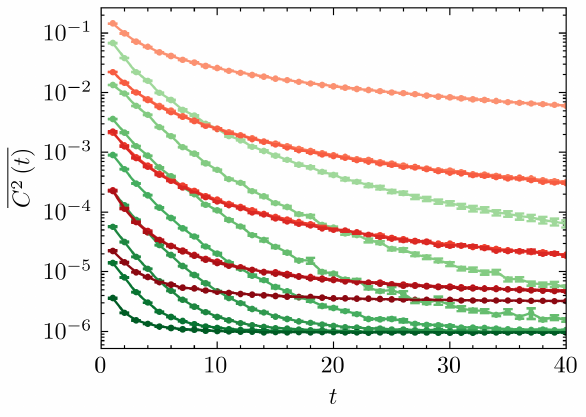}
    \caption{\textbf{2PAF fluctuations against time} for the 1D Floquet TRS HRM (red) and Floquet HRM (green) at $q=2$ and $L=10$ with $|O| \in [1,8]$ in increasingly dark shades. As in The systems are configured with open boundary conditions, and the operator supports are chosen to be contiguous regions positioned adjacent to the boundary.}    \label{fig:2pcf_numerics_fluc_vs_t}
\end{figure}

In contrast, in the absence of symmetries, RPM has only a single leading local diagram of 2PAF fluctuation [Lemma~\ref{lemma:2paf_fluc_v2}]. As a result, the fluctuation of 2PAF in the leading order is given by
\be\label{eq:notrs_2pcf_fluc}
\ba
&\lim_{q\to \infty} \overline{C^2_{\text{no-sym}} (t)}=
\Omega^{|O|} \Gamma^{|\partial O|}\, ,
\\
& \quad \Omega = q^{-2}\,  , \quad 
\Gamma = e^{-2 \epsilon t}\, ,
\ea
\ee
for RPM in general geometries. In particular, for one-dimensional RPM without symmetries with pbc, for operator $O$ with a single cluster, we have $|\partial O|=2$, and therefore
\be\label{eq:notrs_2pcf_fluc_1d}
\lim_{q\to \infty} \overline{C^2_{\text{no-sym}} (t)} =
q^{-2|O|} e^{-4\epsilon t}
\, .
\ee
Interpreting the 2PAF fluctuations without symmetries in \eqref{eq:notrs_2pcf_fluc}  as a trivial emergent statistical mechanical system with a single configuration (as in the case of 2PAF with TRS), we can collecting these results and write
\be \label{eq:2paf_fluc_stat_mech}
\ba
\lim_{q\to\infty} \overline{C^2} 
\propto 
\begin{cases}
     Z_A^{\text{cluster}}   \qquad & \text{No symmetries,}
     \\
     Z_A^{3\text{-Potts}} \qquad & \text{TRS,}
\end{cases} 
\ea
\ee
where the TRS case refers to global and Floquet TRS, but not local TRS.
A few remarks are in order. First, regardless of the presence of TRS, unlike the case of 2PAF, 2PAF fluctuations scale with $q^{-2|O|}$. 
Second, unlike the 2PAF itself, its fluctuations do not exhibit a peak at intermediate time. Within the framework of the emergent classical statistical mechanical model, this absence of a peak stems from the fact that the effective degrees of freedom at each site remain constant over time.
Third, and most notably, the presence of TRS and the associated time-reversed pairing of Feynman paths lead to a striking contrast in 2PAF fluctuation signatures.  In the absence of symmetries, $q^{2|O|} \overline{C^2_{\text{no-sym}} (t)}$ does not grow in $|O|$ in the large-$q$ limit. However, with TRS, $q^{2|O|} \overline{C^2_{\text{no-sym}} (t)}$ exhibits an \textit{exponential scaling} in $|O|$ with a growth rate derived from the three-state Potts model, $1+ 2 \chi^t$, where $\chi$ is a model-dependent parameter, identified as $e^{-\epsilon}$ in the RPM.

\begin{figure*}[ht]
    \centering
    \includegraphics[width=1 \textwidth]{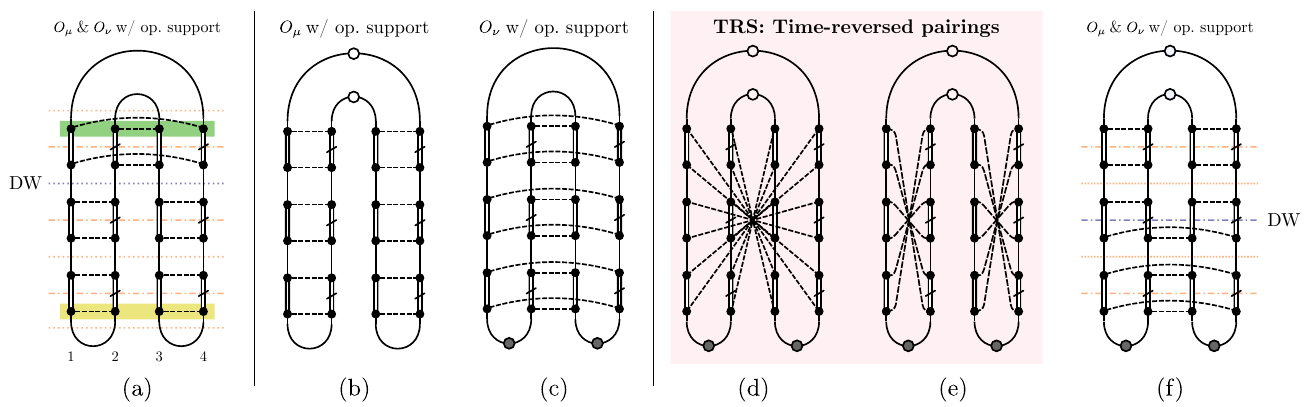}
    \caption{\textbf{OTOC diagrams in the presence of global or Floquet TRS.} Leading local diagrams of OTOC of $O_\mu$ and $O_{\nu}$ at sites where (a) $O_\mu$ and $O_{\nu}$ have no operator support; (b) only $O_\mu$ has operator support; (c) only $O_{\nu}$ has operator support; (d,e,f) both $O_\mu$ and $O_{\nu}$  have operator support.
   In panels (a) and (f), blue lines indicate the locations of domain walls that separate two types of contractions, highlighted in yellow and green in (a). Additional possible domain wall locations at leading order are shown as orange lines in (a) and (f). Diagrams with domain walls along dotted (dashed-dotted) lines are associated with positive (negative) Weingarten functions.
    (d) and (e) show leading order OTOC diagrams arising from time-reversed pairings of Feynman paths. These diagrams can contribute with negative signs when the local operator supports of \( O_\mu \) and \( O_\nu \) differ.
    }
    \label{fig:otoc_diag}
\end{figure*}

To test the large-$q$ prediction with finite-$q$ simulations, we numerically simulate the 2PAF fluctuations in 1D Floquet HRM, Floquet 3PM and a chaotic Hamiltonian model with or without TRS at $q=2$ and $L=10$ in Figs.~\ref{fig:2pcf_numerics_fluc_vs_t} to \ref{fig:2pcf_numerics_fluc_growth_rate}.  
For simplicity, we consider 2PAF fluctuations of operators supported in a single connected region adjacent to the boundary. 
In this setting, after removing a trivial normalization factor $q^{-2|O|}$, the 2PAF fluctuations in the presence of TRS can be evaluated as   $q^{2|O|}  \overline{C^{2}_{\text{TRS}}(t)}=   3 \chi_{C^2, \text{TRS}}^t   \big(1+ 2  \chi_{C^2, \text{TRS}}^t \big)^{|O|-1}$, while in the absence of TRS, we have $ q^{2|O|}  \overline{C^{2}_{\text{no-sym}}(t)}=      \chi_{C^2,\,  \text{no-sym}}^t$. Here, $\chi_{C^2,\, \text{TRS}}$ and  $\chi_{C^2,\, \text{no-sym}}$ are model-dependent parameters which for RPM is given by $e^{-4\epsilon}$. 
In Fig.~\ref{fig:2pcf_numerics_fluc_vs_t}, we plot $\overline{C^{2}_{\text{TRS}}(t)}$ against $t$. We observe that 2PAF fluctuations does not exhibit a bump as consistent with Eqs.~\eqref{eq:2pcf_fluc_1d2} and \eqref{eq:notrs_2pcf_fluc_1d}, unlike the 2PAF in Fig.~\ref{fig:2pcf_numerics}. 
In Fig.~\ref{fig:2pcf_numerics_fluc_in_op_size}, we plot $q^{2|O|}  \overline{C^{2}_{\text{TRS}}(t)}$ against $|O|$ for two Floquet and one Hamiltonian models, which demonstrate that there is a generic exponential scaling in the presence of TRS but not in the absence of TRS, supporting the analytical results in Eqs.~\eqref{eq:2pcf_fluc_1d2} and \eqref{eq:notrs_2pcf_fluc_1d}.  
In Fig.~\ref{fig:2pcf_numerics_fluc_growth_rate}, we extract a proxy of $\chi_{C^2, \text{TRS}}$ by defining and computing $\tilde{\chi} \left( t \right) := \left[ \frac{1}{2}\left( e^{\frac{\partial}{\partial \left| O \right|}\ln \left( q^{2\left| O \right|}\overline{C^2\left( t \right) } \right)}-1 \right) \right] ^{1/t}$. We also compute the growth rate  of $q^{2|O|} \overline{C^2_A(t)}$ as a function of $|O|$. We see that $\tilde{\chi}(t)$ tends towards a constant value, consistent with the claim that in the presence of TRS, the 2PAF fluctuations are governed by the emergence of the three-state Potts model in Eqs.~\eqref{eq:2paf_fluc_general_map} and \eqref{eq:2pcf_fluc_1d2}.

\section{Out-of-time-ordered correlators}\label{sec:otoc}

Under the quantum dynamics in generic quantum many-body systems, information of local perturbation spreads and scrambles among non-local degrees of freedom of the system, and can be diagnosed using the out-of-time-ordered correlator (OTOC) \cite{Shenker_2014, LarkinOvchinnikov}, defined at infinite temperature as 
\be
	F_{\mu \nu}(t):=\frac{1}{\mathcal{N}}\Tr[O_\mu (t) O_\nu(0) O_\mu(t) O_\nu(0)]   \,,
\ee
where $O_\mu(t) = U(t) O_\mu U^\dagger (t)$ is the time-evolved operator under the Heisenberg picture, and $\Nhil$ is the Hilbert space dimension. As before, $O_\mu$ is taken to be a Hermitian, unitary, traceless operator given by~\eqref{eq:gellmann}.
%
%
OTOC is a correlation function between multiple operators probed at times which are out of order, and has been instrumental in the characterisation of quantum chaos and thermalization, e.g. in the studies  of blackholes and holography~\cite{MSS, Shenker_2014}, quantum information theory~\cite{Hayden_2007, Hosur_2016, Roberts_2017}, operator spreading in many-body systems~\cite{nahum2018, keyserlingk2018, rakovszky2018, khemani2018}, quantum phase transitions~\cite{Shen_2017, Heyl_2018}, 
and experiments~\cite{G_rttner_2017, Li_2017, Landsman_2019, Nie_2020}.

OTOC has been computed in spatial- and temporal-random  quantum circuits with and without conserved quantities in~\cite{nahum2018, keyserlingk2018, rakovszky2018, khemani2018}, and Floquet random quantum circuits in the large-$q$ limit in~\cite{chan2018solution,yoshimura2023operator}. 
In the absence of TRS, the leading diagrams for Floquet random quantum circuits have been identified in~\cite{chan2018solution,yoshimura2023operator}.
In the presence of Floquet or global TRS, we provide the leading local diagrams for HRM, RPM, and RMT in Fig.~\ref{fig:otoc_diag}. 
These diagrams fall into three categories based on operator support at each site: (i) neither \( O_\mu \) nor \( O_\nu \) has support, (ii) only one of \( O_\mu \) or \( O_\nu \) has support, and (iii) both \( O_\mu \) and \( O_\nu \) have support.
For both cases (i) and (ii), even in the presence of TRS, the dominant contributions arise from local diagrams involving only time-parallel pairings. Specifically, for case (ii), the leading diagrams are unique and shown in Fig.~\ref{fig:otoc_diag} (b) and (c). In case (i), there are \( 2t + 1 \) leading diagrams, which, as in the Floquet case without TRS~\cite{chan2018solution,yoshimura2023operator}, do not involve time-reversed pairings of Feynman paths.
Among these diagrams, consider a group of four horizontally aligned dots, as illustrated by the yellow box in Fig.~\ref{fig:otoc_diag} (a). Two types of contractions are possible among these four dots:
 (1) Identity contraction, where dots in vertical slices 1 and 2 are paired, and dots in slices 3 and 4 are paired (e.g., yellow box in Fig.~\ref{fig:otoc_diag} (a)). (2) SWAP contraction, where dots in slices 1 and 4 are paired, and slices 2 and 3 are paired (e.g., green box in Fig.~\ref{fig:otoc_diag} (a)). These correspond to ``\(+\)'' and ``\(-\)'' domains in~\cite{nahum2018}, and are labeled \( A \) and \( B \) in~\cite{chan2018solution}.
Each of the \( 2t + 1 \) diagrams contains a domain of type (1) and a domain of type (2), separated by a domain wall that can appear at any of \( 2t + 1 \) positions. These configurations are depicted in Fig.~\ref{fig:otoc_diag} (a), where the blue line indicates the domain wall location, and the orange lines denote other possible positions. Diagrams with a domain wall along a dotted (dashed-dotted) line correspond to positive (negative) Weingarten functions [see~\cite{Brouwer1996, chan2018solution, yoshimura2023operator} for details].
For case (iii), the time-parallel pairings give rise to \( 2t - 1 \) diagrams shown in Fig.~\ref{fig:otoc_diag} (f), using notation consistent with Fig.~\ref{fig:otoc_diag} (a). In these diagrams, the two types of contractions appear similarly to those in case (i). However, the tracelessness condition of the operators enforces fixed contractions at the top and bottom of the diagrams, effectively requiring the presence of a domain wall (marked by the blue line in Fig.~\ref{fig:otoc_diag} (f)).
Figures~\ref{fig:otoc_diag} (d) and (e) depict leading order OTOC diagrams that arise from time-reversed pairings of Feynman paths. 
The resulting diagrams are analogous to those in Fig.~\ref{fig:2pcf_fluc_diag} (d) and (e), with the key distinction that all four operator insertions, two \( O_\mu \) and two \( O_\nu \), are connected through  time-reversed pairings.

Having established the leading local diagrams in Fig.~\ref{fig:otoc_diag} for the possible scenarios, we observe that for the OTOC between local operators \( O_{\mu} \) and \( O_{\nu} \) that are spatially separated, the leading order behaviour is insensitive to the presence of TRS. In such cases, one can directly apply previously established large-\( q \) results for the OTOC in Floquet RPM~\cite{yoshimura2023operator} and Floquet HRM~\cite{chan2018solution} without TRS.
In contrast, when \( O_{\mu} \) and \( O_{\nu} \) have overlapping support at the initial time, for example, when \( O_{\mu} = O_{\nu} \), the OTOC becomes sensitive to TRS. At a given site \( i \), the leading order contributions exhibit a sign dependence based on whether the local supports of \( O_{\mu} \) and \( O_{\nu} \) coincide. Notably, this sign dependence is different from that observed in the two-point autocorrelation function [Fig.~\ref{fig:op_dependence}]. 
The sign instead arises from the structure of time-reversed pairings that connect all four operator insertions in the OTOC at leading order in \( 1/q \).
In the presence of TRS, the OTOC for general operators \( O_\mu \) and \( O_\nu \) can be expressed in terms of a transfer matrix using the large-\( q \) techniques introduced earlier. However, the exact evaluation using this transfer matrix is technically challenging, and we leave a detailed analysis to future work.

\section{Discussion and outlook}\label{sec:outlook}
In this paper, we studied the universal signatures of quantum chaos in the presence of TRS in generic and  local quantum many-body systems. 
We presented both analytical and numerical results on SFF, PSFF, 2PAF and OTOC in random quantum circuits, serving as minimal models of {\gqmbcs} in the presence of TRS. 
For SFF, we analytically derived and numerically verified the universal Ising and generalized Potts scaling functions of SFF in {\gqmbcs} with global and Floquet TRS respectively.  We showed that the universal scaling behaviours originate from the structure of the many-body interactions between time-reversed and time-parallel pairings of Feynman paths. 
We substantiated the universality of these results by deriving the Ising scaling behaviour through three complementary approaches:  evaluating SFF (i) as double Feynman integrals, (ii) as a summation of correlation functions, and (iii) using space-time duality combined with non-Hermitian Ginibre random matrix ensembles.
Furthermore, we showed that, unlike the case of SFF, 2PAF favours time-reversed over time-parallel pairings, leading to characteristic negative and suppressed values of 2PAF for antisymmetric operators in the presence of TRS.
We demonstrated that the fluctuations of 2PAF exhibit a distinctive exponential scaling in the operator support size, with a rate governed by an emergent three-state Potts model, marking a sharp contrast to the case without TRS. 
Non-Hermitian Ginibre ensembles serve as important tools for capturing the emergent SFF scaling behaviour of {\gqmbcs} without relying on the large-$q$ limit. 
A natural question arises: How can Ginibre ensembles effectively model generic quantum many-body chaotic systems -- particularly the bump in the SFF caused by many-body interactions -- despite containing no explicit information about the quantum system's underlying many-body structure? 
The answer is analogous to the role played by Gaussian or circular ensembles in modeling temporal dynamics. In that context, the SFF captures long-time behaviour -- the ramp and the plateau -- through repeated applications of random matrices in time. Similarly, the SFF of Ginibre ensembles captures the effects of many-body interactions via \textit{repeated applications in space}, effectively encoding many-body dynamics despite the absence of microscopic detail.

 There are several promising directions for future exploration. 
First, understanding finite-$q$ corrections and non-perturbative contributions to spectral and dynamical quantities remains an intriguing challenge. Although the SFF and 2PAF are related through Eq.~\eqref{eq:sff_2paf_connection}, the time-parallel and time-reversed pairings contribute with different weights in these quantities. Consequently, examining spectral observables such as the SFF and PSFF may offer insights into finite-$q$ effects in dynamical observables like the 2PAF, and vice versa.
Additionally, in the presence of TRS, it is well known from periodic orbit theory that Sieber-Richter pairs~\cite{Sieber_2001} contribute to subleading corrections in the SFF. Investigating whether the many-body analogues [Lemma~\ref{app_lemma:sublead}] give rise to universal signatures through their interactions would be an interesting avenue of study.
Second, our large-$q$ analytical results for the SFF and 2PAF are applicable to general geometries in higher dimensions. Results in this direction will be presented in an upcoming work~\cite{upcoming}.
Third, a natural direction is to explore the interplay between global symmetries like TRS, and internal symmetries, including Abelian $U(1)$ and non-Abelian $SU(2)$ symmetries, with the aim of identifying universal signatures of quantum many-body chaos that emerge uniquely in their presence.

\section*{Acknowledgement}
We would like to thank Andrea De Luca, Tara Kalsi, Chun Y. Leung, Henning Schomerus, and especially David A. Huse and Ilyoun Na for helpful discussions. AC acknowledges support from the Royal Society grant RGS{$\backslash$}R1{$\backslash$}231444, and the Open Fellowship from EPSRC EP/X042812/1. 

\bibliographystyle{apsrev4-1}
\bibliography{biblio.bib}

\onecolumngrid

\appendix

\setcounter{equation}{0}
\setcounter{figure}{0}
\renewcommand{\thetable}{S\arabic{table}}
\renewcommand{\theequation}{S\thesection.\arabic{equation}}
\renewcommand{\thefigure}{S\arabic{figure}}
\setcounter{secnumdepth}{2}

\section{Models}\label{app:models}
In this section, we provide the algebraic definitions of random phase model (RPM), Haar-random model (HRM), the corresponding random matrix model acting in the temporal direction (RMT), and the corresponding dual Ginibre model acting in the spatial direction.

\subsection{Dynamics in time}\label{app:time_dyn}
In the context of TRS, we are interested in three classes of dynamics in time: (i) temporal-random models, and (ii) global symmetric models, and (iii) discrete time translational symmetric  (Floquet) models. Specifically, they are defined by the following unitary time evolution operators at time $t$,
\begin{IEEEeqnarray}{rrll} \label{app_eq:time_dyn}
\text{Temporal-random:}  \qquad  &  \uu_{ 1}(t;u) =& \,\prod_{t'=1}^t u(t') \, , 
\qquad  &  
\\
\text{Global TRS:} \qquad   & \uu_{ 2}(t; w ) =& \, \,  w^{T} (t)  \, w(t) \,   & 
\\
\text{Discrete time translational invariant (Floquet):} \qquad   & \uu_{3} (t; u) =& \, \,  u^t \,,   & 
 \end{IEEEeqnarray}
where $u$ and $w$ denote certain choice or distribution of unitary operators.

\subsection{Random quantum circuits}

\subsubsection{Random phase models (RPM)}
\begin{figure*}[htbp]
    \centering
    \includegraphics[width=0.95\textwidth]{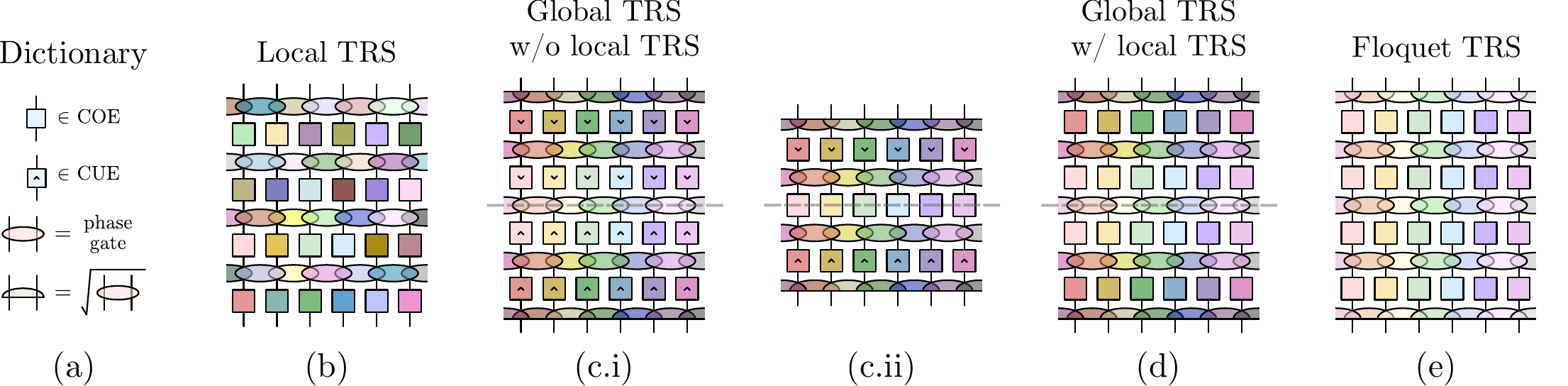}
    \caption{\textbf{TRS random phase models (RPM).} Gates with the same colour (and shade) are identical. 
    (a) Dictionary of the diagrammatical representation. (b-e) Illustration of the TRS RPM at $t=4$ with (b) local TRS; (c.i) global TRS w/o local TRS; (d) global TRS w/ local TRS; and (e) Floquet TRS. 
    Note that gates reflected across the time reversal axes (dashed lines) are identified.
    RPM at $t=3$ for global TRS w/o local TRS is given in (c.ii). The odd time for the other Global and Floquet TRS RPM share the same geometry. 
    }
    \label{sm_fig:model_RPM}
\end{figure*}

 The random phase model (RPM)~\cite{chan2018spectral} is a quantum circuit that acts on the Hilbert space $\mathbb{C}^{q^L}$ of $L$ qudits. While the RPM can be defined and the corresponding results in the main text can be derived in general geometries for arbitrary dimensions, we will define RPM in one dimension for simplicity. RPM is composed of the one-site gates Haar-random unitaries acting on the $j$-th qudit, 
   \begin{align}
 	u^{(j)}(s)  \quad  \in \text{CUE}\left(q \right) \text{ or } \text{COE}\left(q \right)\, ,
\end{align}
and two-site gates acting on the $j$-th and $(j+1)$-th qudits, 
\begin{align}
    [\Theta^{(j,j+1)}(s, \epsilon)]_{a_j a_{j+1}, b_j b_{j+1}}=\delta_{a_j, b_j} \delta_{a_{j+1}, b_{j+1}}\exp[ \imath \varphi^{(j)}_{a_j,a_{j+1}}(s,\epsilon )] \,,
\end{align}
coupling neighbouring sites via a diagonal random phase ($a_j = 1,2\ldots, q$). Each coefficient $\varphi_{a_j,a_{j+1}}^{(j)}(s, \epsilon)$ is an independent Gaussian random real variable with mean zero and variance $\epsilon$, which controls the coupling strength between neighboring spins. The one-site Haar-random gates and two-site coupling gates generally can depend on some location in time labeled by $s$. Then, we can define monolayers of one-site gates, of two-site gates and of two-site half gates (see discussion in the main text) as 
 \begin{IEEEeqnarray}{rrl} \label{app_eq:rqc_cue}
\text{1-site gate monolayer:} \qquad   & w_{\text{h}}(s;u) \, &  =  \bigotimes_{j=1}^L u^{(j)}(s) \, , \, 
\\
\text{2-site gate monolayer:}  \qquad  &   w_{\text{ph}}(s;\epsilon) \, &= \prod_{j=1}^L \Theta^{(j, j+1)}(s,\epsilon)  \,,  
\\
\text{RPM bilayer:}  \qquad  &   w_{\text{bl}}(s, \epsilon;u) \, &= 
 w_{\text{ph}}(s;\epsilon ) w_{\text{h}}(s;u)
 \\
\text{Shifted RPM bilayer:}  \qquad  &   w_{\text{sbl}}(s, \epsilon;u) \, &= 
 w_{\text{ph}}(s;\epsilon/4 ) w_{\text{h}}(s;u)  w_{\text{ph}}(s;\epsilon/4 ) 
 \end{IEEEeqnarray}
 where the variable $s$ allows the freedom for different bilayer labelled by $s$ to be independent of each other. Note that in the shifted TRS, the first and the third layers are drawn from the same ensemble. The variances in these layers are chosen such that they are half gates (as defined in the main text), such that $w_{\text{ph}}(s;\epsilon/4 )^2 = w_{\text{ph}}(s;\epsilon)$. Without TRS and other symmetries, we define the temporal-random and Floquet random phase models without symmetries as
\begin{IEEEeqnarray}{rrl} \label{app_eq:rqc_cue}
\text{Temporal-random RPM w/o symmetries:}  \qquad  &  U_{\text{l}}^{\text{RPM}}(t,L,\epsilon) \; & \equiv \uu_1 [t; 
w_{\text{bl}}(s, \epsilon;u_{\mathrm{CUE}})
] \, , 
\\
\text{Floquet RPM w/o symmetries:} \qquad   &U_{\text{f}}^{\text{RPM}}(t,L, \epsilon) \; & \equiv \uu_{ 3}[t; w_{\text{bl}} ( 1, \epsilon ; u_{\mathrm{CUE}})] \, . 
 \end{IEEEeqnarray}
Now we define random phase models with various types of TRS. The local TRS RPM can be defined by stacking bilayers that are independently drawn. To define global TRS RPM with or without local TRS, we define the two circuits 
\begin{IEEEeqnarray}{rl}
w_{\text{g-TRS-1}}(t,L) \; &=
\begin{cases} 
w_{\text{h}}(s;u_{\mathrm{CUE}})
\left[ \prod_{s=1}^{(t-1)/2} 
w_{\mathrm{bl}}(s, \epsilon; u_{\text{CUE}}) \right]  
w_{\text{ph}}(s;\epsilon/4 ) \,, \quad \qquad \qquad & t \text{ odd,} 
\\
 w_{\text{ph}}(s;\epsilon/4 )
  w_{\text{h}}(s;u_{\mathrm{CUE}})
\left[ \prod_{s=1}^{t/2-1} 
w_{\mathrm{bl}}(s, \epsilon; u_{\text{CUE}})  \right]  
w_{\text{ph}}(s;\epsilon/4 )   \,,  \qquad \qquad  & t \text{ even,}
\end{cases}
\\
w_{\text{g-TRS-2}}(t,L) \; &= 
\begin{cases} 
w_{\text{h}}(s;u_{\mathrm{CUE}})
\left[ \prod_{s=1}^{(t-1)/2} 
w_{\mathrm{bl}}(s, \epsilon; u_{\text{COE}}) \right]  
w_{\text{ph}}(s;\epsilon/4 ) \,, \quad \qquad \qquad & t \text{ odd,} 
\\
 w_{\text{ph}}(s;\epsilon/4 )
  w_{\text{h}}(s;u_{\mathrm{COE}})
\left[ \prod_{s=1}^{t/2-1} 
w_{\mathrm{bl}}(s, \epsilon; u_{\text{COE}})  \right]  
w_{\text{ph}}(s;\epsilon/4)   \,,  \qquad \qquad  & t \text{ even,}
\end{cases}
 \end{IEEEeqnarray}
 which governs the first half of the global TRS circuits. The entire global TRS circuits are then generated by extending the above circuits with their transpositions, as defined below.
 %
 Lastly, the Floquet TRS RPM can be defined by repeated action of a shifted RPM bilayer.
 Together, we define 
\begin{IEEEeqnarray}{rrl} \label{app_eq:rqc_coe}
\text{Local TRS RPM:}  \qquad  &  U_{\text{l-TRS}}^{\text{RPM}}(t,L) \; & \equiv \uu_1 [t; w_{\text{bl}} ( \cdot  ,\epsilon ; u_{\mathrm{COE}})] \, . 
\\
\text{Global TRS RPM w/o local TRS:}  \qquad  &  U_{\text{g-TRS-1}}^{\text{RPM}}(t,L) \; & \equiv \uu_2 [w_{\text{g-TRS-1}} (t , L)] \, ,
\\
\text{Global TRS RPM w/ local TRS:}  \qquad  &  U_{\text{g-TRS-2}}^{\text{RPM}}(t,L) \; & \equiv \uu_2 [ w_{\text{g-TRS-2}}  ( t, L)] \, ,
\\
\text{Floquet TRS RPM:}  \qquad  &  U_{\text{f-TRS}}^{\text{RPM}}(t,L) \; & \equiv \uu_3 [t; w_{\text{sbl}} ( 1 ,\epsilon ; u_{\mathrm{COE}})]\, . \label{app_eq:floq_trs_RPM} 
 \end{IEEEeqnarray}
See Fig.~\ref{sm_fig:model_RPM} for illustrations of the TRS RPMs.

\begin{figure*}[htbp]
    \centering
    \includegraphics[width=0.95\textwidth]{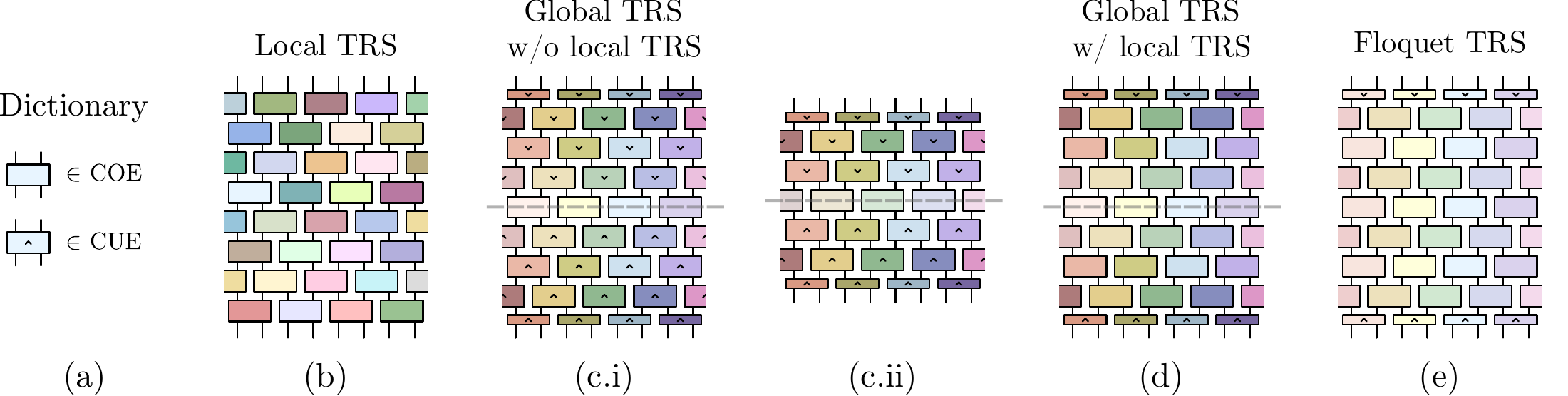}
    \caption{\textbf{TRS Haar-random models (HRM).} Gates with the same colour (and shade) are identical. 
    (a) Dictionary of the diagrammatical representation. (b-e) Illustration of the TRS HRM at $t=4$ with (b) local TRS; (c.i) global TRS w/o local TRS; (d) global TRS w/ local TRS; and (e) Floquet TRS. 
    Note that gates reflected across the time reversal axes (dashed lines) are identified.
    HRM at $t=3$ for global TRS w/o local TRS is given in (c.ii). The odd time for the other Global and Floquet TRS HRM share the same geometry. 
    }
    \label{sm_fig:model_bwm}
\end{figure*}

\subsubsection{Haar-random models (HRM)}

 The Haar-random model (HRM)~\cite{nahum2017} is a one-dimensional quantum circuit that acts on the Hilbert space $\mathbb{C}^{q^L}$ of $L$ qudits. The HRM has a brick-wall geometry composed of two-site gates acting on the $j$-th and $(j+1)$-th qudits, 
  \begin{align}
 	u^{(j,j+1)}(s) \quad  \in \text{CUE}\left(q^2 \right) \text{ or } \text{COE}\left(q^2 \right)\, ,
\end{align}
where $\text{CUE}(N)$ and $\text{COE}(N)$ are the CUE and COE of the unitary group of $N$-by-$N$ unitary matrices. The Haar-random gates generally can depend on some location in time labeled by $s$.
 In the brick-wall geometry, we only consider circuits with even system sizes. 
 %
 Specifically, we define a monolayer and a bilayer of two-site gates in the brick-wall geometry respectively as
 \begin{IEEEeqnarray}{rrl} \label{app_eq:rqc_cue}
\text{Brick-wall monolayer:} \qquad   & w_{m}(s;u) \, &  =  \bigotimes_{\substack{j \in 2 \mathbb{Z} + m\, \text{mod}\,2}}u^{(j,j+1)}(s) \, , \, 
\\ \label{app_eq:rqc_cue2}
\text{Brick-wall bilayer:}  \qquad  &   w_{\text{bl}}(s;u) \, &= w_2(s;u) \, w_1(s;u) \,,  
\\ \label{app_eq:rqc_cue3}
\text{Shifted TRS brick-wall bilayer:}  \qquad  &   w_{\text{sbl}}(s) \, &= w_1(s, u_{\mathrm{CUE}})^T \, w_2(s;u_\mathrm{COE}) \, w_1(s;u_\mathrm{CUE}) \,,  
 \end{IEEEeqnarray}
 where the variable $s$ allows the freedom for different bilayer labelled by $s$ to be independent of each other. Note that in the shifted TRS, the first and the third layers are drawn from the same ensemble. 
Without TRS or other symmetries,  we define the temporal-random and Floquet Haar-random models without symmetries as
\begin{IEEEeqnarray}{rrl} \label{app_eq:rqc_cue_def}
\text{Temporal-random HRM w/o symmetries:}  \qquad  &  U_{\text{l}}^{\text{HRM}}(t,L) \; & \equiv \uu_1 [t; w_{\text{bl}} ( \cdot  ; u_{\mathrm{CUE}})] \, , 
\\
\text{Floquet HRM w/o symmetries:} \qquad   &U_{\text{f}}^{\text{HRM}}(t,L) \; & \equiv \uu_{ 3}[t; w_{\text{bl}} ( 1 ; u_{\mathrm{CUE}})] \, . 
 \end{IEEEeqnarray}
Now we define the Haar-random models with various types of TRS. The local TRS HRM can be defined by stacking brick-wall bilayers that are independently drawn. To define global TRS HRM with or without local TRS, we define the two circuits 
\begin{IEEEeqnarray}{rl}
w_{\text{g-TRS-1}}(t,L) \; &=
 \prod_{s=1}^{t} 
w_{s}(s; u_{\text{CUE}}) \,, 
\\
w_{\text{g-TRS-2}}(t,L) \; &= 
w_{t}(t; u_{\text{CUE}})
\left[ \prod_{s=2}^{t-1} 
w_{s}(s; u_{\text{COE}}) 
\right]
w_{1}(1; u_{\text{CUE}})  \,,
 \end{IEEEeqnarray}
 which governs the first half of the global TRS circuits. The entire global TRS circuits are then generated by extending the above circuits with their transpositions, as defined below. Note that the first and last layers of $w_{\text{g-TRS}}$ are drawn from CUE, such that, upon the extension with transposition, we can construct a COE gate across the time reversal axis (dashed lines in Fig.~\ref{fig:model}), using the construction, $u_{\mathrm{COE}}= u^T_{\mathrm{CUE}} u_{\mathrm{CUE}}$.
 Lastly, the Floquet TRS HRM can be defined by repeated action of a brick-wall bilayer.
 Together, we define for both odd and even $t$,
\begin{IEEEeqnarray}{rrl} \label{app_eq:rqc_coe}
\text{Local TRS HRM:}  \qquad  &  U_{\text{l-TRS}}^{\text{HRM}}(t,L) \; & \equiv \uu_1 [t; w_{\text{bl}} ( \cdot  ; u_{\mathrm{COE}})] \, . 
\\
\text{Global TRS HRM w/o local TRS:}  \qquad  &  U_{\text{g-TRS-1}}^{\text{HRM}}(t,L) \; & \equiv \uu_2 [w_{\text{g-TRS-1}} (t+1 , L)] \, ,
\\
\text{Global TRS HRM w/ local TRS:}  \qquad  &  U_{\text{g-TRS-2}}^{\text{HRM}}(t,L) \; & \equiv \uu_2 [ w_{\text{g-TRS-2}} ( t+1, L)] \, ,
\\
\text{Floquet TRS HRM:}  \qquad  &  U_{\text{f-TRS}}^{\text{HRM}}(t,L) \; & \equiv \uu_3 [t; w_{\text{sbl}} ( 1 )] \, . 
 \end{IEEEeqnarray}
 Notice that the Floquet TRS HRM has a shift of a half step in the time direction, such that the first and the last layers in time contain half gates of COE, which belong to CUE. See Fig.~\ref{sm_fig:model_bwm} for illustrations of TRS HRMs.

\subsubsection{3-parameter models (3PM)}
The 3-parameter model (3PM)~\cite{znidaric2022, huang2023outoftimeorder} is a one-dimensional quantum circuit that acts on the Hilbert space $\mathbb{C}^{q^L}$ of $L$ qudits. In this manuscript we focus on Floquet TRS 3PM and Floquet 3PM (no time-reversal symmetry), although one may likewise introduce 3PM with local TRS or global TRS in direct analogy with the HRM. The  3PM shares the same brick-wall geometry as the HRM, but with two-site gates acting on the $j$-th and $(j+1)$-th qudits given by
  \begin{align}
 	& u^{(j,j+1)} = \left[u^{(j)}_1 \otimes u^{(j)}_2\right]
    \exp\left(i \sum_{\mu = x, y, z} a_\mu \sigma_j^\mu  \sigma_{j+1}^\mu \right)
    \left[u^{(j)}_1 \otimes u^{(j)}_2\right]\, ,
\end{align}
where for each $j$ and $i$, the unitary matrix $u^{(j)}_i$ is independently drawn from the COE of $2$-by-$2$ unitary matrices if TRS is present, and from the CUE if TRS is not present. We take $(a_x, a_y, a_z) = (0.3, 0.4, 0.5)$ as in \cite{huang2023outoftimeorder}. In the presence of TRS, we define half gates for 3PM as 
\be
 u_{\half}^{(j,j+1)} = 
    \exp\left( \frac{i}{2} \sum_{\mu = x, y, z} a_\mu \sigma^\mu_{j}  \sigma^\mu_{j+1} \right)
    \left[u^{(j)}_1 \otimes u^{(j)}_2\right]\, ,
\ee
with $u^{(j)}_i \in$ COE, such that $u^{(j,j+1)} = \left[ u_{\half}^{(j,j+1)}\right]^T u_{\half}^{(j,j+1)}$. 
Similar to HRM, we define the monolayer and bilayers in brick-wall geometry for 3PM as
\begin{IEEEeqnarray}{rrl} \label{app_eq:rqc_3pm}
\text{Brick-wall monolayer:} \qquad   & w_{m}(s;u) \, &  =  \bigotimes_{\substack{j \in 2 \mathbb{Z} + m\, \text{mod}\,2}}u^{(j,j+1)}(s) \, , \, 
\\ \label{app_eq:rqc_3pm2}
\text{Brick-wall bilayer:}  \qquad  &   w_{\text{bl}}(s;u) \, &= w_2(s;u) \, w_1(s;u) \,,  
\\ \label{app_eq:rqc_3pm3}
\text{Shifted TRS brick-wall bilayer:}  \qquad  &   w_{\text{sbl}}(s) \, &= w_1(s, u_{\half})^T \, w_2(s;u) \, w_1(s;u_\half) \,. 
 \end{IEEEeqnarray}
With these ingredients, we can define the Floquet TRS 3PM and Floquet 3PM as 
\begin{IEEEeqnarray}{rrll} \label{app_eq:rqc_3pm}
\text{Floquet TRS 3PM:}  \qquad  &  U_{\text{f-TRS}}^{\text{3PM}}(t,L) \; & \equiv \uu_3 [t; w_{\text{sbl}} ( 1 )] \, , \qquad  &\text{ with } u^{(j)}_{i} \in \text{COE,} 
\\ 
\text{Floquet 3PM:}  \qquad  &  U_{\text{Floquet}}^{\text{3PM}}(t,L) \; & \equiv \uu_3 [t; w_{\text{bl}} ( 1 )] \, , \qquad  &\text{ with } u^{(j)}_{i} \in \text{CUE.} 
 \end{IEEEeqnarray}
3PMs exhibit signatures of generic quantum many-body chaos at $q=2$ (in contrast to $q=3$ for RPMs) and its non-TRS variations have previously been studied in~\cite{znidaric2022, huang2023outoftimeorder}.

\subsection{Hamiltonian models}
For completeness, we repeat in this appendix the definitions of time-independent Hamiltonian models  used to test the applicability of the results derived from random quantum circuits. 
Specifically, we define 
\be
\ba
H= & \, H_{\text{XYZ}} + H_{\text{NNN}}
\\
H_{\text{XYZ}}=& \, \sum_{\langle i,j \rangle} \left( 
a_x \sigma^x_i \sigma^x_j + a_y \sigma^y_i \sigma^y_j + a_z \sigma^z_i \sigma^z_j   
\right)
\\
H_{\text{NNN}} =& \, 
\begin{cases}
\sum_{\langle\!\langle i,j\rangle\!\rangle} h^{\mathrm{GOE}}_{i,j} \qquad \qquad & \text{TRS} 
\\
\sum_{\langle\!\langle i,j\rangle\!\rangle} h^{\mathrm{GUE}}_{i,j}
\qquad \qquad & \text{No symmetries} 
\end{cases}
\ea
\ee
where we take $(a_x, a_y, a_z) = \frac{1}{4}(1.2,1,0.8)$ at which the model displays chaotic behaviour. $h^{\mathrm{GOE/GUE}}_{i,j}$ are random matrices drawn from the Gaussian orthogonal ensemble (GOE) and Gaussian  unitary ensemble (GUE)  respectively. $\langle i,j \rangle$ and  ${\langle\!\langle i,j\rangle\!\rangle}$ denote all pairs of nearest-neighbour sites and next-to-nearest-neighbour (NNN) sites respectively. By adding the NNN terms drawn from the GOE, we break all symmetries of the XXZ model except the continuous time translational symmetry and TRS. When the NNN terms are drawn from the GUE instead, we additionally break the TRS.

\subsection{Random matrix models (RMT)}\label{app:matrixmodel}

\begin{figure*}[htbp]
    \centering
    \includegraphics[width=0.7\textwidth]{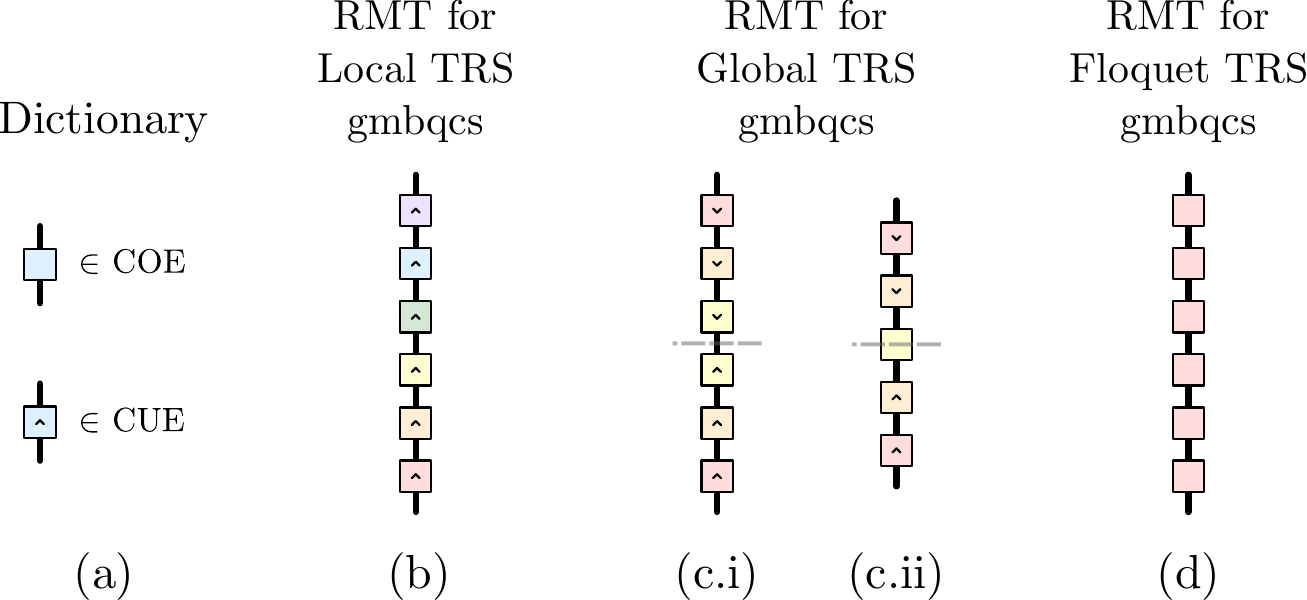}
    \caption{\textbf{TRS random matrix models (RMT).} (a) Dictionary of the diagrammatical representations of COE and CUE. (b-d) Illustration of the RMT quantum circuits at $t=6$ for RMTs that model {\gqmbcs} with (b)  local TRS; (c) global TRS with or without local TRS; and (d) Floquet TRS. 
    For completeness, RMT for global TRS {\gqmbcs} at $t=5$ is given in (c.ii), although it trivially coincides with (c.i) for gate choices from CUE and COE. 
    The odd time for the other TRS RMT models are defined similarly. 
    Note that gates reflected across the time reversal axes (dashed lines) are identified.
  Note also that both global TRS quantum circuits with local TRS, and global TRS quantum circuits without local TRS are modelled (in sufficiently late time) by global TRS RMT without local TRS. 
    }
    \label{sm_fig:model_mm}
\end{figure*}

Random matrix theory (RMT) can be used to model the dynamics of generic many-body quantum chaotic systems ({\gqmbcs}) in sufficiently late time scales or sufficiently small energy scales. These models are defined via the time evolution operators in ~\ref{app:time_dyn} with unitary $u$ drawn from the circular unitary ensemble (CUE) or the circular orthogonal ensemble (COE) of $N$-by-$N$ unitaries. Specifically, in the absence of symmetries, we define random matrix models (RMT) that model temporal-random and Floquet  {\gqmbcs} respectively with

\begin{IEEEeqnarray}{rrl} \label{app_eq:time_dyn}
\text{RMT for temporal-random {\gqmbcs}  w/o symmetries:}  \qquad  &  U_{\text{l}}^{\text{RMT}}(t) \; & \equiv \uu_{1}(t;u_{\mathrm{CUE}}) \, , 
\\
\text{RMT for Floquet {\gqmbcs}  w/o symmetries:} \qquad   &U_{\text{f}}^{\text{RMT}}(t) \; & \equiv \uu_{ 3}(t;u_{\mathrm{CUE}}) \, , 
 \end{IEEEeqnarray}
 where $u_{\mathrm{CUE}}$ is  drawn from the CUE.  
 To define RMT for global TRS {\gqmbcs}, we define 
\begin{IEEEeqnarray}{rl}
w_{\text{g-TRS}}(t) \; &=
\begin{cases} 
u_{\mathrm{CUE}} \prod_{s=1}^{(t-1)/2} 
u_{\text{CUE}}  \,, \quad \qquad \qquad & t \text{ odd,} 
\\
\prod_{s=1}^{t/2} 
u_{\text{CUE}}  \,,  \qquad \qquad  & t \text{ even,}
\end{cases}
 \end{IEEEeqnarray}
 which governs the first half of the global TRS circuits. The RMT for global TRS {\gqmbcs} are then generated by extending the above circuit with its transposition, as defined below.
Together with the other dynamics in the presence of TRS, we define
\begin{IEEEeqnarray}{rrll} \label{app_eq:time_dyn}
\text{RMT for {\gqmbcs} w/ local TRS:}  \qquad  &  U_{\text{l-TRS}}^{\text{RMT}}(t) \;&\equiv \uu_{ 1}(t;u_{\mathrm{CUE}}) \, ,&  
\\
\text{RMT for {\gqmbcs} w/ global TRS:} \qquad   &  U_{\text{g-TRS}}^{\text{RMT}}(t) \; &\equiv \uu_{ 2}(t; w_{\text{g-TRS}} ) \, ,  \qquad \qquad  &
\\
\text{RMT for {\gqmbcs} w/ Floquet TRS:} \qquad   &U_{\text{f-TRS}}^{\text{RMT}}(t) \;&\equiv \uu_{ 3}(t;u_{\mathrm{COE}}) \, ,   &
 \end{IEEEeqnarray}
 where $u_{\mathrm{COE}} $ and $u_{\mathrm{CUE}}$ are  drawn from the COE and CUE respectively. 
Notice that in many-body quantum circuits with TRS, even though 2-site unitary gates may be locally TRS, the corresponding bi-layer of such gates can break TRS since the bi-layer is not transpose invariant due to its geometry (see e.g. Fig.~\ref{fig:model} (c,d)). Consequently, the RMT for \textit{both} {\gqmbcs} w/ local TRS and RMT for {\gqmbcs} w/ global TRS are composed of unitaries from the CUE (as opposed to COE). 
In contrast, the Floquet TRS model  satisfies the local TRS and global TRS condition, and the corresponding RMT is made of unitaries from the COE. 
See Fig.~\ref{sm_fig:model_mm} for illustrations of the TRS RMTs. Note also that under this convention, global TRS RMT  for odd and even times coincide.

\subsection{Dual Ginibre models}
\begin{figure*}[htbp]
    \centering
    \includegraphics[width=0.25\textwidth]{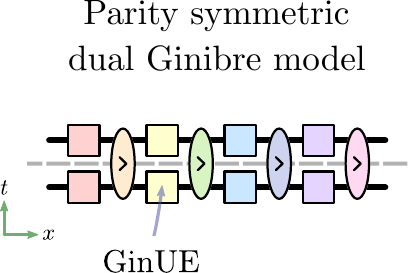}
    \caption{\textbf{TRS dual Ginibre model is parity symmetric.} In sufficiently large system size $L$, {\gqmbcs} with TRS as modelled by quantum circuit can be modelled by the parity-symmetric dual Ginibre model, that is, a RMT model of non-Hermitian matrices drawn from the Ginibre ensemble that is parity symmetric in the dual Hilbert space.  The dashed lines refer to the TRS inversion axes.
    }
    \label{sm_fig:model_dual}
\end{figure*}

It has been observed that universal behaviour of Ginibre ensemble emerges in dual (spatial) evolution of spatially-extended generic many-body quantum chaotic systems~\cite{Shivam_2023}. 
Here we construct a coarse-grained dual Ginibre model for global TRS  many-body quantum chaotic systems, as illustrated in Fig.~\ref{fig:ginue}. 
Consider a system with two (dual) coarse-grained sites, each with Hilbert space size $\mathbb{C}^{\NN}$ such that  the total Hilbert space has the size $\mathbb{C}^{\NN^2}$. 
The non-unitary evolution in the spatial direction is defined by
\be\label{app_eq:gin1}
\begin{aligned}
V_{ \text{g-TRS}}^{\mathrm{Gin}}(L) =&  \prod_{x}^L v(x) \;,
\end{aligned}
\ee
where for spatially-random systems, $v(x)$ does not correlate with $v(x')$ for $x\neq x'$. For translational invariant systems, $v(x) = v$ for all $x$. 
To incorporate global TRS in the dual Ginibre model, we choose 
\begin{IEEEeqnarray}{rl} \label{app_eq:gin2}
v(x) =& \; v_2 (x) v_1 (x) \, ,  
\\
v_2 (x) =&  \; G_2(x;\NN^2) + S G_2(x;\NN^2)  S \equiv   \tilde{G}_2 (x;\NN^2 ) \,  , 
\\
v_1 (x) =&  \; G_1(x;\NN) \otimes G_1(x;\NN)  \,,
 \end{IEEEeqnarray}
where $S$ is the two-site swap operator, defined by its action $S\ket{a_1, a_2}= \ket{a_2, a_1}$ where $\ket{a_1,a_2}$ with $a_i = 1,2,\dots, N$ is the computational basis in the Hilbert space.  
$G_i(x;{N})$ are independently drawn from the complex Ginibre ensembles of $\NN$-by-$\NN$ non-Hermitian random matrices, such that $\overline{[G_i]_{a a'}(x;\NN ) [G^*_i]_{b b'}(x;\NN)}=\delta_{ab} \delta_{a' b'} \sigma_i^{-2}(\NN)$ with $\NN$-dependent variance $\sigma_1^2(N)=\NN$ and $\sigma_2^2(N^2)=2\NN^2$.  We will leave the second argument on the size of the Ginibre matrices implicit in the rest of the manuscript. See Fig.~\ref{sm_fig:model_dual} for illustrations of the TRS dual Ginibre model.


\section{Weingarten functions for COE}\label{app:wg_coe}
Here we review some properties of the Weingarten function of circular orthogonal ensemble (COE), following \cite{Brouwer1996}. To make this section self-contained, we restate the integral equation over $N$-by-$N$ unitary matrix $\ucoe$ and its complex conjugate $\ucoe^*$ drawn from the COE~\cite{samuel1980wg1, creutz1978wg2, Brouwer1996, MATSUMOTO_2012}, which is given by
\be \label{app_eq:coe_formula}
\begin{aligned}
\overline{
[\ucoe]_{a_1 a_2} \dots 
[\ucoe]_{a_{2n-1} a_{2n}} 
[\ucoe^*]_{b_1 b_2} \dots 
[\ucoe^*]_{b_{2m-1} b_{2m}}}
= \delta_{n,m} \sum_{\sigma \in S_{2n}} \wg_{\mathrm{COE}}[\sigma; N] \prod_{i=1}^{2n} \delta_{a_i, b_{\sigma(i)}} \;,
\end{aligned}
\ee
where $\wg_{\mathrm{COE}}[\sigma; N]$ is the Weingarten function of the COE, taking arguments from $\sigma \in S_{2n}$ in the symmetry group of $2n$ objects. Define $\sigma_\mathrm{o}\in S_n$ by $\sigma_\mathrm{o}(n) = \left\lceil \sigma (2n-1)/2 \right\rceil$, and similarly $\sigma_\mathrm{e}\in S_n$ by $\sigma_\mathrm{e}(n) = \left\lceil \sigma (2n)/2 \right\rceil$, where $\left\lceil  \cdot \right\rceil$ is the ceiling function. The Weingarten function is found to depend only on the cycle structure, $\{c_1, \dots, c_k \}$,  with $c_i$ the length of the $i$-th cycle of $\sigma_\mathrm{o}^{-1} \sigma_\mathrm{e}$. With an abuse of notation, we write  $\wg_{\mathrm{COE}}[\sigma; N]= \wg_{\mathrm{COE}}[ \{c_1, \dots, c_k\}; N]$. The Weingarten function can be generated under a recursive relation~\cite{Brouwer1996, MATSUMOTO_2012}, 
\be
\ba
(N + c_1) \wg_\COE[\{c_1, \dots, c_k\}] 
+ &\sum_{p+q= c_1} \wg_\COE[\{ p,q,c_2, \dots, c_k\}] 
\\
&+ 2 \sum_{j=2}^k c_j \wg_\COE[\{ c_1 + c_j,c_2, \dots,c_{j-1} , c_{j+1}, \dots , c_k\}] 
= \delta_{c_1, 1}  \wg_\COE[\{ c_2, \dots, c_k\}] 
\ea
\ee
where we set $\wg[\emptyset] \equiv 1$, and we have left the argument $N$ implicit above. 
As examples, we have $\wg_\COE[\{ 1\}, N ]= 1/(N+1)$, 
$\wg_\COE[\{1, 1\}, N ]= (N+2)/N(N+1)(N+3)$, and
$\wg_\COE[\{2\}, N ]= -1/N(N+1)(N+3)$.

\section{SFF of global TRS random matrix theory}\label{app:global_trs_mm_sff}
Here we compute the SFF of the global TRS RMT, $U^{\mathrm{RMT}}_{\text{g-TRS}}$, as defined in Appendix ~\ref{app:matrixmodel} and Fig.~\ref{fig:model}. To this end, we will use the CUE formula for the second moment of a pair of $N$-by-$N$ unitary matrix and its conjugate is given by $\overline{[u_{\mathrm{CUE}}]_{a_1 b_1 } [u^*_{\mathrm{CUE}}]_{a_1'b_1'} [u_{\mathrm{CUE}}]_{a_2 b_2} [u^*_{\mathrm{CUE}}]_{a_2'b_2'} } = (N^2-1)^{-1} \left( \delta_{a_1 a_1'} \delta_{b_1 b_1'}
\delta_{a_2 a_2'} \delta_{b_2 b_2'}
+
\delta_{a_1 a_2'} \delta_{b_1 b_2'}
\delta_{a_2 a_1'} \delta_{b_2 b_1'}
\right) -[N(N^2-1)]^{-1} \left( 
\delta_{a_1 a_1'} \delta_{b_1 b_2'}
\delta_{a_2 a_2'} \delta_{b_2 b_1'}
+
\delta_{a_1 a_2'} \delta_{b_1 b_1'}
\delta_{a_2 a_1'} \delta_{b_2 b_2'}
\right)$, where the prefactors follow from the Weingarten functions for the CUE.
 By performing a folding procedure as in Fig.~\ref{fig:fold}, all identical unitaries drawn from the CUE and their conjugates are stacked locally. By treating the delta functions in the equation above as Wick contractions, and using these contractions as basis states, we can construct a transfer matrix $T$ in the \textit{time} direction in contrast to transfer matrix in the \textit{spatial} direction for the many-body RPM models, and for in the dual Ginibre model. The computation leads to
\be
K_{\text{g-TRS}}^{\mathrm{RMT}}(t,N) =
\left\langle l \right|
T_1 \left(T_2 T_1\right)^{\lceil t \rceil /2 -1}
\left| r \right\rangle
\,,  \quad 
T_1 =\frac{1}{N^2 -1} 
\begin{pmatrix}
1 & - \frac{1}{N}\\
- \frac{1}{N} & 1 
\end{pmatrix} \,,
\quad 
T_2 = 
\begin{pmatrix}
N^2 & N \\
N & N^2 
\end{pmatrix} \,,
\ee 
where $T_1$ accounts for the Weingarten functions, and $T_2$ accounts for the contraction due to the SFF structure.  $\left\langle l \right| = (N, N)$ and $\left|r \right\rangle = (N, N)^T$. $\lceil \cdot \rceil$ is the ceiling function.
Direct computation leads to 
\be
K_{\text{g-TRS}}^{\mathrm{RMT}}(t,N)= 2 \frac{N}{N+1}   \stackrel{N\rightarrow \infty}{\longrightarrow }  2 \,,
\ee
which is independent of $t$, and tends to 2 in large $N$ as expected.

\section{SFF diagrams}\label{app:sff_diag_proof}
In this section, we provide the proofs of Lemma~ \ref{lemma:leading_sff_diag} which is restated below. In doing so, we discuss technical details required to adapt the proof for three models, RPM, HRM and RMT, in four kinds of TRS symmetries. These variations will also be useful for Appendix \ref{app:2paf_proof}.
\begin{figure*}[htbp]
    \centering
    \includegraphics[width=1\textwidth]{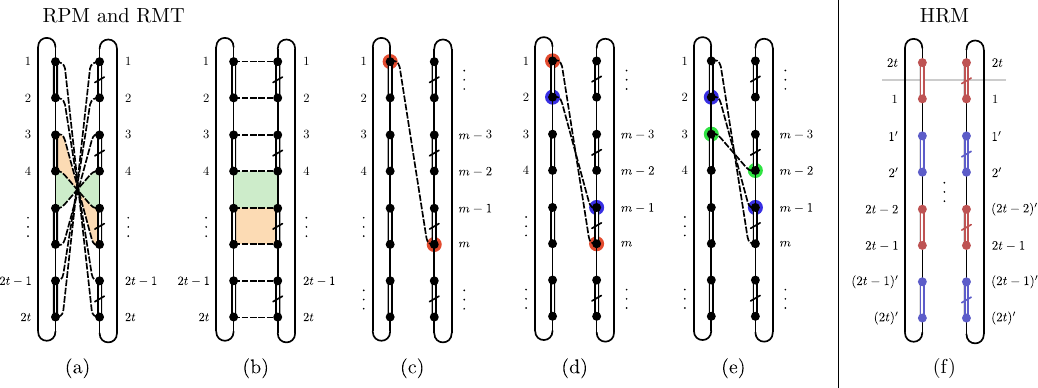}
    \caption{\textbf{Proof for leading local SFF diagrams for Floquet TRS models.} (a, b) Diagrammatical representation of SFF $K(t)$ with loops of consecutive solid and dashed lines (green), and loops of consecutive double and dashed lines (orange). 
    The SFF diagrams are the (a) twisted diagram \eqref{app_eq:twisted_sff} with $m=2t$, and (b) ladder diagram \eqref{app_eq:ladder_sff} with $m=1$. 
    (c) We can construct leading SFF twisted diagrams by choosing $\sigma(1)=m$ with even $m$. In order to form 1-loops (for reasons described in the main text), we are forced to choose $\sigma(2)=(m-1) \Mod{2t}$, as in (d), and $\sigma(3)=(m-2) \Mod{2t}$, as in (e). For HRM, the diagrammatical representation of the SFF on site $i$ is given in (f), where the red (blue) nodes denote unitaries that act on sites $i-1$ and $i$ (sites $i$ and $i+1$). Due to cyclicity, the representation of SFF allows us to shift the half gates around the trace, and the grey line denotes the boundary between the top and bottom half gates in Fig.~\ref{fig:model} (e,f,g). 
    }  
    \label{app_fig:sff_proof}
\end{figure*}

\begin{lemma}
\label{app_lemma:leadingsffcaseD}
\textbf{Leading local spectral form factor (SFF) diagrams for generic quantum many-body quantum chaotic systems with TRS and discrete time translational symmetry.} Consider the SFF \eqref{eq:sff_def} for (i) Floquet TRS random phase model (RPM), (ii) Floquet TRS Haar-random model (HRM), and (iii) the random matrix model (RMT) for Floquet TRS {\gqmbcs}.
%
For (i) and (ii) at each physical site in the order of the local Hilbert space dimension $q$, and for (iii)  in the order of matrix dimension $N$, the $2t$ leading SFF diagrams are of order $O(1)$ and are given by $t$ twisted diagrams,
\be\label{app_eq:twisted_sff}
\sigma_{\mathrm{twisted}}^{(m)}(i) = (m - i +1) \Mod{2t}
 \,, 
\ee
for $m=2,4, \dots, 2t$, and $t$ ladder diagrams,
\be\label{app_eq:ladder_sff}
\sigma_{\mathrm{ladder}}^{(m)}(i) = (m + i -1) \Mod{2t}
 \,, 
\ee
where $m=1,3,\dots, 2t-1$, and we take the convention with $2t \Mod{2t} = 2t$. The conventions of permutation labels are given in Fig.~\ref{app_fig:sff_proof} (a) for models (i) and (iii), and Fig.~\ref{app_fig:sff_proof} (f) for model (ii).
For (ii), the leading SFF diagrams have contractions on unitaries acting on sites $i-1$ and $i$ (in red), and contractions on unitaries acting on sites $i$ and $i+1$ (in blue) both taking identical permutation values given by \eqref{app_eq:twisted_sff} or \eqref{app_eq:ladder_sff}. 
\end{lemma}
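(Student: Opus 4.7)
The plan is to attach to every local contraction $\sigma \in S_{2t}$ a $q$-scaling (or $N$-scaling) consisting of a Weingarten contribution and a loop-counting contribution, and then to maximize their sum. Applied to the RPM, the 1-site average delivers a factor $\wg_{\COE}[\sigma;q]$ which, by the cycle-length bound quoted in the excerpt, is $O(q^{k(\sigma)-2t})$ where $k(\sigma)$ is the number of cycles of $\sigma_\mathrm{o}^{-1}\sigma_\mathrm{e} \in S_t$. The residual delta functions $\prod_i \delta_{a_i, b_{\sigma(i)}}$, when closed against the SFF tensor network of Fig.~\ref{app_fig:sff_proof}(a), contribute $q^{\ell(\sigma)}$ with $\ell(\sigma)$ the number of closed loops of alternating solid and dashed segments. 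Thus
\begin{equation}
    \text{ord}(\sigma) = q^{k(\sigma)+\ell(\sigma)-2t},
\end{equation}
and identifying the leading diagrams reduces to the purely combinatorial question of maximizing $k(\sigma)+\ell(\sigma)$ over $S_{2t}$.

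Next I would establish the a priori bound $k(\sigma)+\ell(\sigma)\le 2t$. For the cycle count: every cycle of $\sigma_\mathrm{o}^{-1}\sigma_\mathrm{e}$ has length at least one, so $k(\sigma)\le t$. For the loop count: every loop in Fig.~\ref{app_fig:sff_proof}(a) contains at least one solid and one dashed segment, and the total numbers of each are $2t$ (from the trace structure), so $\ell(\sigma)\le t$. Saturation of both bounds simultaneously is what must be achieved.

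I would then classify the saturating permutations by propagation. Fix $\sigma(1)=m$; requiring the solid-dashed-solid-dashed walk starting at dot $1$ to close into a $1$-loop forces the choice of $\sigma(2)$, as illustrated in Fig.~\ref{app_fig:sff_proof}(c,d). Iterating the loop-closure constraint as in Fig.~\ref{app_fig:sff_proof}(e) fixes all remaining images uniquely, and the two admissible propagation rules depend on the parity of $m$: for even $m$ one gets $\sigma(i)=(m-i+1)\Mod{2t}$ (twisted), while for odd $m$ one gets $\sigma(i)=(m+i-1)\Mod{2t}$ (ladder). Direct substitution into $\sigma_\mathrm{o}^{-1}\sigma_\mathrm{e}$ for both families yields the identity in $S_t$, so $k=t$ as well. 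This produces exactly $2t$ diagrams, reproducing Eqs.~\eqref{app_eq:twisted_sff}--\eqref{app_eq:ladder_sff}. The hard step here, and the main obstacle I anticipate, is ruling out any other saturating $\sigma$: I expect to handle this by reducing an arbitrary would-be saturating $\sigma$ to its irreducible form (as defined earlier in the excerpt) and showing that any irreducible block other than a single ladder or single twisted contraction strictly decreases $\ell$ or $k$ — essentially an inductive argument on the number of irreducible blocks.

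The extensions to HRM and RMT are minor rewrites of the same argument. In HRM the single-site picture is replaced by the two-bond picture of Fig.~\ref{app_fig:sff_proof}(f); the Weingarten becomes $\wg_{\COE}[\sigma;q^2]$ on each bond, and the bond constraint forces the red and blue contractions at a shared site to match, so the propagation argument fires independently on each bond with the same $2t$ solutions. In the RMT case the whole Hilbert space is averaged by a single $N$-dimensional COE matrix, which is formally the $L=1$ reduction of the RPM analysis with $q\mapsto N$, and the same $2t$ diagrams emerge. The restrictions to global TRS and local TRS quoted in Lemma~\ref{lemma:leading_sff_diag} will follow from the same analysis by observing that, without the periodic identification across the Floquet direction, the twisted/ladder propagation rule admits only the two boundary-compatible choices $m=2t$ and $m=1$ respectively.
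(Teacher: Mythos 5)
Your argument mirrors the paper's own proof: compute the local order as $q^{k(\sigma)+\ell(\sigma)-2t}$, bound $k,\ell\le t$, and then show by iterative loop-closure forcing that each choice $\sigma(1)=m$ determines the rest of $\sigma$, yielding exactly $2t$ saturating permutations split by the parity of $m$. The ``hard step'' you flag is in fact already settled by that same forcing: once you require every loop to be a $1$-loop, each $\sigma(j+1)$ is uniquely determined by $\sigma(j)$, so no saturating $\sigma$ exists outside the $2t$ you construct, and the irreducible-block induction you sketch is superfluous (the paper does not invoke it; it spells out the forcing for even $m$ and cites the earlier no-TRS treatment for odd $m$). One small repair is needed in your bound on $\ell$: with $2t$ solid and $2t$ dashed segments in total, ``at least one of each per loop'' gives only $\ell\le 2t$; the correct observation is that each closed walk must alternate $a$-type and $b$-type dots and therefore traverse at least \emph{two} dashed (and two solid) segments, which yields the claimed $\ell\le t$.
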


\noindent \textit{Examples.} For the twisted SFF diagrams of Floquet TRS RPM on each site at $t=3$, see Fig.~\ref{fig:sff_diag} (d), (c), and (b)  for  $\sigma_{\mathrm{twisted}}^{(m)}$ with $m=2,4,$ and $6$ respectively. For the ladder SFF diagrams of the same model, Fig.~\ref{fig:sff_diag} (e), (f), and (g)  for  $\sigma_{\mathrm{ladder}}^{(m)}$ with $m=1,3,$ and $5$ respectively.

\vspace{0.5cm}
\begin{proof}
Consider the SFF $K(t)$ of models (i), (ii) and (iii), which can be diagrammatically represented as Fig.~\ref{app_fig:sff_proof} (a,b). Note that for (ii), the half gates discussed in the main text [e.g. Fig.~\ref{fig:model_hrm_maintext} (c) top-most and bottom-most layers] can be combined to a COE gate using the cyclic property of trace in the SFF [Fig.~\ref{app_fig:sff_proof} (f)].  
We use $\mathcal{N}$ to denote the dimension of a single tensor network worldline, i.e. $\mathcal{N}=q$ for model (i) and (ii), and $\mathcal{N}=N$ for model (iii). 
We use $\tilde{\mathcal{N}}$ to denote the dimension of the COE gates in each model, i.e. $\tilde{\mathcal{N}}=q, q^2, N$ for models (i), (ii), and  (iii) respectively. 
Under Eq.~\eqref{eq:coe_formula} and the diagrammatical approach, there are two types of loops in the evaluation of $K(t)$: Loops composed of consecutive solid and dashed lines [Fig.~\ref{app_fig:sff_proof} (a,b) green], and loops composed of consecutive double and dashed lines [Fig.~\ref{app_fig:sff_proof} (a,b) orange]. Loops of the first type arise from the evaluation of the delta functions in Eq.~\eqref{eq:coe_formula}, and correspond to a sum over the Hilbert space of a single degrees of freedom, giving rise to a factor of $\mathcal{N}$ per loop. Loops of the second type in each diagram give rise to factors of $\mathcal{N}$, due  to the Weingarten $\wg[(c_1, c_2, \dots , c_k); \tilde{\mathcal{N}}] = O(\tilde{\mathcal{N}}^{k - 2\sum_{i=1}^k c_i})$. 
Together, these facts imply that the leading SFF diagrams correspond to diagrams with the most number of loops. 

Now we identify the order of a leading SFF diagram in $\mathcal{N}$. For all models, there always exists the SFF diagram, $\sigma_{\mathrm{ladder}}^{(m=1)}$, on each site (for (iii), the RMT models, there is effectively only one site). This diagram is associated with $\mathcal{N}^t$ from $t$ 1-loops  of the first type, and with $\wg[\{ 1,1,\dots, 1\};\tilde{\mathcal{N}}] = O(\tilde{\mathcal{N}}^{-t})$ with $t$ 1-loops of the second type [see e.g. Fig.~\ref{fig:sff_diag} (e) for $K(t=3)$]. Note that for (ii), this local diagram is assigned $\sqrt{\wg[\{ 1,1,\dots, 1\};\tilde{\mathcal{N}}]} = O(\mathcal{N}^{-t})$ \cite{chan2018solution}. Together, the SFF diagram $\sigma_{\mathrm{ladder}}^{(m=1)}$ is of order $O(1)$ under the normalization in \eqref{eq:sff_def}.
Thus, any diagrams of order $O(\mathcal{N}^{-1})$ is subleading, and in fact, any diagrams with loops longer than length 1 are subleading diagrams in $\mathcal{N}$. This is because for a given diagram at time $t$, the total length of all loops are fixed, and therefore having higher number of longer loops implies fewer number of loops.


For fixed odd $m$, consider the diagram labelled by the permutation $\sigma_{\mathrm{ladder}}^{(m)}$ in \eqref{app_eq:ladder_sff}. \cite{chan2018solution} has shown that this diagram is the only leading diagrams for a given odd $m$. 

For fixed even $m$, consider the permutation $\sigma_{\mathrm{twisted}}^{(m)}$ with $\sigma_{\mathrm{twisted}}^{(m)}(1)=m$, as in Fig.~\ref{app_fig:sff_proof} (c). In order to have 1-loop, the condition of $\wg[\{ 1,1,\dots, 1\}, \tilde{\mathcal{N}}]$ forces $\sigma(2) = (m-1) \Mod{2t}$, such that a 1-loop of loop type 2 is formed [Fig.~\ref{app_fig:sff_proof} (d)]. Any other choices $\sigma(2) \neq (m-1) \Mod{2t}$ would lead to a loop with loop length larger than 1. Next, $\sigma(3) = (m-2) \Mod{2t}$ so that a new loop of type 1 with length 1 is formed [Fig.~\ref{app_fig:sff_proof} (e)]. Again, any other choices of $\sigma(3) \neq (m-2) \Mod{2t}$  leads to a loop with loop length larger than 1. This procedure can be iterated until $\sigma(2t) = (m-2t+1) \Mod{2t}$. 

For  TRS HRM, the diagrammatical representation of the SFF on site $i$ is given in Fig.~\ref{app_fig:sff_proof} (f), where the red (blue) nodes denote unitaries that act on sites $i-1$ and $i$ (sites $i$ and $i+1$). Contractions are allowed among dots of the same colour from the unitaries to their conjugation due to the bond constraint (see main text and \cite{chan2018solution}). In contrast, for TRS RPM, we use two permutations $\sigma_{\mathrm{r}}, \sigma_{\mathrm{b}} \in S_{2t}$ to label the possible contractions. The joint permutation of these two permutations lie within $S_{4t}$. By repeating the same reasoning described in the last paragraphs onto this joint permutation, one can identify \eqref{app_eq:twisted_sff} and \eqref{app_eq:ladder_sff} as the $2t$ leading diagrams that respect the bond constraint.
\end{proof}

\begin{figure*}[htbp]
    \centering
    \includegraphics[width=1\textwidth]{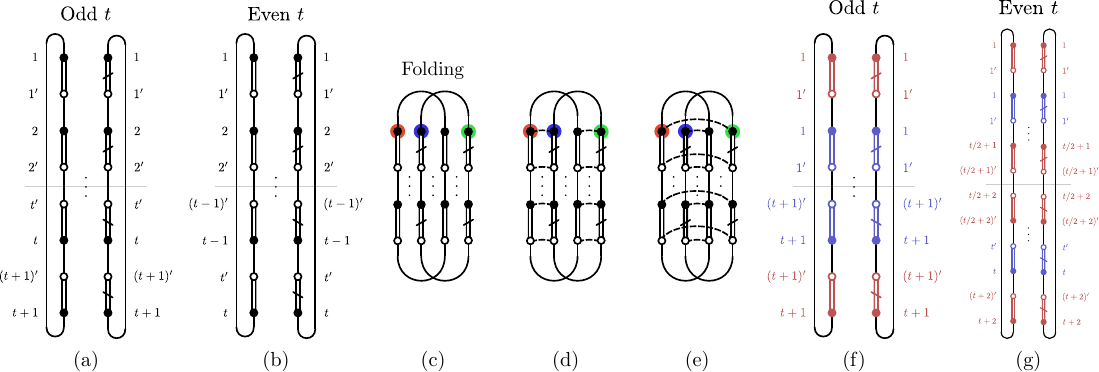}
    \caption{\textbf{Proof for leading local SFF diagrams for global TRS models w/o local TRS.} 
    (a, b) Diagrammatical representation of SFF $K(t)$ of global TRS RPM w/o local TRS and its corresponding RMT for odd and even time $t$. 
    (c) The diagram can be folded [see Fig.~\ref{fig:fold}] such that identically-drawn unitary matrices are placed together locally. The red dot can be contracted to either the blue or the green dot. The former choice imposes on all other contractions, giving the leading diagram in (d), and the latter giving (e). 
    For HRM, the diagrammatical representation of the SFF on site $i$ is given in (f) and (g) for odd and even time $t$, where the red (blue) nodes denote unitaries that act on sites $i-1$ and $i$ (sites $i$ and $i+1$).
    For (a,b,f,g), the grey lines denote the time reversal axis.
    }  
    \label{app_fig:sff_proof2}
\end{figure*}

\begin{figure}[ht]
    \centering
    \includegraphics[width=0.4 \textwidth]{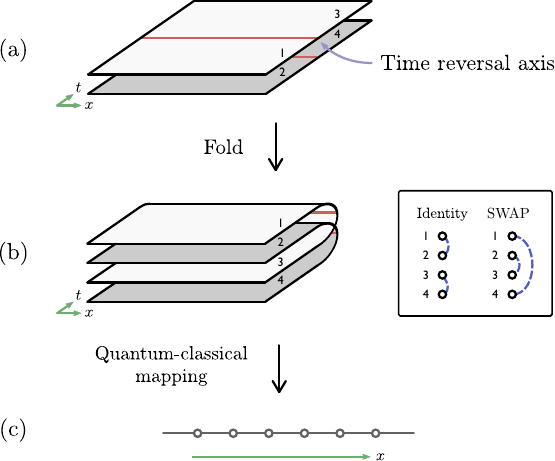}
    \caption{\textbf{Global TRS and folding.} The folding procedure and quantum-classical mapping. (a) Illustration of $U(t,L)$ and $U^\dagger(t,L)$ in light and dark grey respectively. The TRS inversion axis is illustrated in red. 
    (b) Folding can be performed such that, for global TRS models, the identically-sampled unitary gates are placed as stacks of unitaries in proximity to each other.    (c) A quantum-classical mapping can be performed such that the large-$q$ limit of SFF is mapped to the partition function of a classical ferromagnetic Ising model. 
    This folding procedure naturally arises in global TRS circuits and is also used in~\cite{khanna2025randomquantumcircuitstimereversal}.    }
    \label{fig:fold}
\end{figure}

\begin{corollary} \label{app_cor:globaltrs_w_localtrs}
\textbf{Leading local SFF diagrams for generic quantum many-body quantum chaotic systems with global TRS and local TRS.} Consider the SFF \eqref{eq:sff_def} for (i) global TRS RPM w/ local TRS, (ii) global TRS HRM  w/ local TRS. 
At each physical site in the order of local Hilbert space dimension $q$, the $2$ leading SFF diagrams are of order $O(1)$ and are given by the twisted diagram 
$\sigma_{\mathrm{twisted}}^{(m=2t)}(i)$
and the ladder diagram  
$\sigma_{\mathrm{ladder}}^{(m=1)}(i)$ defined in \eqref{app_eq:twisted_sff} and \eqref{app_eq:ladder_sff} in \ref{app_lemma:leadingsffcaseD}. 
\end{corollary}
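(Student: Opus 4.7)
The plan is to reduce to Lemma~\ref{app_lemma:leadingsffcaseD} by exploiting the folding construction illustrated in Fig.~\ref{fig:fold}. The essential observation is that the global TRS (without Floquet) condition identifies each gate $u(s)$ in the ``upper half'' of the circuit with the transpose of the corresponding gate $u(t+1-s)^{T}$ in the ``lower half,'' but does \emph{not} identify gates at different distances from the time-reversal axis with each other. Hence, after folding [Fig.~\ref{app_fig:sff_proof2} (c)], the ensemble-average at each physical site factorizes into independent averages over $\lceil t/2 \rceil$ \emph{pairs} of Haar-random gates, rather than the single product $u^{t}$ averaged via the single Weingarten function $\wg[\cdot;\tilde{\mathcal N}]$ of dimension $2t$ that would arise under time translation.

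First, I would set up the diagrammatic bookkeeping site by site: represent $K(t)$ for a single site of the global TRS RPM (or a single bond of HRM) as in Fig.~\ref{app_fig:sff_proof2} (a,b,f,g), then fold across the grey TRS axis to obtain a stack in which each sampled local Haar gate appears together with its conjugate partner. At each folded layer, integration over one pair of $(u,u^{*})$ reduces to Eq.~\eqref{eq:coe_formula} with $n=m=1$, which permits exactly two leading-order local contractions on that layer: identity and SWAP [cf.\ Fig.~\ref{fig:contraction}], each of order $O(1)$ per folded pair.

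Second, I would string these per-layer choices together and determine which global sequences of identity/SWAP choices across the $\lceil t/2\rceil$ folded layers correspond to permutations $\sigma\in S_{2t}$ that are of order $O(1)$ in $\mathcal N$. A sequence is ``consistent'' if the induced double-line loops of types (1) and (2) described in the proof of Lemma~\ref{app_lemma:leadingsffcaseD} are all of length one, since any longer loop produces suppression in $\mathcal N$. The key lemma-style step is to show that the only two globally consistent sequences are: (a) identity on every folded layer, which reconstructs $\sigma^{(m=1)}_{\mathrm{ladder}}$; and (b) SWAP on every folded layer, which reconstructs the reflected pairing $\sigma^{(m=2t)}_{\mathrm{twisted}}$. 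Any mixed sequence forces a loop of length $\geq 2$ at the interface between layers with different choices, costing a power of $\mathcal N^{-1}$ per such interface. This is the step that most directly uses the absence of time-translation invariance: the ``shift'' freedom $m\mapsto m+2$ that generated the other $2t-2$ Floquet leading diagrams is no longer present because the gates at different folded layers are statistically independent, so a SWAP chosen at one layer cannot be ``propagated'' through a subsequent layer without incurring a non-singleton loop.

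Third, I would handle the top and bottom half-gates: by the $u=\mathcal{S}\sqrt{D}\mathcal{S}\sqrt{D}^{T}$ decomposition used in Sec.~\ref{sec:models}, the two half-gates at the circuit boundary behave analogously to a single COE gate under ensemble average and hence contribute a factor compatible with the two surviving sequences above. Finally, for the HRM variant, I would extend the argument by also tracking the bond-constraint colouring of Fig.~\ref{app_fig:sff_proof2} (f,g): the red and blue sub-permutations must each independently satisfy the above consistency condition, so the survivors remain $\sigma^{(1)}_{\mathrm{ladder}}$ and $\sigma^{(2t)}_{\mathrm{twisted}}$ with both colours making the same choice at each site. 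The main obstacle I anticipate is precisely this careful enumeration of ``mixed'' sequences for HRM --- verifying that no clever combination of red/blue identity-SWAP choices, potentially twisted across the TRS axis, leaks additional $O(1)$ contributions --- which should follow by the loop-counting argument of Lemma~\ref{app_lemma:leadingsffcaseD} applied pair by pair in the folded geometry.
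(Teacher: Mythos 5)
Your proposal is correct and follows essentially the same route as the paper's proof of Corollary~\ref{app_cor:globaltrs_w_localtrs}: both use the folding construction across the time-reversal axis [Fig.~\ref{fig:fold}, Fig.~\ref{app_fig:sff_proof2}] to stack identically-drawn gates, then argue by loop counting that once the first contraction (identity or SWAP) is chosen at one folded layer, consistency at $O(1)$ forces the same choice on every subsequent layer, leaving only the two diagrams $\sigma_{\mathrm{ladder}}^{(m=1)}$ and $\sigma_{\mathrm{twisted}}^{(m=2t)}$; the paper phrases this as ``the first contraction imposes on all the others,'' while you equivalently phrase it as ``mixed sequences force a loop of length $\geq 2$ and hence are suppressed.'' One minor slip: the half-gate decomposition you quote should read $u = v v^{T}$ with $v = \mathcal{S}\sqrt{D}$, i.e.\ $u = \mathcal{S}\sqrt{D}\,(\mathcal{S}\sqrt{D})^{T}$, not $\mathcal{S}\sqrt{D}\,\mathcal{S}\sqrt{D}^{T}$, but this does not affect the substance of the argument.
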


\noindent Note again that we use the permutation convention as in Fig.~\ref{app_fig:sff_proof} (a), and the  convention $2t \Mod{2t} =  2t$. The SFF for RMT for global TRS {\gqmbcs} can be evaluated exactly in Appendix~\ref{app:global_trs_mm_sff}.

\vspace{0.2cm}

\begin{corollary}\label{app_cor:globaltrs_wo_localtrs}
\textbf{Leading local SFF diagrams for generic quantum many-body quantum chaotic systems with global TRS  without local TRS.} Consider the SFF \eqref{eq:sff_def} for (i) global TRS RPM w/o local TRS, (ii) global TRS HRM  w/o local TRS, and (iii) the RMT for global TRS {\gqmbcs} (w/ or w/o local TRS).
For (i) and (ii) at each physical site in the order of the local Hilbert space dimension $q$, and for (iii)  in the order of matrix dimension $N$, the $2$ leading SFF diagrams are of order $O(1)$ and are given by the twisted diagram $\sigma_{\mathrm{twisted}}^{(m=\tmax)}(i)= \tmax-i+1$, and the ladder diagram  $\sigma_{\mathrm{ladder}}(i) =i $. 
For (i) and (iii), $\tmax=t+\delta_{0,(t+1)\Mod{2}}$. 
For (ii) and odd $t$, we have $\tmax = t+1$ for both contractions of the red gates, $\sigma_{\mathrm{r}}$, and of the blue gates, $\sigma_{\mathrm{b}}$.
For (ii) and even $t$, we have $\tmax = t+2$ for $\sigma_{\mathrm{r}}$ and $\tmax = t$ for $\sigma_{\mathrm{b}}$.
\end{corollary}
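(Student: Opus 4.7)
The strategy is to exploit the folding construction shown in Fig.~\ref{fig:fold} and Fig.~\ref{app_fig:sff_proof2}(c), which is tailored to global TRS. Unlike the Floquet case in Lemma~\ref{app_lemma:leadingsffcaseD}, where independent gates at every time step lead to $2t$ independent Haar averages and hence $2t$ leading diagrams, in global TRS circuits w/o local TRS each gate in the first half of $U(t)$ (below the time-reversal axis) is identified via transposition with a gate in the second half (above the axis). Under the folding operation that stacks identically sampled unitaries, each gate therefore appears exactly twice in $U(t)\otimes U^{\dagger}(t)$ (before complex conjugation). This reduces the symmetric group controlling the contractions at each such stacked vertex from the Floquet $S_{2t}$ to $S_{2}=\{\iden_{2},\text{SWAP}\}$, and hence yields exactly two survivors under the ensemble average.

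The plan is as follows. First, I would rewrite $K(t) = \Tr U(t) \Tr U^\dagger(t)$ using the explicit circuit decomposition of Sec.~\ref{app:time_dyn}, i.e.\ $U(t)=w^{T}(t)w(t)$, and then perform the folding along the time-reversal axis. After folding, the tensor network at each physical site (for RPM and RMT) or at each bond (for HRM) consists of a sequence of stacks, where each stack contains two copies of a single Haar-random gate and two copies of its complex conjugate. Second, I would apply the CUE Weingarten formula to each stack. The two surviving local Wick contractions, labeled by $\iden_{2}$ and SWAP, correspond respectively to the ``ladder'' contraction (each gate paired with its own conjugate across the fold) and the ``twisted'' contraction (each gate paired with the conjugate of its TRS partner across the fold). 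By the same loop-counting argument as in the proof of Lemma~\ref{app_lemma:leadingsffcaseD} (maximal number of unit-length loops), both contractions yield diagrams of order $O(1)$ in the local Hilbert space dimension, while any other assignment mixing non-identical unitaries across the fold induces longer loops and is suppressed by at least one power of $\mathcal{N}^{-1}$. Third, I would unfold the two survivors to express them as permutations in the original (unfolded) labeling of Fig.~\ref{app_fig:sff_proof2}(a,b,f,g); the identity stack-contraction pulls back to $\sigma_{\mathrm{ladder}}(i)=i$, while the SWAP stack-contraction pulls back to the reflection $\sigma_{\mathrm{twisted}}^{(m=\tmax)}(i)=\tmax-i+1$, whose reflection axis sits on the TRS inversion line.

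The remaining task, and the main bookkeeping obstacle, is the determination of $\tmax$ for each model. Because the effective length of the permutation equals the total number of dots along one of the traces after folding, $\tmax$ is simply twice the number of stacks minus the dots that lie on the TRS axis. For (i) global TRS RPM w/o local TRS, and (iii) RMT, the parity of $t$ dictates whether the TRS axis bisects a bilayer (even $t$) or lies between two bilayers after inserting a half-gate (odd $t$), producing the half-integer shift $\tmax=t+\delta_{0,(t+1)\Mod{2}}$. For (ii) HRM, the brick-wall geometry treats even and odd spatial bonds asymmetrically relative to the TRS axis: for odd $t$ both red (even bond) and blue (odd bond) bond contractions see the same effective length $t+1$; for even $t$, however, the axis cuts one bond family at mid-gate while leaving the other aligned with a full gate, giving $\tmax=t+2$ for the red contraction $\sigma_{\mathrm r}$ and $\tmax=t+1$ for the blue contraction $\sigma_{\mathrm b}$, consistent with Fig.~\ref{app_fig:sff_proof2}(f,g). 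I would formalize this by explicitly enumerating dots in the folded picture for both parities of $t$ and for both $\sigma_{\mathrm r},\sigma_{\mathrm b}$, then reading off $\tmax$ directly. This step, rather than the diagrammatic loop counting, is where I expect the technical care to be needed; the rest of the argument is a direct specialization of Lemma~\ref{app_lemma:leadingsffcaseD} to the reduced symmetry group $S_{2}$ induced by the global TRS identification.
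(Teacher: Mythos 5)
Your folding-based approach is essentially the same as the paper's: the paper states the result follows from Lemma~\ref{app_lemma:leadingsffcaseD}, restricting $m$ to two values because time-translational symmetry is absent, and then illustrates this by the identical folding construction in Fig.~\ref{app_fig:sff_proof2}(c), where the first contraction on a folded stack has two options (red--blue or red--green) and each choice fixes all the rest, yielding the ladder and twisted diagrams. Your parity-based bookkeeping of $\tmax$ also matches what Fig.~\ref{app_fig:sff_proof2}(a,b,f,g) encodes, though like the paper you leave it at the level of reading off figures rather than a formal enumeration.

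One conceptual error in your opening sentence deserves correction, even though it does not break the corollary's argument: the Floquet case in Lemma~\ref{app_lemma:leadingsffcaseD} does \emph{not} have "independent gates at every time step lead[ing] to $2t$ independent Haar averages." That description is the \emph{local} TRS model of Eq.~\eqref{eq:ltrs_cond}, which has a \emph{single} leading ladder diagram. Floquet means time-translational symmetry, $U(t)=u^{t}$: the \emph{same} COE gate is applied $t$ times, so the moment formula Eq.~\eqref{eq:coe_formula} carries a Weingarten sum over $\sigma\in S_{2t}$ (the gate appears $t$ times in $\Tr U$ and $t$ times in $\Tr U^{\dagger}$), and the $2t$ leading diagrams are the $2t$ permutations that maximize the loop count — this is what produces the linear ramp. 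Accordingly, the reduction to two diagrams for global TRS w/o local TRS is not "fewer independent gates than Floquet"; it is that the gates are now \emph{independently drawn CUE matrices} that each appear exactly twice (once directly, once transposed across the fold), so the per-gate average is the CUE second moment whose Weingarten sum runs over $S_{2}\times S_{2}$ rather than simply $S_{2}$ — of these four terms, exactly the two diagonal elements $(\iden_{2},\iden_{2})$ and $(\text{SWAP},\text{SWAP})$ survive at leading order, which is consistent with the explicit computation in App.~\ref{app:global_trs_mm_sff}. Your conclusion is right, but the contrast with the Floquet case as you phrase it is inverted.
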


\begin{corollary}\label{app_cor:localtrs}
\textbf{Leading local SFF diagrams for generic quantum many-body quantum chaotic systems with local TRS.} Consider the SFF \eqref{eq:sff_def} for (i) local TRS RPM, (ii) local TRS HRM, and (iii) the RMT for local TRS {\gqmbcs}.
For each (effective) physical site, in the order of local Hilbert space size $q$ for (i) and (ii), and in the order of matrix dimension $N$ for (iii), the leading SFF diagrams  is of order $O(1)$ and is given by  the ladder diagram  
$\sigma_{\mathrm{ladder}}^{(m=1)}(i)$ defined by \eqref{app_eq:ladder_sff}  in \ref{app_lemma:leadingsffcaseD}.
\end{corollary}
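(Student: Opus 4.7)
The proof would proceed by specializing the loop-counting argument developed in the proof of Lemma~\ref{app_lemma:leadingsffcaseD}, exploiting the simplification that, for local TRS models, every gate in the circuit is sampled independently at each spacetime coordinate. The plan is to apply the COE averaging formula \eqref{app_eq:coe_formula} (or its CUE counterpart for the RMT) gate by gate, and then show that the only way to assemble a diagram with a number of loops equal to the upper bound identified in the proof of Lemma~\ref{app_lemma:leadingsffcaseD} is to choose the local ladder (identity) contraction at every gate, yielding the global diagram $\sigma_{\mathrm{ladder}}^{(m=1)}$.

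First I would dispose of model (iii), the local TRS RMT, for which the circuit in Fig.~\ref{fig:model}(d) consists of $t$ independent CUE gates. The CUE Weingarten formula at leading order in $N^{-1}$ contains only the identity contraction at each gate, and the single permutation that propagates consistently through the trace is $\sigma_{\mathrm{ladder}}^{(m=1)}$, giving $\overline{K}=1$ at $O(1)$. Next I would turn to models (i) and (ii), where at each COE gate the averaging yields two Wick contractions, ladder ($\iden_2 \in S_2$) and twisted (SWAP$\in S_2$), each weighted by $\wg_{\mathrm{COE}}[\{1\};\tilde{\mathcal{N}}]=O(\tilde{\mathcal{N}}^{-1})$. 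Applying the loop-counting rule from Lemma~\ref{app_lemma:leadingsffcaseD}, an $O(1)$ contribution requires the number of loops (of both types) to saturate the bound set by $t$ 1-loops of each type at every physical site. Because the gates at different times are statistically independent, the only permissible pairings are between $U(t)$ and $U^\dagger(t)$ gates lying in the same time layer, which forces the contraction pattern at every layer to be the local ladder. Any local twisted contraction at a single gate produces a crossing that merges two adjacent 1-loops into a single loop of length 2, losing a factor of $\mathcal{N}$ relative to the ladder and yielding an $O(\mathcal{N}^{-1})$ diagram. Iterating gate by gate identifies $\sigma_{\mathrm{ladder}}^{(m=1)}$ as the unique leading diagram.

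The main conceptual obstacle is to explain why, in contrast with Lemma~\ref{app_lemma:leadingsffcaseD} and Corollaries~\ref{app_cor:globaltrs_w_localtrs} and \ref{app_cor:globaltrs_wo_localtrs}, the twisted sector and the cyclic-shift multiplicity of the ladder sector do not contribute here. The resolution is that both of those multiplicities are generated, in the Floquet and global TRS cases respectively, by the nontrivial identification of gates at different times (via $U(t)=u^t$ or via the TRS reflection $u\to u^T$), which makes $u^*(t')$ also a valid Wick partner for $u(t'')$ with $t''\neq t'$. In the local TRS case all gates are independently sampled, so the only nonvanishing Wick partner of $u(t')$ is $u^*(t')$, and the $2t$-fold multiplicity collapses to a single leading diagram $\sigma_{\mathrm{ladder}}^{(m=1)}$. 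For the HRM, a minor bookkeeping step is needed to combine the red and blue contraction labels introduced in Fig.~\ref{app_fig:sff_proof}(f) (as in the proof of Lemma~\ref{app_lemma:leadingsffcaseD}) and to verify that the bond constraint is compatible with the all-ladder assignment, which it trivially is.
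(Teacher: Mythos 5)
There is a genuine gap in the step you call ``iterating gate by gate.'' You argue that a single local twisted contraction merges two $1$-loops into one, costing a factor of $\mathcal{N}^{-1}$, and conclude by iteration that the all-ladder is uniquely leading. But the loop cost is not monotonic in the number of twists: a second twist at a cyclically adjacent gate can split the merged loop back apart. Concretely, for the product of two independent single-site COE gates (the local TRS case at $t=2$), the pattern with the SWAP contraction at both gates has the same number of free indices as the all-ladder, namely $\ell = 2$, and direct evaluation gives $\overline{|\Tr[u(2)u(1)]|^2} = 2q/(q+1) \to 2$ as $q\to\infty$, so the all-ladder and the all-SWAP are both $O(1)$. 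This is not an accident: at $t=2$ the all-SWAP (a product of per-gate $S_2$ elements, hence permissible for local TRS) coincides with $\sigma_{\mathrm{twisted}}^{(m=2)}$, which is explicitly listed among the leading Floquet diagrams in Lemma~\ref{app_lemma:leadingsffcaseD}. Your iteration therefore contradicts that lemma at $t=2$.

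What actually establishes the claim for $t \geq 3$ is a \emph{global} statement about the product of per-gate contractions, not a gate-by-gate induction: the all-ladder achieves the maximal loop count $\ell = t$, the all-SWAP achieves only $\ell = 2$ (even $t$) or $\ell = 1$ (odd $t$), and all patterns with some but not all gates twisted land strictly below $t$ as well; for $t \geq 3$ these are all genuinely subleading. Your write-up should replace the per-gate iteration with this comparison of terminal loop counts. Your treatment of model (iii) via CUE Weingarten, and your concluding conceptual paragraph identifying the origin of the $2t$-fold multiplicity in the Floquet/global TRS cases as the cross-time gate identification, are both correct and in fact more explicit than the paper's one-sentence proof, which glosses over the $t \leq 2$ edge case entirely.
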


\begin{proof}
For \ref{app_cor:globaltrs_w_localtrs} and \ref{app_cor:globaltrs_wo_localtrs},
the proofs follow directly from the proof of \ref{app_lemma:leadingsffcaseD}, except that due to the lack of translational invariance in time, only the 2 leading SFF diagrams, one twisted diagram and one ladder diagram, are allowed. It is instructive to illustrate the proof by folding the diagram across the time reversal axis, grey lines in Fig.~\ref{app_fig:sff_proof2} (a,b,f,g), such that identically-drawn unitary matrices are placed together locally [Fig.~\ref{fig:fold}]. The first contraction can be assigned between red and blue, or red and green dots in Fig.~\ref{app_fig:sff_proof2} (c), which imposes on all the other contractions, giving rise to the leading order SFF ladder diagram in (d), and the SFF twisted diagram in (e). The above argument applies to the RPM and RMT in Fig.~\ref{app_fig:sff_proof2} (a-e), and also the HRM in Fig.~\ref{app_fig:sff_proof2} (f,g).
For \ref{app_cor:localtrs}, the additional lack of global TRS implies that only a single leading SFF diagram, a ladder diagram, specified in is allowed. 
\end{proof}

\section{2PAF and 2PAF fluctuation diagrams}\label{app:2paf_proof}
In this section, we provide the proofs of Lemma~\ref{lemma:leading2pcf} and Lemma~\ref{lemma:2paf_fluc_v2}, which we restate below.

\begin{lemma}
\label{app_lemma:leading2pcf}
\textbf{Leading local two-point autocorrelation function (2PAF) for generic quantum many-body quantum chaotic systems with or without TRS.} Consider the 2PAF \eqref{eq:2paf_def} with local diagram  labeled as in Fig.~\ref{fig:2pcf_diag} (a). 
For  (i) random phase model (RPM), (ii) Haar-random model (HRM), and (iii) the random matrix model (RMT) with Floquet TRS / with global TRS / without symmetries.
For (i) and (ii) at each physical site in the order of the local Hilbert space dimension $q$, and for (iii)  in the order of matrix dimension $N$, 
 the leading local 2PAF diagram is of order $O(1)$ given by the ladder contraction in $S_{2t}$   [Fig.~\ref{fig:2pcf_diag} (b)] 
\be \label{app_eq:2paf_diag_noop}
\sigma_{\mathrm{ladder}}^{(m=1)}(i) =  i \, ,   \quad \text{TRS / No symmetries, }\ee
for sites without local operator support with or without TRS. 
For sites with non-trivial operator support, the leading local 2PAF diagram  in the presence of TRS is of order $O(q^{-1})$ or $O(N^{-1})$, and is given by the twisted contractions  in $S_{2t}$   [Fig.~\ref{fig:2pcf_diag} (d)] 
\be \label{app_eq:2paf_diag}
\sigma_{\mathrm{twisted}}^{(m=2t)}(i) = 2t - i +1 \, , \quad \text{TRS.}
\ee
For sites with non-trivial operator support, for RPM, HRM, and RMT in the absence of TRS, the leading local 2PAF diagram are of order $O(q^{-2})$ or $O(N^{-2})$, and are given by ladder contractions in $S_{2t}$ [Fig.~\ref{fig:2pcf_diag} (h,i)] 
%
\be \label{app_eq:2paf_diag_notrs}
\sigma_{\mathrm{ladder}}^{(m)}(i) = (m + i -1) \Mod{2t}\, ,  \quad \text{No sym.,} 
\ee
for $m=3,5,\dots, 2t-1$.
\end{lemma}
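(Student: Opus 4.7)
The plan is to extend the diagrammatic and Weingarten-counting argument of Lemma~\ref{app_lemma:leadingsffcaseD} to the 2PAF, tracking the two operator insertions that distinguish it from the SFF. The key structural difference is that $C_{\mu\mu}(t) = \mathcal{N}^{-1}\Tr[O_\mu(t) O_\mu(0)]$ is a \emph{single} trace rather than the double trace defining $K(t)$, so the local contractions at site $i$ still lie in $S_{2t}$, but the enclosed loops now also contain factors of $o_{\mu_i}$, and the single-trace geometry constrains how local contractions at different sites glue together. I would treat the three model classes (RPM, HRM, RMT) uniformly by letting $\mathcal{N}=q$ (or $N$) denote the single-strand dimension and $\tilde{\mathcal{N}}=q,q^2,N$ the dimension of the Haar-random gate, exactly as in the proof of Lemma~\ref{app_lemma:leadingsffcaseD}, and handling the HRM bond constraint separately at the end.

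First, I would classify the sites into those with $\mu_i = 0$ (no operator support) and those with $\mu_i \neq 0$. At sites without operator support, the local operator factor is $\mathbb{1}$, and the evaluation reduces to counting loops of the two types (solid--dashed, double--dashed) as in Lemma~\ref{app_lemma:leadingsffcaseD}. Maximizing the number of length-one loops again selects the ladder and twisted families \eqref{app_eq:twisted_sff}--\eqref{app_eq:ladder_sff}, all of order $O(1)$. However, in the 2PAF the single-trace structure, together with compatibility with at least one site where an operator is inserted, forces the local contraction to propagate the strand identity consistently from time $0$ to time $t$; a direct check at any boundary between an operator-carrying and operator-free site shows that only $\sigma_{\mathrm{ladder}}^{(m=1)}$ is consistent, yielding \eqref{app_eq:2paf_diag_noop}.

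Next, at sites with $\mu_i \neq 0$, I would enumerate each of the $2t$ candidate leading contractions and evaluate the operator loop factor it produces. The twisted contraction $\sigma_{\mathrm{twisted}}^{(m=2t)}$ merges both operator insertions into a single closed loop whose trace is $\Tr[o_{\mu_i}^T o_{\mu_i}] = \pm q$ (positive for symmetric, negative for antisymmetric Gell-Mann matrices), giving $O(q^{-1})$ after accounting for the remaining $t-1$ unit loops and the Weingarten factor $\wg_\COE[\{1,\dots,1\};\tilde{\mathcal{N}}] = O(\tilde{\mathcal{N}}^{-t})$. Any other twisted $\sigma_{\mathrm{twisted}}^{(m)}$ with $m\neq 2t$ places the two operators in \emph{separate} loops, producing a factor $\Tr[o_{\mu_i}^T]\Tr[o_{\mu_i}] = 0$ by tracelessness; similarly $\sigma_{\mathrm{ladder}}^{(m=1)}$ dies by tracelessness; and the remaining ladder diagrams $\sigma_{\mathrm{ladder}}^{(m)}$ with odd $m \in \{3,5,\dots,2t-1\}$ survive but separate the two operators into loops of the form $\Tr[o_{\mu_i} u \cdots u]\Tr[o_{\mu_i} u^\dagger \cdots u^\dagger]$ whose loop length exceeds one, contributing $O(q^{-2})$. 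Comparing orders then gives \eqref{app_eq:2paf_diag} for the TRS case and \eqref{app_eq:2paf_diag_notrs} for the no-symmetry case, where the twisted family is simply absent.

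The main obstacle is the many-body consistency argument: showing that the leading diagrams of the full ensemble average factorize, at leading order in $\mathcal{N}$, into a product of the claimed local diagrams on each type of site, without interference from nontrivial cross-site bond constraints. For the RPM this is relatively clean because the two-site couplings only contribute Gaussian phase integrals that are diagonal in the local contraction labels, but for the HRM one must extend Lemma~\ref{app_lemma:leadingsffcaseD}'s argument about joint permutations $(\sigma_{\mathrm{r}},\sigma_{\mathrm{b}})\in S_{2t}\times S_{2t}$ on each bond, verifying that the operator-loop evaluation at sites with support forces both neighbouring bond permutations to match the same twisted or ladder type. I expect the remaining bookkeeping -- matching Weingarten orders against loop counts and collecting the operator trace factors -- to be routine once the site classification and local enumeration above are in place.
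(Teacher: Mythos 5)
Your proposal is in the right spirit but organizes the argument differently from the paper's, and one step of your reasoning needs tightening. The paper's proof in Appendix~\ref{app:2paf_proof} is \emph{constructive}: it proves Lemma~\ref{lemma:2paf_fluc_v2} first by building the leading permutation step by step (starting from $\sigma(1)$, showing $\sigma(1)=1$ is killed by tracelessness, so $\sigma(1)$ must connect to another operator insertion to produce $\Tr[O_\mu^2]$, and then iterating the ``shortest loop'' argument to pin down $\sigma$ uniquely), and then obtains Lemma~\ref{lemma:leading2pcf} as a corollary by noting that for the 2PAF only $\sigma(1)=2t$ is available. The no-TRS case is cited to Ref.~\cite{yoshimura2023operator} rather than re-derived. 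You instead \emph{enumerate} the $2t$ leading SFF permutations from Lemma~\ref{lemma:leading_sff_diag} and filter them by the operator trace factors they produce. That is a clean way to see \emph{why} only $\sigma_{\mathrm{twisted}}^{(m=2t)}$ survives at $O(q^{-1})$, and it makes the $\pm q$ sign structure transparent, but it implicitly assumes that the leading 2PAF contraction must come from the SFF leading family. That assumption is plausible but not automatic: tracelessness could in principle kill all $O(1)$ candidates and promote a permutation \emph{outside} the SFF family (e.g.\ one with a single 2-cycle) to leading status. The paper's constructive argument -- which never restricts to the SFF family but instead shows at each step that the Weingarten/loop order forces the next value of $\sigma$ -- closes this gap; your enumeration needs an extra line bounding the order of any $\sigma$ with a loop of length $\geq 2$ below $O(q^{-1})$. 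Your treatment of sites without operator support (``a direct check at the boundary shows only $\sigma_{\mathrm{ladder}}^{(m=1)}$ is consistent'') is also too compressed: the correct justification is the same loop count, namely that the single-trace geometry of the 2PAF already fixes the positions of the boundary contractions so that only the $m=1$ ladder achieves $t$ unit loops there, which is consistent with the paper's reasoning even though the paper leaves this case implicit.
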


\begin{lemma}\label{app_lemma:2paf_fluc_v2}
\textbf{Leading local two-point autocorrelation function (2PAF) fluctuations diagrams for generic quantum many-body quantum chaotic systems with or without TRS.} Consider the 2PAF fluctuations with local diagram labeled as in Fig.~\ref{fig:2pcf_fluc_diag} (a).
For  (i) random phase model (RPM), (ii) Haar-random model (HRM), and (iii) the random matrix model (RMT) with Floquet TRS / with global TRS / without symmetries.
For (i) and (ii) at each physical site in the order of the local Hilbert space dimension $q$, and for (iii)  in the order of matrix dimension $N$,  the leading local 2PAF fluctuations diagram is of order $O(1)$ given by the ladder contraction in $S_{4t}$   [Fig.~\ref{fig:2pcf_fluc_diag} (b)] 
\be
\sigma_{\mathrm{ladder}}^{(m=1)}(i) =  i \, ,   \quad \text{TRS / No symmetries, }\ee
for sites without local operator support with or without TRS. 
 In the presence of Floquet TRS or global TRS, for sites with non-trivial operator support, the leading local 2PAF fluctuations diagrams are of order $O(q^{-2})$ or $O(N^{-2})$, and are given by contractions in $S_{4t}$ [Fig.~\ref{fig:2pcf_fluc_diag} (d,e,f) respectively] 
\begin{IEEEeqnarray}{rll} 
\sigma_{(14|23)}(i) = & \,(4t +1 - i) \Mod{4t}  \, , 
\quad &    \nonumber
\\
\sigma_{(12|34)}(i) = & \,  (2t +1 - i) \Mod{4t} \, ,  
\quad &  \text{TRS.}   \label{app_eq:2paf_fluc_diag}
\\
\sigma_{(13|24)}(i) = &  \, (2t+ i) \Mod{4t}  \, , 
\quad &   \nonumber
 \end{IEEEeqnarray}
In the absence of TRS, for sites with non-trivial operator support, the leading local 2PAF fluctuations diagram is of order $O(q^{-2})$ or $O(N^{-2})$, and is given by contractions in $S_{4t}$ [Fig.~\ref{fig:2pcf_fluc_diag} (f)]
 \begin{IEEEeqnarray}{rll} 
\sigma_{(13|24)}(i) = &  \, (2t+ i) \Mod{4t}  \, 
,  \quad & \text{No sym.}  \label{app_eq:2paf_fluc_notrs_diag}
 \end{IEEEeqnarray}
\end{lemma}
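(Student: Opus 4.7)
\textit{Proof proposal.} The plan is to extend the loop-counting argument used in Lemma~\ref{lemma:leading_sff_diag} (Appendix~\ref{app:sff_diag_proof}) from two replicas of $U$ and $U^\dagger$ (SFF) to four replicas (2PAF fluctuations), and then to incorporate the tracelessness and Gell-Mann structure of the operator insertions. The fluctuation $\overline{C_{\mu\mu}^2(t)}$ requires evaluating four copies of $\Tr[O_{\mu}(t) O_{\mu}(0)]$, which after expansion yields $4t$ dots from $U$-matrices and $4t$ dots from $U^\dagger$-matrices on each site, with Wick contractions labeled by $\sigma \in S_{4t}$. As in the SFF case, the diagrammatic weight is a product of single-line loops (each contributing $\mathcal{N}$) and Weingarten factors $\wg[\{c_1,\dots,c_k\};\tilde{\mathcal{N}}]=O(\tilde{\mathcal{N}}^{k-2\sum c_i})$, so leading-order diagrams are those that maximize the number of length-$1$ loops subject to the constraints imposed by operator insertions.

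First I would handle sites \emph{without} operator support. Here the four-replica setup factorizes across replicas exactly as in the SFF analysis, and the identity/ladder contraction $\sigma_{\mathrm{ladder}}^{(m=1)}(i)=i$ saturates the $4t$ length-$1$ loops, giving order $O(1)$ after the normalization in \eqref{eq:2paf_def}. The argument is verbatim that of Lemma~\ref{lemma:leading_sff_diag}, with the only bookkeeping change being that four replicas rather than two share the same leading structure; for the HRM, the bond-constraint argument of the proof of Lemma~\ref{lemma:leading_sff_diag} carries over unchanged.

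Next I would treat sites \emph{with} non-trivial operator support in the presence of TRS. On such a site each replica carries two operator insertions (at $t$ and at $0$), contributing local factors $\Tr[\,\cdot\, O_{\mu_i} \,\cdot\, O_{\mu_i}]$ that constrain which $\sigma$'s can close all loops into length-$1$ loops. The combinatorial heart of the argument is that the four operator insertions (one per replica) must be grouped into two pairs in one of exactly three ways: $(14|23)$, $(12|34)$, or $(13|24)$. Each pairing is realized by a unique permutation in $S_{4t}$ that makes the contractions compatible with both the operator constraints and the loop-maximization condition; reading off the three diagrams in Fig.~\ref{fig:2pcf_fluc_diag}(d,e,f) gives precisely the three permutations in \eqref{app_eq:2paf_fluc_diag}. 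A direct loop count (analogous to the inductive argument that fixed $\sigma_{\mathrm{twisted}}^{(m)}(2)=(m-1)\,\mathrm{mod}\,2t$ in Lemma~\ref{lemma:leading_sff_diag}) shows these are of order $O(q^{-2})$ or $O(N^{-2})$: two powers of $\mathcal{N}$ are ``spent'' on the two local operator traces relative to the no-insertion case. I would then verify that any other $\sigma \in S_{4t}$ either produces a loop of length $\geq 2$ (hence is subleading by the Weingarten estimate) or else violates the tracelessness of $o_{\mu_i}$ (and thus vanishes identically).

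Finally, for the no-symmetry case I would observe that the pairings $(14|23)$ and $(12|34)$ necessarily contain time-reversed contractions (twisted sub-structure connecting the operator insertions across replicas in a direction-reversing way), which require transpositions that appear only when TRS-invariant COE or transpose-identified CUE gates are present. In the absence of TRS, only the time-parallel pairing $(13|24)$ remains at leading order, matching \eqref{app_eq:2paf_fluc_notrs_diag}. The main obstacle I anticipate is the careful enumeration showing that no other contraction in $S_{4t}$ matches the maximal loop count once the tracelessness of each $o_{\mu_i}$ has been imposed; this is largely a matter of checking which sub-permutations of $S_{4t}$ are compatible with the four operator-insertion constraints simultaneously, and mirrors the exclusion of $\sigma_{\mathrm{ladder}}^{(m=1)}$ in Eq.~\eqref{app_eq:2paf_diag_notrs}. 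Extending to HRM requires, as before, imposing the bond constraint on red and blue sub-permutations and verifying that the three TRS pairings (or single no-TRS pairing) survive. For the RMT case (iii), there is effectively one site and the $S_{4t}$ analysis applies directly with $\mathcal{N}\to N$.
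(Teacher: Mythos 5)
Your proposal follows essentially the same loop-counting and tracelessness-exclusion argument as the paper's Appendix~\ref{app:2paf_proof}: operator-free sites factorize into the ladder diagram exactly as for the SFF, and on operator-supporting sites you identify the three pairings $(14|23)$, $(12|34)$, $(13|24)$ as the only groupings of the four insertions into two non-vanishing pairs compatible with maximal short-loop count, with $(14|23)$ and $(12|34)$ built from twisted contractions that disappear without TRS. The only slip is a wording error: $\overline{C_{\mu\mu}^2(t)}$ contains \emph{two} (not four) copies of $\Tr[O_{\mu}(t)O_{\mu}(0)]$, though your downstream accounting of $4t$ dots and $\sigma\in S_{4t}$ is correct, so the error is cosmetic.
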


Note that the parametrization of the 2PAF diagram [Fig.~\ref{fig:2pcf_diag} (a)]  and 2PAF fluctuations diagram [Fig.~\ref{fig:2pcf_fluc_diag} (a)] need to be varied depending on model and the specific TRS symmetries, as in the case in Appendix~\ref{app:sff_diag_proof}.

\begin{proof}
For brevity, we will provide the proof in the setting of (i) Floquet TRS RPM and (iii) RMT for Floquet TRS {\gqmbcs}, and the proof can be extended to other symmetries like global TRS, and also to HRM as in \ref{app:sff_diag_proof}. We will first prove the leading diagrams for 2PAF fluctuation and then the leading diagram for 2PAF.

Consider the 2PAF fluctuation $\overline{C^2_{\mu\mu}(t)}$ of operator $O_\mu$, which can be diagrammatically represented as Fig.~\ref{fig:2pcf_fluc_diag} (a), where the indices of the matrix elements of local unitary gates in the circuit  $U$ are labelled from $1$ to $4t$, and similarly  for the complex conjugation of local unitary gates.
We use $\mathcal{N}$ to denote the dimension of a single tensor network worldline, i.e. $\mathcal{N}=q$ for model (i) and (ii), and $\mathcal{N}=N$ for model (iii). 
We use $\tilde{\mathcal{N}}$ to denote the dimension of the COE gates in each model, i.e. $\tilde{\mathcal{N}}=q, q^2, N$ for models (i), (ii), and  (iii) respectively. 
Under Eq.~\eqref{eq:coe_formula} and the diagrammatical approach, there are two types of loops in the evaluation of 2PAF: Loops composed of consecutive solid and dashed lines, and loops composed of consecutive double and dashed lines. Loops of the first type arise from the evaluation of the delta functions in Eq.~\eqref{eq:coe_formula}, and correspond to a sum over the Hilbert space of a single degrees of freedom, giving rise to a factor of $\mathcal{N}$ per loop. Loops of the second type in each diagram give rise to factors of $\mathcal{N}$, due  to the Weingarten $\wg[(c_1, c_2, \dots , c_k); \tilde{\mathcal{N}}] = O(\tilde{\mathcal{N}}^{k - 2\sum_{i=1}^k c_i})$. 
Together, these facts imply that the leading diagrams correspond to diagrams with the most number of loops.
Since the total lengths of all loop is fixed for a given diagrammatical representation of an observable, we hope to search for diagrams with the highest possible number of short loops are of high order in $\mathcal{N}$. 

Start from choosing $\sigma(1)$. To form short loop, we choose $\sigma(1) = 1$, but this diagram must vanish due to traceless properties of operator $O_\mu$. To form the shortest loop while fulfilling $\Tr[O_\mu^2]=O(\mathcal{N})$ (forming diagrams with higher power than 2 leads to longer loops and lower order in $\mathcal{N}$), $\sigma(1)$ must take the value $2t$, $2t+1$, or $4t$. These are the only three possibilities since there are only four operators in $\overline{C^2_{\mu\mu}(t)}$.
Suppose $\sigma(1) = 2t$. In order to form a short loop, choose $\sigma(2)=2t-1$. Iterate this procedure until we have $\sigma(j)= 2t+1-j$ for $j=1,2,\dots,2t$. Next, choose $\sigma(2t+1) = 4t$ so that we satisfy $\Tr[O_\mu^2]=O(\mathcal{N})$ for the remaining two operators (operators 3 and 4). Repeat the short loop argument and we arrive $\sigma_{(12|34)}$ in Eq.~\eqref{app_eq:2paf_fluc_diag}. $\sigma_{(12|34)}$ must be one of the leading 2PAF fluctuation diagrams, since the loops involving $O_\mu$-s are the shortest possible loops satisfying  $\Tr[O_\mu^2]=\mathcal{N}$, and the loops not involving $O_\mu$-s are also the shortest possible loops.  

Repeat the argument by choosing $\sigma(1)= 2t+1$ and we arrive $\sigma_{(13|24)}$. Repeat the argument for $\sigma(1)= 4t$ and we arrive $\sigma_{(14|23)}$. These are the only three leading diagrams because $\sigma(1) = 2t, 2t+1, 4t$ are the only three possible ways to form shortest loops involving $O_\mu$-s while satisfying traceless operator condition. This concludes the proof for the leading local 2PAF fluctuation diagrams.

The 2PAF $\overline{C_{\mu\mu}(t)}$ of operator $O_\mu$ is  diagrammatically represented as Fig.~\ref{fig:2pcf_diag} (a), where the indices of the matrix elements of local unitary gates in the circuit  $U$ are labelled from $1$ to $2t$, and similarly  for the complex conjugation of local unitary gates. Apply the same arguments used for 2PAF fluctuation, the leading 2PAF diagram must have $\sigma(1) = 2t$ (since the values $2t+1$ and $4t$ are not available), which leads to $\sigma_{\mathrm{twisted}}^{(m=2t)}$ in Eq.~\eqref{app_eq:2paf_diag}. 

Lastly, for the case without TRS symmetries,
the proof for the leading diagram for 2PAF fluctuations Eq.~\eqref{app_eq:2paf_fluc_notrs_diag} follow immediately from Eq.~\eqref{app_eq:2paf_fluc_diag} since the $\sigma_{(13|24)}$ consists of ladder contractions, and 
$\sigma_{(14|23)}$ and $\sigma_{(12|34)}$ consists of twisted contractions. The result for leading diagram for 2PAF  Eq.~\eqref{app_eq:2paf_diag_notrs} is stated in \cite{yoshimura2023operator}, and the proof can be constructed by extending the tools provided above.

\end{proof}

\section{Additional data and numerical methods}\label{app:numerics}

In this section, we provide additional numerics and describe our numerical methods to obtain the data. 
Generally, we gather data for 1D HRM with $q=2$ and $L= 4,6,8,10,12$, and for 1D RPM for $q=3$ and $L= 4,5,6,7,8$. The number of realizations is of order $10^5$.

 In Fig.~\ref{fig:sff_gtrs2}, we obtain SFF and the scaling collapse of SFF for global TRS RPM and HRM with local TRS (in addition to the scaling collapse of global TRS models without local TRS), which again shows excellent agreement between infinite-$q$ Ising scaling function and finite-$q$ data. 

 In Fig.~\ref{fig:sff_ftrs_additional}, we provide numerical data for Floquet SFF RPM (in addition to Floquet SFF HRM in the main text), which shows that SFF exhibits an initial bump which increases in system size. The bump disappears at the Thouless time which increases in system size.

In Fig.~\ref{app_fig:psff_3pm}, we provide numerical data of PSFF for Floquet 3PM which display behaviour qualitatively similar to PSFF of Floquet HRM provided in the main text. In  Fig.~\ref{app_fig:psff_3pm} (a), we plot PSFF against time $t$, which shows a characteristics bump in PSFF qualitatively described by the large-$q$ solution of the Floquet TRS RPM in the main text. In Fig.~\ref{app_fig:psff_3pm} (b), we plot PSFF against $L_A$ and find that PSFF grows at least exponentially with $L_A$ with a higher growth rate in the presence of TRS compared to the case without TRS. 

 Now we describe how we obtain $\Lth(t)$ and the scaling collapse  in Fig.~\ref{fig:sff_gtrs_num} and \ref{fig:sff_ftrs_num}, following the approach in \cite{chan2021trans}. For concreteness, we will use the global TRS case as an example, and the methodology is also applied to the Floquet TRS case. For fixed $t$, we compute the SFF by contracting a tensor network with increaseing system size $L$ in the space direction, allowing us to access small $t$ but large $L$ data.  
 We then plot the SFF against system size in Fig.~\ref{fig:sff_vs_L}. For each SFF data set at fixed $t$, and  we find a length $\tilde{L}(t)$ such that the finite-$q$ SFF scaling function coincides with the infinite-$q$ scaling function  $K(t,\tilde{L}) = \kappa_{\text{g-TRS}}^{\text{RPM}} (x_0)$
 at a choice of $x_0$, which we choose to be $x_0=3$. 
Since $x_0=\tilde{L}(t)/L_{\mathrm{Th}}(t)$ by construction, we obtain the corresponding Thouless length $L_{\mathrm{Th}}(t)=\tilde{L}(t)/x_0$, which are plotted in Fig.~\ref{fig:sff_lth_vs_t}. 
Lastly, we rescale the horizontal axis of Fig.~\ref{fig:sff_vs_L} to be $x=L/L_{\mathrm{Th}}(t)$, so the finite-$q$ numerics and the infinite-$q$ scaling function can be compared directly as in Figs.~\ref{fig:sff_gtrs_num} and \ref{fig:sff_ftrs_num}.

\begin{figure*}[t]
\begin{minipage}[t]{0.47\textwidth}
\includegraphics[width=\linewidth,keepaspectratio=true]{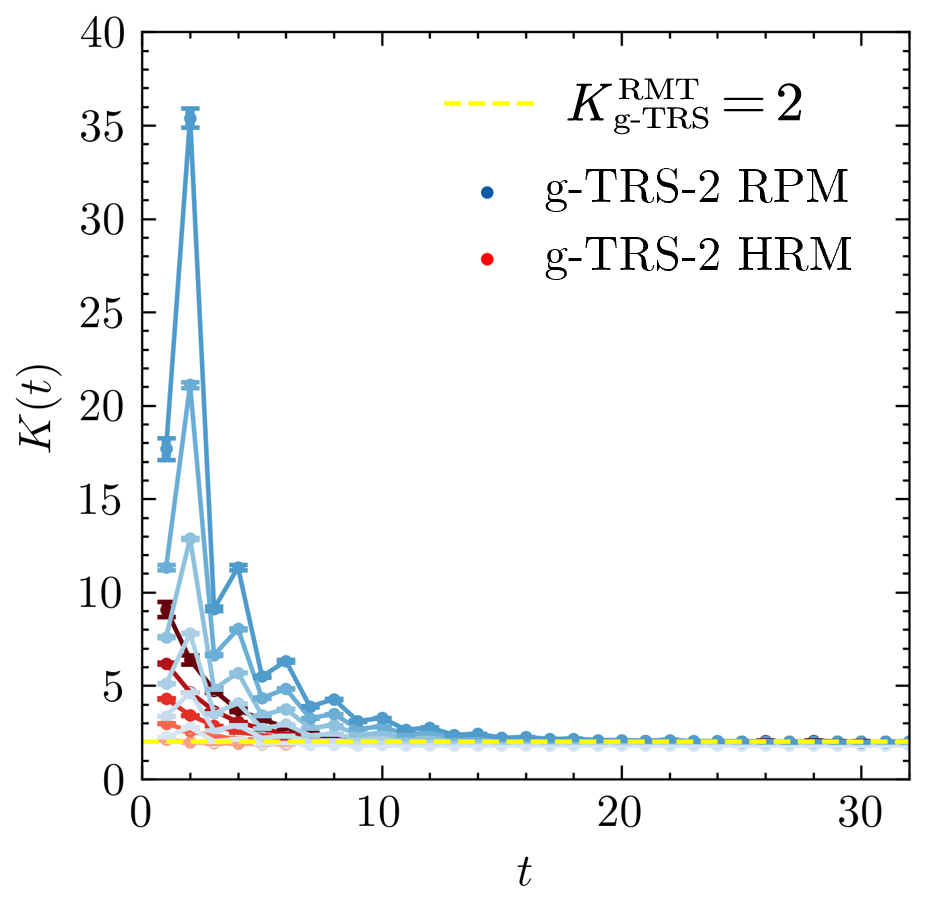}
\end{minipage}
\hspace*{\fill} 
\begin{minipage}[t]{0.48\textwidth}
\includegraphics[width=\linewidth,keepaspectratio=true]{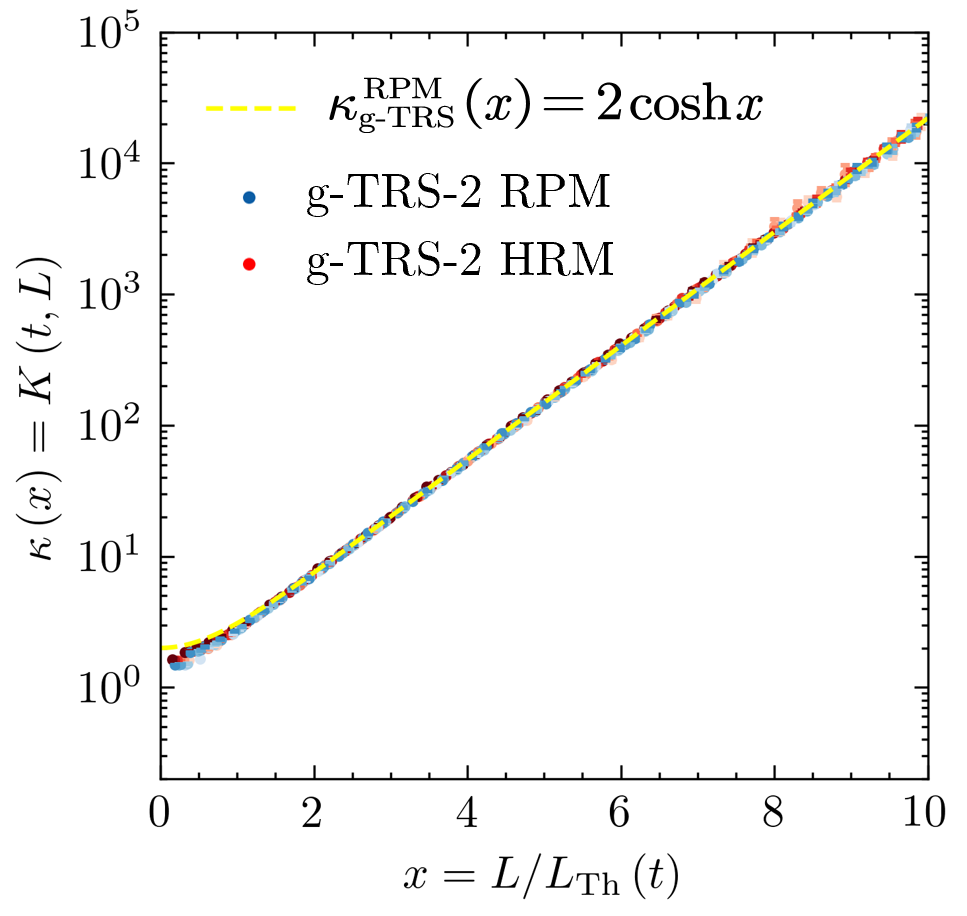}
\end{minipage}
    \caption{\textbf{SFF numerics and scaling collapse for global TRS with local TRS models.} $K(t)$ versus $t$ (left) and $\kappa_{\mathrm{g-TRS}}(x) = K_{\mathrm{g-TRS}}(t,L)$ versus $L/\Lth(t)$ (right)  for the 1D global TRS HRM with local TRS (red) for $q=2$ and $L= 4,6,8,10,12$ in increasingly dark shades, and for the 1D global TRS RPM with local TRS (blue) for $q=3$ and $L= 4,5,6,7,8$ in increasingly dark shades. Both models have pbc.
    For both models, the SFF converges to the RMT behaviour after the Thouless times, which increase with system size, and the scaling collapse shows excellent agreement with the Ising scaling function.
    }
    \label{fig:sff_gtrs2}
\end{figure*}

\begin{figure}[h]
    \centering
    \includegraphics[width=0.5 \textwidth]{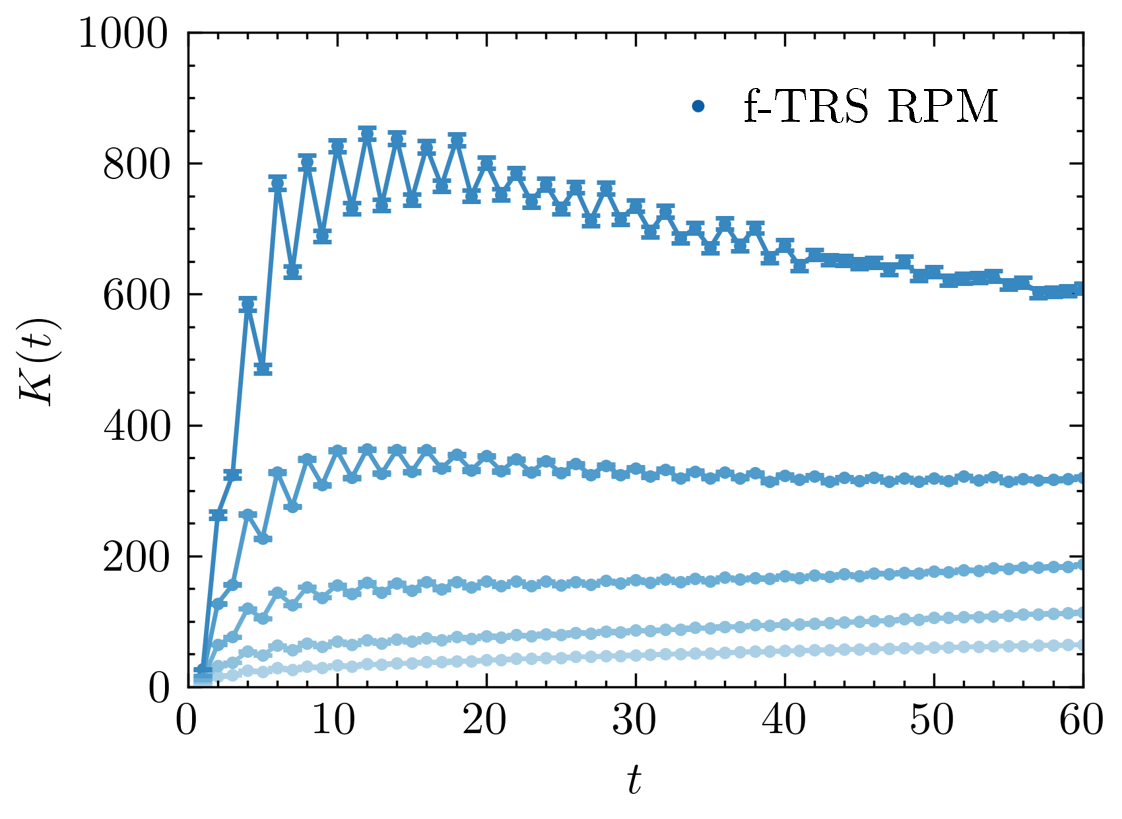}
       \caption{\textbf{Floquet SFF numerics for RPM.} SFF $K(t)$ versus $t$ for the 1D Floquet TRS RPM for $L= 4,5,6,7,8$ in increasingly dark shades. SFF exhibits an initial bump which increases in system size. The bump disappears as the SFF approaches the RMT behaviour at the Thouless times which increases in system size.}
    \label{fig:sff_ftrs_additional}
\end{figure}

\begin{figure}[h]
    \centering
    \includegraphics[width=0.8 \textwidth]{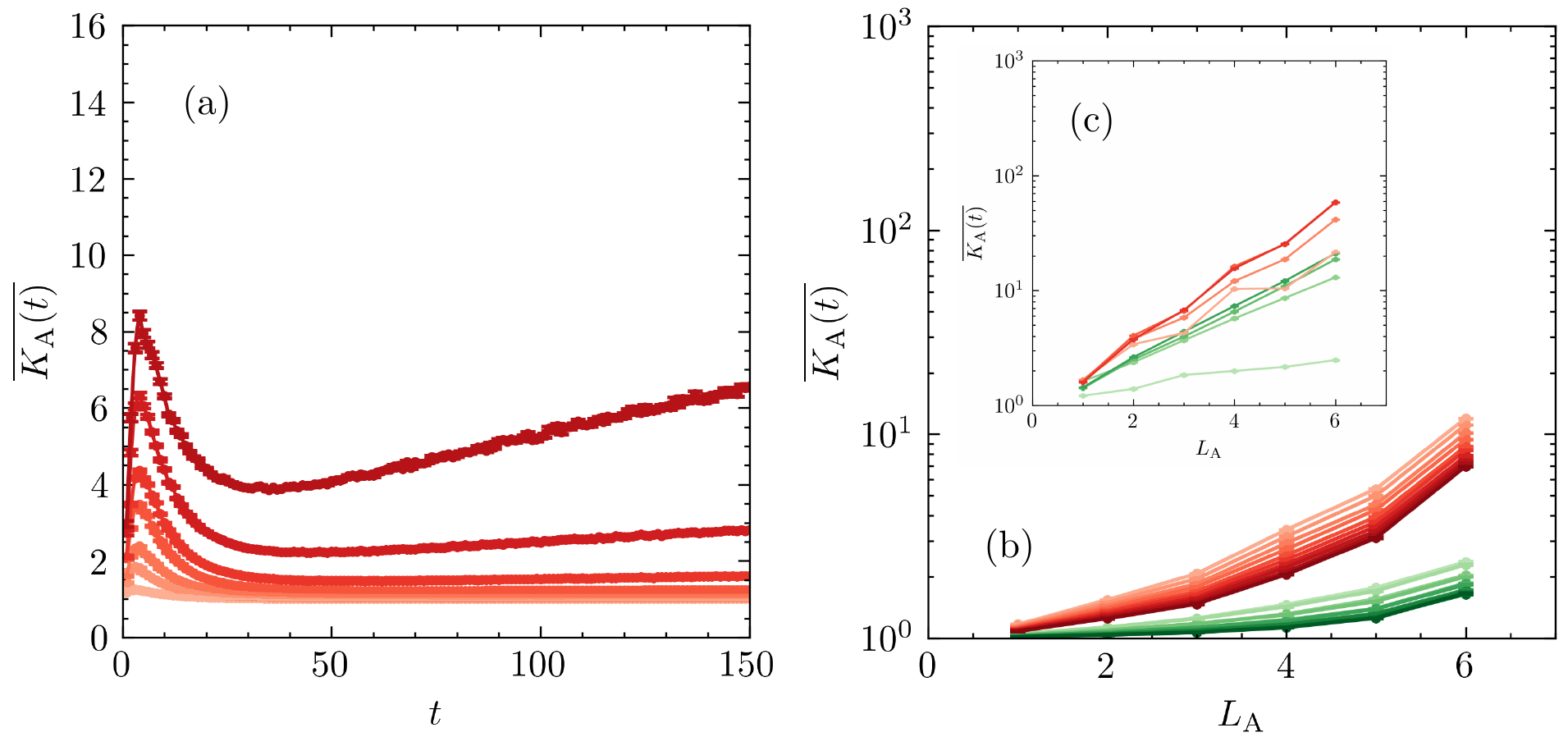}
       \caption{\textbf{PSFF numerics for 3PM.} (a) Numerical simulations of PSFF $\overline{K_A(t)}$ against $t$ for the 1D Floquet TRS 3PM for $q=2$ and $L=10$ with $L_A=1,2,\dots, 7$ in increasingly dark shades. Numerical simulations of PSFF $\overline{K_A(t)}$ against $L_A$ for the 1D Floquet 3PM with TRS (green) and without TRS (red) for $q=2$ and $L=10$ with (b) $t\in [15,24]$ (main panel), and (c) $t\in [1,4]$ (inset) in increasingly dark shades. 
          The 3PM is chosen to have obc with the region $A$ connected to the boundary. 
       }
    \label{app_fig:psff_3pm}
\end{figure}

\begin{figure}[h]
    \centering
    \includegraphics[width=0.5 \textwidth]{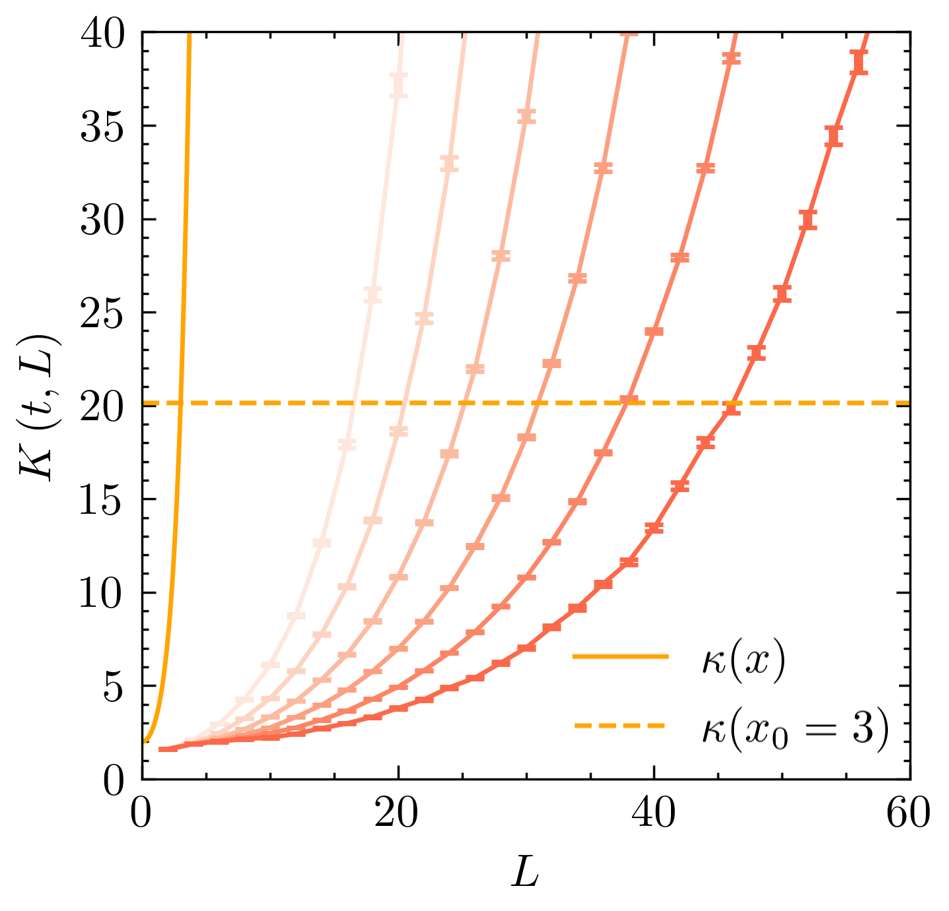}
    \caption{\textbf{Methodology for identification of $\Lth(t)$.}  $K(t, L)$ versus $L$  for global TRS HRM without local TRS and with pbc for different $t=1,2,3,4,5,6$ using space direction simulations. 
    The dashed line is $\kappa(x_0=3)$, and for reference, the scaling function $\kappa(x)$ is plotted in yellow with $\Lth$ taken to be one. The intersections $\tilde{L}(t)$ are used to compute the Thouless length $\Lth(t) = \tilde{L}(t)/x_0$. }
    \label{fig:sff_vs_L}
\end{figure}

\begin{figure*}[ht]
\begin{minipage}[t]{0.3\textwidth}
\includegraphics[width=\linewidth,keepaspectratio=true]{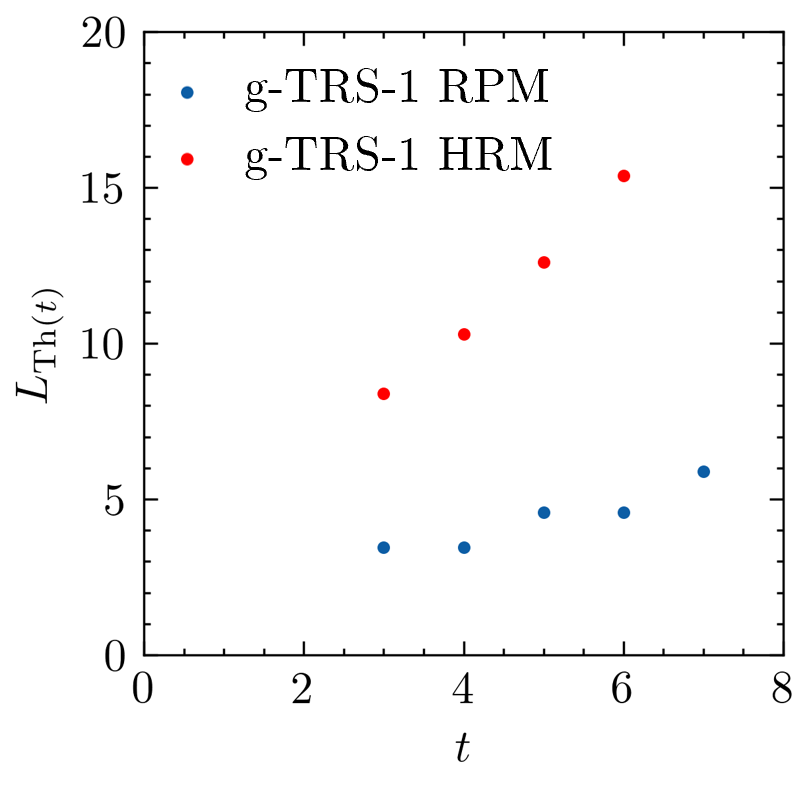}
\end{minipage}
\hspace*{\fill} 
\begin{minipage}[t]{0.3\textwidth}
\includegraphics[width=\linewidth,keepaspectratio=true]{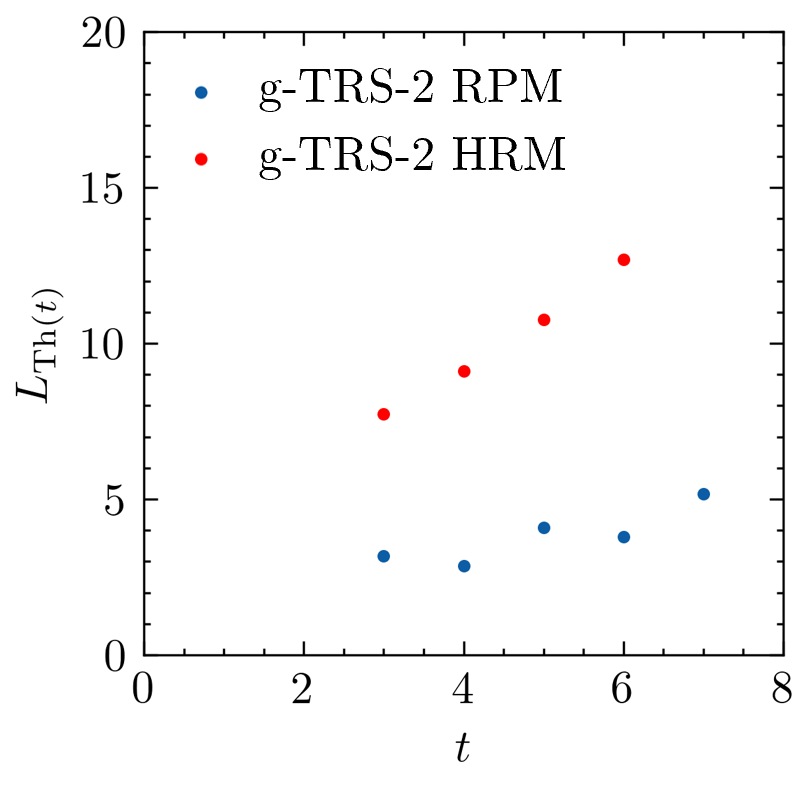}
\end{minipage}
\hspace*{\fill} 
\begin{minipage}[t]{0.3\textwidth}
\includegraphics[width=\linewidth,keepaspectratio=true]{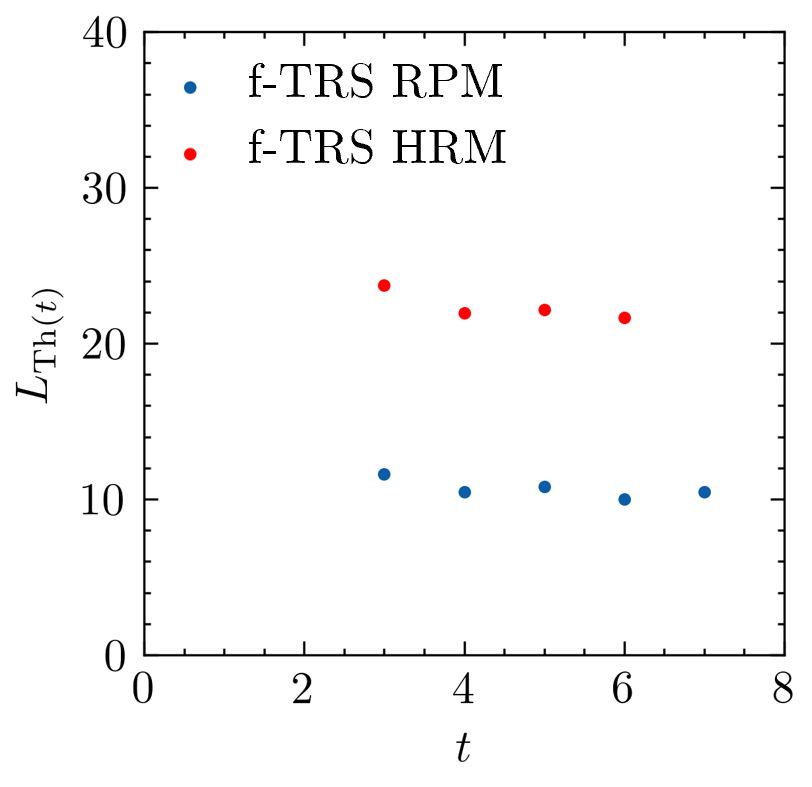}
\end{minipage}
    \caption{\textbf{Thouless length $\Lth(t)$ versus time $t$} for 1D RPM (red) and HRM (blue) in the cases of  global TRS without local TRS (left),  global TRS with local TRS (middle), and Floquet TRS (right). The data is obtained by simulating the quantum circuit in the spatial direction.  \label{fig:sff_lth_vs_t}}
\end{figure*}

\end{document}